\newcommand\undermat[2]{%
	\makebox[0pt][l]{$\smash{\underbrace{\phantom{%
					\begin{matrix}#2\end{matrix}}}_{\text{$#1$}}}$}#2}
\def\BibTeX{{\rm B\kern-.05em{\sc i\kern-.025em b}\kern-.08em
		T\kern-.1667em\lower.7ex\hbox{E}\kern-.125emX}}
\newtheorem{theorem}{Theorem}
\newtheorem{corollary}{Corollary}[theorem]
\newtheorem{lemma}[theorem]{Lemma}
\newtheorem{defn}{Definition}
\newtheorem{exmp}{Example}
\newtheorem{proposition}{Proposition}
\newtheorem*{remark}{Remark}
\date{}
\begin{document}
	\title{Additive one-rank hull codes over finite fields}
\author{Astha Agrawal\orcidlink{0000-0002-4583-5613} and R. K. Sharma\orcidlink{0000-0001-5666-4103}}
\markboth{}
{Additive One-rank Hull Codes over Finite Fields}
\maketitle
\vspace{-12mm}
	\begin{center}
	\noindent {\small Department of Mathematics, Indian Institute of Technology Delhi, New Delhi 110016, India.}
	\end{center}
	\footnotetext[1]{{\em E-mail addresses:} \url{asthaagrawaliitd@gmail.com} (A. Agrawal), \url{rksharmaiitd@gmail.com} (R. K. Sharma).}
	\begin{abstract}
	This article explores additive codes with one-rank hull, offering key insights and constructions.   The article  introduces a novel approach to finding one-rank hull codes over finite fields by establishing a connection between self-orthogonal elements and solutions of quadratic forms. It also provides a precise count  of self-orthogonal elements for any duality over the finite field $\mathbb{F}_q$, particularly  odd primes. Additionally, construction methods for small rank hull codes are introduced. The highest possible  minimum distance among additive one-rank hull codes is denoted by $d_1[n,k]_{p^e,M}$.   The value of $d_1[n,k]_{p^e,M}$ for $k=1,2$ and $n\geq 2$ with respect to any duality $M$ over any finite field $\mathbb{F}_{p^e}$ is determined. Furthermore, the new quaternary one-rank hull codes are identified over non-symmetric dualities with better parameters than symmetric ones.  
		\end{abstract}
	\textit{	MSC 2020: Primary 94B60; Secondary 94B99.}\\
		\textbf{Keywords:} Additive Code, Duality,  ACD Code, One-rank hull code, Quadratic form.
	\section{Introduction}
	In 1990, Assmus and Key \cite{ASSMUSe_KEY} introduced the concept of the Euclidean hull to study finite projective planes. This concept helps to classify these planes by looking at codes generated from the characteristic functions of their lines. The Euclidean hull of a linear code, denoted as $Hull_E(C)$, is the intersection of the code with its dual, i.e., $Hull_E(C)=C\cap C^{\perp}$, where $C^{\perp}$ represents the Euclidean dual of the code 	$C$.  The hull of a code is essential for understanding the automorphism group of a linear code. Also, it is useful in evaluating the complexity of certain algorithms for checking the permutation equivalence of two linear codes. This is particularly true when the hull's dimension is small. These findings have been widely acknowledged in the literature \cite{ compute_auto, Permutation_group, permut_equiv, Auto_linear}. If $Hull_E(C)=\{\textbf{0}\}$, then $C$ is known as linear complementary dual code (LCD)  and which has the smallest hull dimension. Following LCD codes in terms of hull dimension are one-dimensional codes with the second smallest hull dimension among linear codes. Numerous research papers have focused on linear codes characterized by small hulls, including LCD  and one-dimensional hull codes, see \cite{nchainlcd,  construct_one, optimal, LCDMASSEY, One_gauss_sum, Galois}. In this work, we are looking at codes over additive abelian groups. We make it more flexible by considering many dualities.  As a result, we generalize the permitted inner products as well as the ambient space. 
	
	Additive codes are the codes defined over additive groups.  Additive codes over finite fields are defined by using the additive group of a finite field.  All linear codes over finite fields are additive codes, but the converse is not true. Therefore,  the class of additive codes is bigger than the class of linear codes.	Additive codes over finite fields have attracted significant attention because of their relevance to quantum error correction and quantum computing \cite{ quantum_nonbinary, quantum_additiveskew, quantum_GF(4),  quantum_constacyclic}. Additive codes allow the range of dualities defined by the character of a finite abelian group \cite{Orthogonality}. While the usual dualities (Euclidean, Hermitian, Trace) serve as one example of such dualities, it is important to note that  various other potential forms of dualities  can be used in this context. In essence, additive codes can employ different types of dualities, not limited to the standard ones, to achieve their specific coding objectives. Delsarte  \cite{Delsarte}  introduced additive codes in 1973 using association schemes. For symmetric and non-symmetric dualities, an additive code satisfies the cardinality condition. This means that the product of the number of elements in the additive code and its dual, with respect to any duality, must be equal to the total number of elements in the space \cite{dualities_abeliangroup}. The idea of finding the optimal parameters for additive quaternary codes was initially proposed by Blokhuis and Brouwer \cite{small_additive}. Later, in \cite{short_additive}, Bierbrauer et al. successfully identified the optimal parameters for additive quaternary codes for cases where $n$ is less than or equal to 13. 
	
	Recently, Dougherty et al. \cite{ACDgroupcharacter,onerank} generalized the idea of LCD and one dimensional hull codes to additive complementary dual codes (ACD) and one-rank hull codes, respectively. They looked at quaternary ACD and one-rank hull codes in their study. They determined optimal parameters for  these codes over symmetric dualities for length $n\leq 10$. Subsequently, Agrawal and Sharma \cite{ACDovernon-symmetric}   introduce a specific subset of non-symmetric dualities referred to as skew-symmetric dualities over finite fields, and study ACD codes on these dualities.  Since small hull codes have their importance in evaluating the complexity of certain algorithms. Inspired by this fact, we study one rank hull codes with respect to arbitrary dualities over additive group of finite fields. 

 This article includes several key sections. In section \ref{Preliminaries}, we recall the basic definitions and notations required for this article. In Section \ref{Characterization}, we give  the characterization of $Hull(C)$ using the generator matrix of $C$. Moving on to Section \ref{Existence}, we show a relation between self-orthogonal elements and solutions of quadratic form $\mathcal{Q}=0$ associated with the duality matrix.  We establish the 
  existence of a one-rank hull code in relation to any duality over finite fields. We provide a precise count of self-orthogonal elements for any duality over the finite field $\mathbb{F}_q$, particularly for odd primes. Section \ref{Constructions} introduces some construction methods for ACD  and one-rank hull codes. In Section \ref{Bounds}, we study the highest minimum distance $d_1[n,k]_{p^e,M}$ among all additive $[n, p^k]$ codes with one-rank hull. We determine the value of $d_1[n,k]_{p^e,M}$  for $k=1,2$ and $n\geq 2$ with respect to any duality $M$ over any finite field $\mathbb{F}_{p^e}$.   We find some better parameters for quaternary codes over non-symmetric dualities than symmetric ones. Finally, the article concludes by summarizing the findings and outlining potential future directions for research.
	\section{Preliminaries}\label{Preliminaries}
We  briefly review fundamental definitions and notations necessary for the upcoming sections of this article.
\subsection{The Finite Field and Codes over Finite Fields}
	Let $\mathbb{F}_q$ be a finite field of order $q=p^e$, where $p$ is a prime and $e$ is a positive integer. It is worth noting that additive group $(\mathbb{F}_{p^e},+)$ is finitely generated abelian group and it can be expressed as $\mathbb{F}_{p^e}=\left\langle x_1,x_2,\dots,x_e \right\rangle $, where $x_1, x_2, \dots , x_e$ are generators of order $p$. Let $\mathbb{F}^*_q=\mathbb{F}_q\setminus\{0\}$ represents the multiplicative group of finite field $\mathbb{F}_q$. If we have a positive integer $n$, then $\mathbb{F}^n_q$ refers to an $n$ dimensional vector space over $\mathbb{F}_q$ and $(\mathbb{F}^n_q,+)$ is an abelian group. The Hamming weight of a vector $\textbf{x}$ is determined by counting the number of its non-zero components, denoted as $wt(\textbf{x})$.  The Hamming weight of the difference between two vectors $\textbf{x}$ and $\textbf{y}$ is used to determine the distance between two vectors as $d(\textbf{x}, \textbf{y}) = wt(\textbf{x} - \textbf{y})$. A code of length $n$ over $\mathbb{F}_q$ is a non-empty subset of $\mathbb{F}_q^n$. A vector of a code is called a codeword. The minimum distance of a code $C$ is determined by finding the smallest possible value for the distance between any two distinct codewords within $C$, represented as $d=\min\{d(\textbf{x}, \textbf{y}) | \textbf{x}, \textbf{y} \in C\  \text{and} \ \textbf{x} \neq \textbf{y}\}$.
	\begin{defn}
		Let G be a finite abelian group.  An additive code $C$ over $G$ of length $n$  is an additive subgroup of $G^n$.
	\end{defn}
	\begin{defn}
		An $[n,k,d]$ linear code over $\mathbb{F}_q$ is defined as a $k$-dimensional subspace of the vector space $\mathbb{F}^n_q$  with distance $d$. 
	\end{defn}
	It is important to note that additive codes over $\mathbb{F}_q$ are not necessarily subspaces; instead, they are subgroups of the additive group $(\mathbb{F}^n_{p^e},+)$. When we say a code is additive over a field $\mathbb{F}_{p^e}$, it implies that the code is considered linear over $\mathbb{F}_p$. For example, quaternary additive codes over $\mathbb{F}_4 =\{0,1,\omega,1+\omega\}$ are understood to be $\mathbb{F}_2$-linear code. Any linear code over $\mathbb{F}_q$ is additive code, but the converse is not true. For example, the code $\{(0,0), (1,0),(\omega,0),(1+\omega,0)\}$ is a linear code as well as additive code over $\mathbb{F}_4$, and the code $\{(0,0),(1,0)\}$ is an additive code  but not a linear code over $\mathbb{F}_4$. An additive code $C$ over $\mathbb{F}_{p^e}$ of length $n$ is a $\mathbb{F}_p$-linear code with size $p^k$ for some $0\leq k\leq ne$ and it is denoted as $[n,p^k]$ code. If minimum distance of $C$ is $d$, then $C$ is called an $[n,p^k,d]$ additive code.
	\begin{defn}
		A generator matrix $\mathcal{G}$ of an $[n,p^k]$ additive code is a $k\times n$ matrix over $\mathbb{F}_{p^e}$ such that every codeword of $C$ is $\mathbb{F}_p$-linear combination of the rows of $\mathcal{G}$, i.e., $C=\{\textbf{u} \mathcal{G}| \textbf{u}\in \mathbb{F}^k_p\}$. Note that $p$-rank of the matrix $\mathcal{G}$ is $k$, where $p$-rank means the number of   linearly independent rows over $\mathbb{F}_p$.
		\begin{defn}
		 The $p$-rank of the additive code $C$, denoted by $rank_p(C)$, is equal to $p$-rank of its generator matrix $\mathcal{G}$. In this context,  $p$ is typically understood; it is often omitted from the notation for simplicity.
		\end{defn}
	\end{defn}
	\begin{exmp}
		Let $\mathbb{F}_{4}=\{0,1,\omega,\omega+1\}$. Suppose $\mathcal{G}=\begin{pmatrix}
			1&\omega&1&1+\omega&\omega\\
			\omega&1+\omega&\omega&1&1+\omega
		\end{pmatrix}$ is a generator matrix of an additive code $C$ over $\mathbb{F}_4$. Then $C=\{(0,0,0,0,0),(1,\omega,1,1+\omega,\omega), (	\omega,1+\omega,\omega,1,1+\omega),(1+\omega,1,1+\omega,\omega,1)\}$ is a $[5,2^2,5]$ additive code over $\mathbb{F}_4$. We observe that 2-rank of the matrix  $\mathcal{G}$ is $2$ but 4-rank of $\mathcal{G}$ is 1. 
	\end{exmp}
	\subsection{Duality}
	Let's now discuss the concept of duality as applied to additive codes. It is important to note that there exist various  potential forms of duality that can be used in this context. These dualities are defined by the character of a finite abelian group.
	\begin{defn}
		A character of a group $G$ is a homomorphism from $G$ to $\mathbb{C}^*$, the multiplicative group of complex numbers.
	\end{defn}
	Although some sources describe the character as a homomorphism from $G$ to $\mathbb{Q}/ \mathbb{Z}$, we will adopt the complex variant, as was previously employed in \cite{ACDovernon-symmetric}, \cite{Algebraiccoding}, \cite{ACDgroupcharacter},  and \cite{Orthogonality}. Let $\widehat{G}$ be the collection of all characters of $G$, denoted as $\widehat{G} = \{\chi| \chi \ \text{is a character of} \ G\} $. We observe that $\widehat{G}$ forms a group, and this group is isomorphic to $G$. However, it is worth noting that the isomorphism between $\widehat{G}$ and $G$ is not uniquely determined. In fact, the specific isomorphism, we select between $\widehat{G}$ and $G $ is what allows us to establish the concept of duality. 
	
	By establishing an isomorphism between $G$ and its character group $\widehat{G}$, we proceed to construct a character table or duality $M$ as follows. Let $x_1,x_2,\dots,x_e$ be the elements of the group $G$, and $\phi : G \to \widehat{G}$ denotes the isomorphism. We define the image of $x_i$ under this isomorphism as $\chi_{x_i}$, which is essentially $\phi(x_i)$. The character table 
	$M$ is structured such that its rows are labeled by $\chi_{x_i}$, while its columns are labeled by elements of the group. Indeed, the entry in the $ij$-th position of the character table $M$ is given by $\chi_{x_i}(x_j)$.  By utilizing these dualities, we can establish the concept of orthogonality in additive codes.
	\begin{defn}
		The orthogonal code of an additive code $C$ over $G$ with respect to duality $M$ is defined as $C^M=\{(u_1,u_2,\dots,u_n)\in G^n|\prod_{i=1}^{n}\chi_{u_i}(v_i)=1 \ \forall (v_1,v_2,\dots,v_n)\in C\}$, where $ \chi_{u_i}$ represents the character associated to the element $u_i$ under the duality $M$.
	\end{defn}
	It is  correct to observe that $C^M$ is always an additive code, regardless of $C$. Additive codes consistently satisfy the cardinality condition for all types of dualities, i.e.,  $|C||C^M|=|G^n|$ \cite{Algebraiccoding}. It is  crucial to recognize that the choice of duality within a group significantly impacts the definition of the orthogonal code.  In fact, a code can be identical to its dual in one duality but not in another.
	If $M$ is a duality then $M^T$ (transpose of the character table) is also a duality, indicating the existence of an isomorphism $\phi'$ from $G$ to $\widehat{G}$ that corresponds to $M^T$.
	\begin{defn}
		Let $G$ be a finite abelian group and $M$  be a duality over $G$. A duality $M$ is said to be a symmetric duality if $\chi_x(y)=\chi_y(x)$, for all $x,y\in G$. This implies $M=M^T$. A duality $M$ is said to be a skew-symmetric duality if $\chi_x(x)=1$, for all $x\in G$. For a skew-symmetric duality we have $\chi_x(y)=(\chi_y(x))^{-1}$, for all $x,y\in G$. For further details on skew-symmetric dualities, see \cite{ACDovernon-symmetric}.
	\end{defn}
	If $C$ is an additive code  over $G$, then it is worth noting that for both symmetric and skew-symmetric dualities, we have $(C^M)^M=C$ (see \cite[Theorem 7]{ACDovernon-symmetric}). Some dualities are neither symmetric nor skew-symmetric. A character in the context of $G=(\mathbb{F}_{p^e},+)$ is a mapping $\chi: \mathbb{F}_{p^e}\to P\subset \mathbb{C}^*$, where $\chi(x)$ yields a value $\xi$ from the set of all $p$-th root of unity, denoted as $P$. This set $P$ forms a cyclic group of order $p$.
	
	Consider vectors, $\textbf{a}=(a_1,a_2,a_3,\dots,a_n),\ \textbf{b}=(b_1,b_2,b_3,\dots,b_n)\in G^n$. Then $\chi_{\textbf{a}}(\textbf{b})$ is defined  as  the product of individual character values, which can be expressed as $\chi_{\textbf{a}}(\textbf{b})=\prod_{i=1}^{n}\chi_{a_i}(b_i)$.  We say  $\textbf{a}$ is orthogonal to  $\textbf{b}$ with respect to duality $M$ if and only if $\chi_{\textbf{a}}(\textbf{b})=1$. A vector $\textbf{a}$ is said to be self-orthogonal if $\chi_{\textbf{a}}(\textbf{a})=1$.
	\begin{exmp}
		Over $\mathbb{F}_{3^2}$, there are 48 dualities, of which 18 are symmetric and two are skew-symmetric; the rest are neither symmetric nor skew-symmetric. Let's see one example of each. Let $(\mathbb{F}_{3^2},+)=\left\langle 1,\nu \right\rangle=\{0,1,2,\nu,2\nu,1+\nu,2+\nu,1+2\nu,2+2\nu\}$.
		\begin{enumerate}
			\item  $M_1$ is defined as, $\chi_1(1)=\xi$, $\chi_1(\nu)=1$, $\chi_{\nu}(1)=1$ and $\chi_{\nu}(\nu)=\xi^2$.
			\item  $M_2$ is defined as, $\chi_1(1)=1$, $\chi_1(\nu)=\xi$, $\chi_{\nu}(1)=\xi^2$ and $\chi_{\nu}(\nu)=1$.
			\item $M_3$ is defined as, $\chi_1(1)=1$, $\chi_1(\nu)=\xi$, $\chi_{\nu}(1)=\xi^2$ and $\chi_{\nu}(\nu)=\xi$.
		\end{enumerate}
		It is clear that $M_1$ is symmetric, $M_2$ is skew-symmetric, and $M_3$ is neither symmetric nor skew-symmetric duality. We will refer to these dualities in subsequent sections to understand some examples. 
	\end{exmp}
	\begin{defn}
		For an additive code $C$, if $C\subseteq C^M$, then we say $C$ is a self-orthogonal additive code with respect to duality $M$. And if $C=C^M$, then we say $C$ is  a self-dual code with respect to duality $M$. 
	\end{defn}
	It is  important to emphasize that an additive code $C$  can exhibit self-orthogonality or self-duality with respect to a particular duality $M$, but this property doesn't necessarily hold for  other duality $M'$. Each duality structure may lead to different notions of self-orthogonality and self-duality.
	\begin{defn}
		The hull of an additive code $C$ over a finite group $G$, considering any duality $M$, is the intersection of the code $C$ with its dual under M, i.e., $Hull_M(C)=C\cap C^M$. The duality $M$ is clear from the context; hence, omit it from the notation.
\end{defn}	
When $C$ is an additive code, $Hull(C)$ is also an additive code. An additive code $C$ is said to be an additive complementary dual (ACD) code if $Hull(C)=\{\textbf{0}\}$.  An additive  code $C$ is said to be a one-rank hull code if $rank(Hull(C))=1$.


 \section{Characterization of one-rank hull codes}\label{Characterization}
 Let $\mathcal{G}$ be a $k\times n$  matrix then the matrix product $\mathcal{G}\odot_M \mathcal{G}^T$ is defined as, $$(\mathcal{G}\odot_M \mathcal{G}^T)_{ij}=\chi_{\mathcal{G}_i}(\mathcal{G}_j)=\xi^{k_{ij}},$$ where $\mathcal{G}_i$ and $\mathcal{G}_j$ are $i$-th and $j$-th row of the matrix $\mathcal{G}$ and $\chi_{\mathcal{G}_i}$ represents the character associated to vector $\mathcal{G}_i$ under duality $M$. Then the matrix $\log_\xi(\mathcal{G}\odot_M \mathcal{G}^T)$ is defined as
 $$(\log_\xi(\mathcal{G}\odot_M \mathcal{G}^T))_{ij}=\log_\xi(\mathcal{G}\odot_M \mathcal{G}^T)_{ij}=\log_{\xi}(\xi^{k_{ij}})\equiv k_{ij}\mod p,$$ where $\log_{\xi}$ represents the principal branch of complex logarithm to the base $\xi$ and $\xi$ is the primitive $p$-th root of unity with $0 \leq k_{ij} \leq p-1$. Here, we shall give a characterization of the  hull of an additive code $C$ in terms of its generator matrix.  The following result  has previously been mentioned in \cite{onerank}; however, we are now presenting a detailed proof of this result in our setting. As, this proof holds significance in establishing the Theorem \ref{skew-symmetric k odd}. 
	\begin{theorem}\label{rankhull}
		Let $\mathcal{G}$ be a $k\times n$  generator matrix of an additive code $C$ over $\mathbb{F}_q$. Then $rank(\log_\xi(\mathcal{G}\odot_M \mathcal{G}^T))=k-rank(Hull(C))$.
		\end{theorem}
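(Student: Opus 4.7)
The plan is to parametrize codewords by $\mathbb{F}_p$-coefficient vectors and translate the orthogonality condition defining $C^M$ into a linear condition modulo $p$ on those coefficients. Since the $p$-rank of $\mathcal{G}$ equals $k$, the map $\Phi : \mathbb{F}_p^k \to C$ sending $\textbf{u} \mapsto \textbf{u}\mathcal{G}$ is a group isomorphism, so it suffices to identify the subgroup $H := \{\textbf{u} \in \mathbb{F}_p^k : \textbf{u}\mathcal{G} \in C^M\}$ and verify that it equals the left kernel over $\mathbb{F}_p$ of the matrix $A := \log_\xi(\mathcal{G}\odot_M \mathcal{G}^T)$.

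First I would record that the componentwise pairing $\chi_{\textbf{a}}(\textbf{b}) = \prod_i \chi_{a_i}(b_i)$ is biadditive in $\textbf{a}$ and $\textbf{b}$, which is immediate from the two character axioms ($\chi$ a homomorphism and $a \mapsto \chi_a$ a homomorphism). Since $(\mathbb{F}_{p^e},+)$ has exponent $p$, each value $\chi_a(b)$ is a $p$-th root of unity $\xi^{k}$ and this biadditivity becomes $\mathbb{F}_p$-bilinearity of the exponent taken modulo $p$. Writing $\textbf{u}\mathcal{G} = \sum_i u_i \mathcal{G}_i$ and $\textbf{v}\mathcal{G} = \sum_j v_j \mathcal{G}_j$ as $\mathbb{F}_p$-combinations of rows (noting that scalar multiplication by $u_i \in \mathbb{F}_p$ is iterated addition in $G^n$), expansion gives
\[
\chi_{\textbf{u}\mathcal{G}}(\textbf{v}\mathcal{G}) \;=\; \prod_{i,j}\chi_{\mathcal{G}_i}(\mathcal{G}_j)^{u_i v_j} \;=\; \xi^{\sum_{i,j} u_i v_j k_{ij}} \;=\; \xi^{\textbf{u} A \textbf{v}^T \bmod p}.
\]

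Next, $\textbf{u}\mathcal{G}$ belongs to $C^M$ iff $\chi_{\textbf{u}\mathcal{G}}(\textbf{v}\mathcal{G}) = 1$ for every $\textbf{v} \in \mathbb{F}_p^k$, and by the display above this amounts to $\textbf{u} A \textbf{v}^T \equiv 0 \pmod{p}$ for all $\textbf{v}$, i.e.\ $\textbf{u} A \equiv \textbf{0} \pmod{p}$. Thus $H$ is exactly the left null space of $A$ over $\mathbb{F}_p$, so rank-nullity yields $\dim_{\mathbb{F}_p} H = k - \mathrm{rank}(A)$. Transporting along the isomorphism $\Phi$, which maps $H$ bijectively onto $\mathrm{Hull}(C)$, gives $\mathrm{rank}(\mathrm{Hull}(C)) = k - \mathrm{rank}(A)$, equivalently $\mathrm{rank}(A) = k - \mathrm{rank}(\mathrm{Hull}(C))$.

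The step I would be most careful about is the passage from the multiplicative identity $\chi_{\textbf{u}\mathcal{G}}(\textbf{v}\mathcal{G}) = 1$ to the linear condition $\textbf{u} A \textbf{v}^T \equiv 0 \pmod p$: this relies on $\xi$ being a \emph{primitive} $p$-th root of unity so that $\log_\xi$ is well-defined modulo $p$, and on the compatibility of the character pairing with $\mathbb{F}_p$-scalar multiplication (which is just repeated addition in $G$). Once biadditivity is set up correctly, the rest of the argument is a clean application of rank-nullity over $\mathbb{F}_p$, with no further computational obstacles.
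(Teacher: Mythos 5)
Your proposal is correct and follows essentially the same route as the paper: expand $\chi_{\textbf{u}\mathcal{G}}(\textbf{v}\mathcal{G})$ by biadditivity to get the linear system $\textbf{u}\,\log_\xi(\mathcal{G}\odot_M\mathcal{G}^T)\equiv\textbf{0}\pmod p$ and apply rank--nullity over $\mathbb{F}_p$ (the paper phrases this as the right null space of the transposed matrix, which is the same left-kernel condition). Your explicit use of the isomorphism $\Phi:\mathbb{F}_p^k\to C$ to transport the kernel onto $\mathrm{Hull}(C)$ is a slightly more careful packaging of the paper's final counting step, but not a different argument.
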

	\begin{proof}
 	Let $\mathcal{G}_i$ be the $i$-th row of the matrix $\mathcal{G}$. Let $\textbf{c}\in Hull(C)$, then $\textbf{c}=\sum_{i=1}^{k}n_i\mathcal{G}_i$ for some $n_i\in \mathbb{F}_p$ and $ \chi_{\sum_{i=1}^{k}n_i\mathcal{G}_i}(\mathcal{G}_j)=1$ for all $j=1,2,\dots, k$. This implies that $ \prod_{i=1}^{k}\chi_{\mathcal{G}_i}(\mathcal{G}_j)^{n_i}=1$ for all $j=1,2,\dots, k$. Let $\chi_{\mathcal{G}_i}(\mathcal{G}_j)=\xi^{r_{ij}}$ for all $1\leq i,j\leq k$, where $ \xi $ is the primitive $ p $-th root of unity and $r_{ij}\in\mathbb{F}_p$. Now, we have $\xi^{\sum_{i=1}^{k}r_{ij}n_i}=1$ for all $j=1,2,\dots, k$. It gives that $\sum_{i=1}^{k}r_{ij}n_i\equiv 0 \mod p $ for all $j=1,2,\dots, k$. Hence, we have the following system of equations
 	$$\begin{pmatrix}
 	r_{11}&r_{21}&\dots&r_{k1}\\
 	r_{12}&r_{22}&\dots&r_{k2}\\
 
 	\vdots&\vdots&\ddots&\vdots\\
 	r_{1k}&r_{2k}&\dots&r_{kk}
 \end{pmatrix}_{k\times k}  \begin{pmatrix}
 	n_1\\n_2\\ \vdots\\n_{k}
 \end{pmatrix}\equiv\begin{pmatrix}
 	0\\0\\ \vdots\\0
 \end{pmatrix}\mod p.$$
In particular, we have,
$$(\log_\xi(\mathcal{G}\odot_M \mathcal{G}^T))^T\begin{pmatrix}
	n_1\\n_2\\ \vdots\\n_{k}
\end{pmatrix}\equiv\begin{pmatrix}
	0\\0\\ \vdots\\0
\end{pmatrix} \mod p.$$
If rank of the matrix $\log_\xi(\mathcal{G}\odot_M \mathcal{G}^T)$ is $r$ then we have $k-r$ independent solutions of the above system of equations over $\mathbb{F}_p$. This implies that there are $k-r$ independent vectors in $Hull(C)$, i.e., $rank(Hull(C))=k-r$.  Hence, the result follows.
\end{proof}
The following corollary directly follows from the theorem.
\begin{corollary}\label{one-rank hull}
	Let $C$ be an additive code over $\mathbb{F}_q$ with generator matrix $\mathcal{G}$ then $C$ is a one-rank hull code if and only if $rank(\log_\xi(\mathcal{G}\odot_M \mathcal{G}^T))=k-1$.
	\end{corollary}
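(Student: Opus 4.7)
The plan is to derive this corollary as an immediate consequence of Theorem \ref{rankhull}, which I will treat as a black box. By the definition given just before the statement, $C$ is a one-rank hull code precisely when $rank(Hull(C)) = 1$, so the task reduces to rephrasing Theorem \ref{rankhull} under this rank constraint.

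First I would invoke Theorem \ref{rankhull} to write
\[
rank\bigl(\log_\xi(\mathcal{G}\odot_M \mathcal{G}^T)\bigr) \;=\; k - rank(Hull(C)).
\]
Then, for the forward direction, I would assume $rank(Hull(C))=1$ and substitute to get $rank(\log_\xi(\mathcal{G}\odot_M \mathcal{G}^T)) = k-1$. For the converse direction, I would assume $rank(\log_\xi(\mathcal{G}\odot_M \mathcal{G}^T)) = k-1$ and solve the same identity for $rank(Hull(C))$, obtaining $rank(Hull(C)) = 1$, which is exactly the definition of a one-rank hull code.

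There is essentially no obstacle here, since the theorem already establishes the exact linear relationship between the rank of $\log_\xi(\mathcal{G}\odot_M \mathcal{G}^T)$ and the rank of the hull. The only thing worth double-checking is that the equality in Theorem \ref{rankhull} is in fact an ``iff'' style equality (it is, since it is stated as equality of integers), so both directions of the biconditional follow from the same identity without any additional argument. I would present the proof in two or three lines, simply as a substitution into the theorem's formula.
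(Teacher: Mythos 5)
Your proposal is correct and matches the paper, which gives no separate argument beyond noting that the corollary ``directly follows'' from Theorem \ref{rankhull}; substituting $rank(Hull(C))=1$ into the identity $rank(\log_\xi(\mathcal{G}\odot_M \mathcal{G}^T))=k-rank(Hull(C))$ and reading it in both directions is exactly the intended reasoning.
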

	The following theorem gives an interesting non-existence result of additive one-rank hull code with respect to skew-symmetric dualities
	in terms of the number of generators.
	\begin{theorem}\label{skew-symmetric k odd}
			Let $\mathcal{G}$ be an $k\times n$  generator matrix of an additive code $C$ over $\mathbb{F}_q$. If $C$ is a one-rank hull code under skew-symmetric duality $M$, then $k$ is odd.
	\end{theorem}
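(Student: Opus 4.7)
The plan is to exploit Corollary \ref{one-rank hull}: it suffices to show that the $k\times k$ matrix $A := \log_\xi(\mathcal{G}\odot_M \mathcal{G}^T)$ over $\mathbb{F}_p$ has even rank whenever $M$ is skew-symmetric. Indeed, once that is established, the one-rank hull hypothesis forces $\mathrm{rank}(A)=k-1$, and evenness of $k-1$ gives $k$ odd.

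First I would check that $A$ is an \emph{alternating} matrix, i.e., $A^T = -A$ with zero diagonal. Let $\mathcal{G}_i=(g_{i1},\dots,g_{in})$. Expanding the character product coordinate-wise,
\begin{equation*}
\chi_{\mathcal{G}_i}(\mathcal{G}_j) \;=\; \prod_{\ell=1}^{n} \chi_{g_{i\ell}}(g_{j\ell}),
\end{equation*}
and since $M$ is skew-symmetric, the defining relation $\chi_x(y)=\chi_y(x)^{-1}$ applied termwise yields $\chi_{\mathcal{G}_i}(\mathcal{G}_j)=\chi_{\mathcal{G}_j}(\mathcal{G}_i)^{-1}$. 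Taking $\log_\xi$ and reducing mod $p$ gives $A_{ij} = -A_{ji}$. For the diagonal, the relation $\chi_x(x)=1$ for all $x$ (part of the definition of skew-symmetric duality) gives $\chi_{g_{i\ell}}(g_{i\ell})=1$ for every $\ell$, so $\chi_{\mathcal{G}_i}(\mathcal{G}_i)=1$ and $A_{ii}=0$.

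Next I would invoke the standard fact that any alternating matrix over a field has even rank. Over fields of characteristic $\neq 2$ this follows from the existence of a symplectic normal form (blocks $\bigl(\begin{smallmatrix}0&1\\-1&0\end{smallmatrix}\bigr)$ plus a zero block), and in characteristic $2$ the same normal form holds provided the diagonal vanishes, which is precisely our alternating hypothesis. Hence $\mathrm{rank}(A)$ is even in $\mathbb{F}_p$ for every prime $p$.

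Finally, by Corollary \ref{one-rank hull}, the assumption that $C$ has one-rank hull means $\mathrm{rank}(A)=k-1$. Combined with the evenness just established, $k-1$ is even, so $k$ is odd, completing the proof. The only subtle point is the characteristic-$2$ case of the rank parity statement; the rest of the argument is a direct translation of skew-symmetry of the duality into skew-symmetry of the matrix $A$ through the logarithm.
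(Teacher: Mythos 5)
Your proof is correct and follows essentially the same route as the paper: translate skew-symmetry of the duality into the alternating property of $\log_\xi(\mathcal{G}\odot_M \mathcal{G}^T)$, invoke the even-rank fact for such matrices, and conclude via Corollary \ref{one-rank hull} that $k-1$ is even. Your treatment is in fact slightly more careful than the paper's, since you explicitly verify the zero diagonal and flag that in characteristic $2$ the even-rank conclusion requires the alternating (not merely skew-symmetric) hypothesis.
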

\begin{proof}
	If $C$ is a one-rank hull code under skew-symmetric duality $M$, then by the  Theorem \ref{rankhull} the matrix $\log_\xi(\mathcal{G}\odot_M \mathcal{G}^T)$ is a skew-symmetric matrix of  rank $ k-1 $. We know that the rank of the skew-symmetric matrix is always even; hence, $k$ must be odd.
\end{proof}
\begin{theorem}
	If $C$ is a one-rank hull code of length  $n$  and $D$ is  an ACD code of  length $m$,  then $C\times D$ is a  one-rank hull code of length $n+m$.
\end{theorem}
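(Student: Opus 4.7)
The plan is to use the characterization of hull rank via the matrix $\log_\xi(\mathcal{G}\odot_M \mathcal{G}^T)$ given in Theorem \ref{rankhull} and Corollary \ref{one-rank hull}, together with the natural block-diagonal generator matrix of the product code. Let $\mathcal{G}_C$ be a $k\times n$ generator matrix of $C$ and $\mathcal{G}_D$ be an $\ell\times m$ generator matrix of $D$. Then the $(k+\ell)\times(n+m)$ matrix
$$
\mathcal{G}=\begin{pmatrix} \mathcal{G}_C & 0 \\ 0 & \mathcal{G}_D \end{pmatrix}
$$
is a generator matrix of $C\times D$, and the $p$-rank is still $k+\ell$ because the zero blocks decouple the $\mathbb{F}_p$-linear dependencies.

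Next I would compute $\mathcal{G}\odot_M \mathcal{G}^T$. The diagonal blocks are clearly $\mathcal{G}_C\odot_M \mathcal{G}_C^T$ and $\mathcal{G}_D\odot_M \mathcal{G}_D^T$. For an off-diagonal entry, take for instance a row of the form $(\mathcal{G}_{C,i},\mathbf{0})$ and a row of the form $(\mathbf{0},\mathcal{G}_{D,j})$; the definition of $\chi$ on concatenated vectors gives
$$
\chi_{(\mathcal{G}_{C,i},\mathbf{0})}\bigl((\mathbf{0},\mathcal{G}_{D,j})\bigr)=\chi_{\mathcal{G}_{C,i}}(\mathbf{0})\cdot\chi_{\mathbf{0}}(\mathcal{G}_{D,j})=1,
$$
since any character sends the identity to $1$. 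Hence the off-diagonal blocks consist entirely of $1$'s, which after applying $\log_\xi$ become $0$. Therefore
$$
\log_\xi(\mathcal{G}\odot_M \mathcal{G}^T)=\begin{pmatrix} \log_\xi(\mathcal{G}_C\odot_M \mathcal{G}_C^T) & 0 \\ 0 & \log_\xi(\mathcal{G}_D\odot_M \mathcal{G}_D^T) \end{pmatrix},
$$
so its rank over $\mathbb{F}_p$ is the sum of the ranks of the two diagonal blocks.

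Finally I would plug into Theorem \ref{rankhull}. Since $C$ has one-rank hull, Corollary \ref{one-rank hull} gives $\mathrm{rank}\bigl(\log_\xi(\mathcal{G}_C\odot_M \mathcal{G}_C^T)\bigr)=k-1$, and since $D$ is ACD, Theorem \ref{rankhull} gives $\mathrm{rank}\bigl(\log_\xi(\mathcal{G}_D\odot_M \mathcal{G}_D^T)\bigr)=\ell$. Summing yields total rank $k+\ell-1$, and applying Theorem \ref{rankhull} once more to $C\times D$ shows $\mathrm{rank}(\mathrm{Hull}(C\times D))=(k+\ell)-(k+\ell-1)=1$, as required.

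The argument is essentially bookkeeping, so I do not expect a genuine obstacle. The only spot that needs some care is verifying that the off-diagonal blocks of $\mathcal{G}\odot_M \mathcal{G}^T$ really vanish; this hinges on $\chi_{\mathbf{0}}\equiv 1$ and $\chi_x(\mathbf{0})=1$ for every $x$, which follows from the fact that each coordinate-wise character is a group homomorphism into $\mathbb{C}^*$. Once that is in hand, the block-diagonal structure and the two rank inputs from Theorem \ref{rankhull} deliver the conclusion without further work.
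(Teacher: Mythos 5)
Your proof is correct, but it takes a different route from the paper. The paper's own argument is a one-line set-theoretic computation: writing $Hull(C)=\langle \mathbf{x}\rangle$ and $Hull(D)=\{\mathbf{0}\}$, it asserts $Hull(C\times D)=Hull(C)\times Hull(D)=\langle(\mathbf{x},\mathbf{0})\rangle$, which implicitly rests on the identity $(C\times D)^M=C^M\times D^M$ (itself a consequence of the character of a concatenated vector factoring coordinatewise). You instead go through the generator-matrix characterization: you build the block-diagonal generator matrix of $C\times D$, check that the off-diagonal blocks of $\mathcal{G}\odot_M\mathcal{G}^T$ are all $1$'s because $\chi_{\mathbf{0}}\equiv 1$ and $\chi_x(0)=1$, and then add the ranks $(k-1)+\ell$ of the diagonal blocks to invoke Theorem \ref{rankhull}. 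Both arguments ultimately hinge on the same triviality of characters at (or of) the zero element, but the paper's version is shorter and also exhibits the generator $(\mathbf{x},\mathbf{0})$ of the hull explicitly, whereas yours stays entirely within the rank calculus already set up in Section \ref{Characterization} and avoids having to justify the product formula for duals. Your bookkeeping is sound throughout, including the observation that the block structure preserves $p$-rank $k+\ell$, so there is no gap.
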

\begin{proof}
	Let   $Hull(C)=\langle \textbf{x} \rangle$, for some $\textbf{x}\in \mathbb{F}^n_q$ and $Hull(D)=\{\textbf{0}\}$ then $Hull(C\times D)=\langle \textbf{x}\rangle\times \{\textbf{0}\}=\langle (\textbf{x},\textbf{0})\rangle$. Hence, $C\times D$ is a one-rank hull code of length $n+m$.
\end{proof}
\begin{lemma}\label{C is one-rank wrt M^T}
	An additive code $C$ is a one-rank hull code with respect to duality $M$  if and only if it is a one-rank hull code with respect to duality $M^T$.
\end{lemma}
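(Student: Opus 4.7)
The plan is to leverage the characterization of one-rank hull codes given in Corollary \ref{one-rank hull}: an additive code $C$ with generator matrix $\mathcal{G}$ is one-rank hull with respect to a duality $N$ if and only if $\mathrm{rank}(\log_\xi(\mathcal{G}\odot_N \mathcal{G}^T))=k-1$. Since the generator matrix $\mathcal{G}$ of $C$ does not depend on the choice of duality, it suffices to relate the two matrices $\log_\xi(\mathcal{G}\odot_M \mathcal{G}^T)$ and $\log_\xi(\mathcal{G}\odot_{M^T} \mathcal{G}^T)$.

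The key observation I would establish first is that for the duality $M^T$, the associated characters $\chi'$ satisfy $\chi'_{x}(y)=\chi_{y}(x)$ for all $x,y\in \mathbb{F}_q$. Indeed, this holds by definition on the generators of $(\mathbb{F}_{p^e},+)$, since the $(i,j)$ entry of $M^T$ equals the $(j,i)$ entry of $M$, and then extends to arbitrary elements by bilinearity (i.e.\ using that both $a\mapsto \chi_a$ and $\chi_a$ itself are group homomorphisms). Extending componentwise to vectors $\mathbf{a},\mathbf{b}\in \mathbb{F}_q^n$, one gets $\chi'_{\mathbf{a}}(\mathbf{b})=\chi_{\mathbf{b}}(\mathbf{a})$.

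Applying this identity to the rows of $\mathcal{G}$ yields
\[
(\mathcal{G}\odot_{M^T}\mathcal{G}^T)_{ij}=\chi'_{\mathcal{G}_i}(\mathcal{G}_j)=\chi_{\mathcal{G}_j}(\mathcal{G}_i)=(\mathcal{G}\odot_M \mathcal{G}^T)_{ji},
\]
so $\mathcal{G}\odot_{M^T}\mathcal{G}^T=(\mathcal{G}\odot_M \mathcal{G}^T)^T$. Taking $\log_\xi$ entrywise preserves the transpose relationship, hence $\log_\xi(\mathcal{G}\odot_{M^T}\mathcal{G}^T)=\bigl(\log_\xi(\mathcal{G}\odot_M \mathcal{G}^T)\bigr)^T$. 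Because a matrix and its transpose share the same rank over $\mathbb{F}_p$, the two ranks coincide, and Corollary \ref{one-rank hull} gives the equivalence.

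I do not expect any serious obstacle here: the proof is essentially a bookkeeping exercise showing that swapping $M$ with $M^T$ transposes the Gram-type matrix $\mathcal{G}\odot_M \mathcal{G}^T$. The only step worth being careful about is justifying $\chi'_a(b)=\chi_b(a)$ for all $a,b\in\mathbb{F}_q$ (not just on generators), since this uses bilinearity of the character pairing; once that is in place, the rest is immediate.
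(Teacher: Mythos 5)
Your proposal is correct and follows essentially the same route as the paper's own proof: both establish $\chi'_a(b)=\chi_b(a)$, deduce $\mathcal{G}\odot_{M^T}\mathcal{G}^T=(\mathcal{G}\odot_M \mathcal{G}^T)^T$, and conclude via rank invariance under transposition together with Corollary \ref{one-rank hull}. Your extra remark justifying $\chi'_a(b)=\chi_b(a)$ on all of $\mathbb{F}_q$ (not just on generators) via bilinearity is a slightly more careful rendering of a step the paper takes for granted.
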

\begin{proof}
	Let $\mathcal{G}$ be a generator matrix of a one-rank hull code $C$. Suppose $\chi$ and $\chi'$ represents the dualities $M$ and $M^T$, respectively, then $\chi'_a(b)=\chi_b(a)$. From this fact, we have $\mathcal{G}\odot_{M^T} \mathcal{G}^T=(\mathcal{G}\odot_M \mathcal{G}^T)^T$. It follows that, $rank(\log_\xi(\mathcal{G}\odot_{M^T} \mathcal{G}^T))=rank(\log_\xi(\mathcal{G}\odot_M \mathcal{G}^T))^T=rank(\log_\xi(\mathcal{G}\odot_M \mathcal{G}^T))$. Thus, the result follows from Corollary \ref{one-rank hull}.
\end{proof}
\begin{proposition}\label{C^{M^T} is one-rank}
If $C$ is a one-rank hull code with respect to duality $M$, then $C^M$  and $C^{M^T}$ are one-rank hull codes with respect to duality $M^T$ and $M$, respectively.
\end{proposition}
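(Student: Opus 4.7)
The plan is to observe that taking duals with respect to $M$ and $M^T$ are mutually inverse operations on additive codes, which reduces the proposition to Lemma \ref{C is one-rank wrt M^T}.

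First, I would establish the key identity $(C^M)^{M^T}=C$. Let $\chi$ denote the character map for $M$ and $\chi'$ the one for $M^T$, so $\chi'_a(b)=\chi_b(a)$. Unfolding definitions gives
\[
(C^M)^{M^T}=\bigl\{w\in G^n:\chi_u(w)=1 \text{ for all } u\in C^M\bigr\}.
\]
Any $w\in C$ satisfies this (that is exactly what $u\in C^M$ means), so $C\subseteq(C^M)^{M^T}$. For equality I would invoke the cardinality condition $|E||E^M|=|G|^n$ twice: applied to $E=C$ it yields $|C^M|=|G|^n/|C|$, and applied to $E=C^M$ with the duality $M^T$ it yields $|(C^M)^{M^T}|=|G|^n/|C^M|=|C|$. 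Hence the inclusion is an equality. By the same argument with the roles of $M$ and $M^T$ swapped, $(C^{M^T})^M=C$.

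Next, I would compute the two hulls directly. For the first claim,
\[
Hull_{M^T}(C^M)=C^M\cap(C^M)^{M^T}=C^M\cap C=Hull_M(C),
\]
so the hull of $C^M$ with respect to $M^T$ coincides as a set with $Hull_M(C)$ and in particular has rank one. For the second claim,
\[
Hull_M(C^{M^T})=C^{M^T}\cap(C^{M^T})^M=C^{M^T}\cap C=Hull_{M^T}(C),
\]
and by Lemma \ref{C is one-rank wrt M^T} the code $C$ is also a one-rank hull code with respect to $M^T$, so this intersection has rank one as well.

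The only non-routine step is the biduality identity $(C^M)^{M^T}=C$; everything else is a set-theoretic manipulation of hulls. I expect no substantial obstacle, since the cardinality condition is already stated in the preliminaries and immediately closes the inclusion $C\subseteq(C^M)^{M^T}$ into an equality.
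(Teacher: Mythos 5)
Your proposal is correct and follows essentially the same route as the paper: both reduce the claim to the identities $(C^M)^{M^T}=C$ and $(C^{M^T})^M=C$ and then identify the two hulls as $C^M\cap C$ and $C^{M^T}\cap C$, invoking Lemma \ref{C is one-rank wrt M^T} for the second. The only difference is that you explicitly justify the biduality identity via the cardinality condition, a step the paper's proof uses without comment.
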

\begin{proof}
	Let $C$ be a one-rank hull code with respect to duality $M$, i.e., $C\cap C^M=\left\langle \textbf{x}\right\rangle$, for some $\textbf{x}\in \mathbb{F}_{p^e}^n$. By above lemma, we note that $C\cap C^{M^T}=\left\langle \textbf{y}\right\rangle $, for some $\textbf{y}\in \mathbb{F}^n_{p^e}$. Then, we have $C^M\cap (C^M)^{M^T}=C^M\cap C=\left\langle \textbf{x}\right\rangle $ and  $C^{M^T}\cap (C^{M^T})^M=C^{M^T}\cap C=\left\langle \textbf{y}\right\rangle $. This completes the proof.
\end{proof}
\begin{remark}
	For symmetric and skew-symmetric dualities, we have $(C^M)^M=C$,  see \cite{ACDovernon-symmetric}. From this fact, we conclude that if $C$ is a one-rank hull code with respect to duality $M$, then $C^M$ and $C^{M^T}$ both are one-rank hull codes under the same duality $M$.
\end{remark}
\section{Existence of one-rank hull codes}\label{Existence}
Suppose that generators of additive group of finite fields $\mathbb{F}_{p^e}$ are $ x_1, x_2,\dots , x_e $ such that $\mathbb{F}_{p^e}=\left\langle x_1, x_2,\dots , x_e\right\rangle$. Let $M$ be a duality over $\mathbb{F}_{p^e}$ with $\chi_{x_i}(x_j)=\xi^{k_{ij}}$ for all $1\leq i,j\leq e$, where $\xi$ is the primitive $ p $-th root of unity and $0\leq k_{ij}\leq p-1$. Then we can define  $e\times e$ matrix, say Duality matrix, $D=[k_{ij}]$ over $\mathbb{F}_p$ for any duality $M$. For symmetric dualities, the duality matrix $D$ is symmetric. 

  Define a quadratic form $\mathcal{Q}: \mathbb{F}_p^e\to \mathbb{F}_p$ for any   duality $M$ such that $\mathcal{Q}(\textbf{a})=\textbf{a}D\textbf{a}^T$. If $p\neq 2$ then quadratic form $\mathcal{Q}$ can be defined as  $\mathcal{Q}(\textbf{a})=\textbf{a}D\textbf{a}^T=\textbf{a}D'\textbf{a}^T$, where $D'=\left[\frac{k_{ij}+k_{ji}}{2} \right]$ is a symmetric matrix. The quadratic form is said to be non-degenerate if the associated symmetric matrix $D'$ is non-singular and, in particular, $\det \mathcal{Q}=\det D'$.

The following lemma describes an interesting connection between self-orthogonal elements and solutions of quadratic form $\mathcal{Q}=0$.
\begin{lemma}\label{selforthogonal}
	Let $M$ be any duality over $\mathbb{F}_{p^e}$ with duality matrix $D$. Then, for any $x=\sum_{i=1}^{e}a_ix_i$,  $\chi_{x}(x)=1$ if and only if $\mathcal{Q}(\textbf{a})=0$, where $\textbf{a}=(a_1,a_2,\dots,a_e)$.
\end{lemma}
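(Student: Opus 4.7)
The plan is to exploit the bilinear behaviour of the pairing $(x,y) \mapsto \chi_x(y)$ and reduce $\chi_x(x)$ to a single power of $\xi$ whose exponent is exactly $\mathbf{a} D \mathbf{a}^T$. Concretely, I would first observe two facts that come for free from the definition of duality: (i) since $\chi_y \in \widehat{G}$ is itself a homomorphism, $\chi_y(u+v) = \chi_y(u)\chi_y(v)$; and (ii) since the assignment $y \mapsto \chi_y$ is the isomorphism $\phi \colon G \to \widehat{G}$, we also have $\chi_{u+v}(y) = \chi_u(y)\chi_v(y)$. Thus the map is bilinear in additive notation on $G$, and in particular $\chi_{m\cdot u}(n\cdot v) = \chi_u(v)^{mn}$ for integers $m,n$.

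Next I would write $x = \sum_{i=1}^e a_i x_i$ and apply bilinearity in both slots to unfold
\[
\chi_x(x) \;=\; \chi_{\sum_i a_i x_i}\!\left(\sum_j a_j x_j\right) \;=\; \prod_{i=1}^e \prod_{j=1}^e \chi_{x_i}(x_j)^{a_i a_j}.
\]
Substituting $\chi_{x_i}(x_j) = \xi^{k_{ij}}$ from the definition of the duality matrix $D=[k_{ij}]$ collapses the product into a single power of $\xi$:
\[
\chi_x(x) \;=\; \xi^{\sum_{i,j} k_{ij}\, a_i a_j} \;=\; \xi^{\mathbf{a} D \mathbf{a}^T} \;=\; \xi^{\mathcal{Q}(\mathbf{a})}.
\]

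Finally, since $\xi$ is a primitive $p$-th root of unity, $\xi^t = 1$ holds in $\mathbb{C}^*$ if and only if $t \equiv 0 \pmod p$; hence $\chi_x(x) = 1$ if and only if $\mathcal{Q}(\mathbf{a}) \equiv 0$ in $\mathbb{F}_p$, which is exactly the stated equivalence.

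The only nontrivial step is justifying the bilinear expansion, which requires being careful that the exponents $a_i$ live in $\mathbb{F}_p$ while the character values live in a cyclic group of order $p$; however, because each $\chi_{x_i}(x_j)$ already has order dividing $p$, the integer lift of $a_i$ does not matter, so bilinearity passes cleanly from $\mathbb{Z}$-exponents to $\mathbb{F}_p$-exponents. Once that is observed, the remainder of the argument is a direct computation.
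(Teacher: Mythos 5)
Your proposal is correct and follows essentially the same route as the paper: expand $\chi_x(x)$ by bilinearity of the pairing into $\prod_{i,j}\chi_{x_i}(x_j)^{a_ia_j}=\xi^{\sum_{i,j}k_{ij}a_ia_j}=\xi^{\mathcal{Q}(\mathbf{a})}$ and use that $\xi$ has order $p$. The only difference is that you spell out the bilinearity and the well-definedness of $\mathbb{F}_p$-exponents, which the paper takes for granted.
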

\begin{proof}
	We have  $\chi_{x}(x)=\chi_{\sum_{i=1}^{e}a_ix_i}(\sum_{i=1}^{e}a_ix_i)=\prod_{i,j=1}^{e}\chi_{x_i}(x_j)^{a_ia_j}=\xi^{\sum_{i,j=1}^{e}k_{ij}a_ia_j}=1$ if and only if  $\sum_{i,j=1}^{e}k_{ij}a_ia_j\equiv0\mod p$.	Let $\mathcal{Q}$ be a quadratic form associated with the duality matrix $D$ then $\mathcal{Q}(\textbf{a})=\textbf{a}D\textbf{a}^T=\sum_{i,j=1}^{e}k_{ij}a_ia_j$. Hence, the result follows.
\end{proof}
The following theorem shows the existence of self-orthogonal elements for any duality over $\mathbb{F}_{p^e},\ (e\geq3)$.
\begin{theorem}\label{existence of 1 rank for geq 3}
	Let $\mathbb{F}_{p^e},\ (e\geq3)$, be a finite field, then there  exists one-rank hull code of  length 1 with respect to any duality over $ \mathbb{F}_{p^e}$.
\end{theorem}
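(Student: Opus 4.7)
The plan is to exhibit an explicit length-$1$ one-rank hull code. If $x \in \mathbb{F}_{p^e}$ is a non-zero self-orthogonal element, I will take $C = \langle x \rangle_{\mathbb{F}_p} = \{0, x, 2x, \ldots, (p-1)x\}$. Such a code has $\mathrm{rank}_p(C) = 1$, and bilinearity of $\chi$ in its two arguments yields $\chi_{mx}(nx) = \chi_x(x)^{mn} = 1$ for all $m, n \in \mathbb{F}_p$. Hence $C \subseteq C^M$, so $\mathrm{Hull}(C) = C$ has rank exactly $1$ (using $x \neq 0$). The theorem therefore reduces to showing that a non-zero self-orthogonal element exists in $\mathbb{F}_{p^e}$ whenever $e \geq 3$.

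By Lemma \ref{selforthogonal}, finding such an element is equivalent to producing a non-zero vector $\textbf{a} \in \mathbb{F}_p^e$ with $\mathcal{Q}(\textbf{a}) = \textbf{a} D \textbf{a}^T = 0$, where $D$ is the duality matrix attached to $M$. I would invoke the Chevalley--Warning theorem here: $\mathcal{Q}$ is a polynomial of total degree at most $2$ in $e$ variables over $\mathbb{F}_p$, and the hypothesis $e \geq 3$ gives $\deg(\mathcal{Q}) < e$. Consequently, the number of zeros of $\mathcal{Q}$ in $\mathbb{F}_p^e$ is divisible by $p$. Since $\textbf{0}$ is always a zero, the total number of solutions is at least $p \geq 2$, producing a non-trivial $\textbf{a}$, and hence the required non-zero self-orthogonal element $x = \sum_{i=1}^{e} a_i x_i$.

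The main conceptual point to verify is that Chevalley--Warning applies uniformly across all dualities, without any non-degeneracy assumption on $\mathcal{Q}$. In particular, singular duality matrices (including the skew-symmetric situation in odd characteristic, where the symmetrization $D'$ can vanish, or even the degenerate extreme $\mathcal{Q} \equiv 0$) present no obstacle, since only the degree bound is used. In the trivial subcase $\mathcal{Q} \equiv 0$ every non-zero element is self-orthogonal, and the conclusion is immediate. Beyond this observation, no further obstacles are expected; the entire proof is a short two-step reduction in which the single essential ingredient is the Chevalley--Warning bound, activated precisely by the hypothesis $e \geq 3$.
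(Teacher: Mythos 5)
Your proposal is correct and follows essentially the same route as the paper: reduce to finding a non-zero self-orthogonal element via Lemma \ref{selforthogonal}, then obtain a non-trivial zero of the degree-$2$ form $\mathcal{Q}$ in $e\geq 3$ variables from the Chevalley--Warning theorem (the paper cites Chevalley's theorem directly), and take $C=\langle x\rangle$ so that $Hull(C)=C$ has rank one. Your added verification that $C\subseteq C^M$ via $\chi_{mx}(nx)=\chi_x(x)^{mn}$, and the remark that no non-degeneracy of $\mathcal{Q}$ is needed, are details the paper leaves implicit but do not change the argument.
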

\begin{proof}
Let $M$ be a duality over $\mathbb{F}_{p^e},\ (e\geq3)$. Define a quadratic form, say $\mathcal{Q}$, corresponding to the  duality $M$. Then $ \mathcal{Q} $ is a homogeneous polynomial of degree 2 with $e\ (\geq 3)$ variables. By Chevalley's theorem (see \cite[Corollary 6.6]{finitefields}), $ \mathcal{Q} $ has non-trivial solutions over $\mathbb{F}_p$. Therefore, using Lemma \ref{selforthogonal}, we have an element, say $ x \in\mathbb{F}^*_{p^e}$, such that $\chi_{x}(x)=1$. Let $C=\left\langle x \right\rangle $	 be an additive code then $Hull(C)=C$. Hence, $C$ is a one-rank hull code of length 1 for the   duality $ M $ over $\mathbb{F}_{p^e},\ ( e\geq3\ )$. This completes the proof.
\end{proof}
\begin{corollary}
	Over $\mathbb{F}_{p^e},\ (e\geq3)$, there exists one-rank hull code of length $n$ with respect to any duality over $ \mathbb{F}_{p^e}$.
\end{corollary}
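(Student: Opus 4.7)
The plan is to reduce the length-$n$ existence statement to the length-$1$ case already settled in Theorem \ref{existence of 1 rank for geq 3} by appending a trivial ACD code of length $n-1$, and then invoking the product construction for one-rank hull codes proved earlier in this section.

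More concretely, I would proceed as follows. First, apply Theorem \ref{existence of 1 rank for geq 3} to obtain an element $x \in \mathbb{F}_{p^e}^{*}$ with $\chi_x(x)=1$; the additive code $C_1 = \langle x\rangle \subseteq \mathbb{F}_{p^e}$ is then a one-rank hull code of length $1$ with respect to the given duality $M$. Second, take $D$ to be the zero code $\{\mathbf{0}\} \subseteq \mathbb{F}_{p^e}^{\,n-1}$; this is an ACD code of length $n-1$ because $Hull(D) = D \cap D^M = \{\mathbf{0}\}$ holds trivially (the dual of the zero code is the whole ambient space). Third, apply the product theorem stated earlier (if $C$ is one-rank hull of length $n$ and $D$ is ACD of length $m$, then $C\times D$ is one-rank hull of length $n+m$) to conclude that $C_1 \times D$ is a one-rank hull code of length $1 + (n-1) = n$ over $\mathbb{F}_{p^e}$ with respect to $M$.

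There is essentially no obstacle: every ingredient has already been established in this section, and the only conceptual content is recognizing that padding by a zero block of suitable length does not disturb the rank-one property of the hull. The verification that the trivial code is ACD and that $n$ can be any integer $\geq 1$ is routine, so the proof should be short, just a direct combination of Theorem \ref{existence of 1 rank for geq 3} with the product construction.
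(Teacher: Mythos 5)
Your proof is correct, but it takes a slightly different route from the paper. The paper also reduces to the length-one case of Theorem \ref{existence of 1 rank for geq 3}, but instead of padding with zeros it takes the repetition code $C=\langle (x,x,\dots,x)\rangle$ and observes that $\chi_{(x,\dots,x)}((x,\dots,x))=\chi_x(x)^n=1$, so $Hull(C)=C$ has rank one directly. Your construction, $\langle x\rangle\times\{\mathbf{0}\}=\langle (x,0,\dots,0)\rangle$, is a legitimate application of the product theorem (the zero code is indeed ACD, since its hull is trivially $\{\mathbf{0}\}$), so it establishes existence just as well; the only degenerate point is $n=1$, where your auxiliary code $D$ would have length $0$, but that case is exactly Theorem \ref{existence of 1 rank for geq 3} itself. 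The practical difference is in the quality of the code produced: the paper's repetition construction yields an $[n,p^1,n]$ one-rank hull code of full minimum distance $n$, which is reused later to prove $d_1[n,1]_{p^e,M}=n$, whereas your padded code has minimum distance $1$. For the bare existence statement both arguments suffice.
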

\begin{proof}
From Theorem \ref{existence of 1 rank for geq 3}, we have $x \in\mathbb{F}^*_{p^e}$, such that $\chi_{x}(x)=1$. Let $C=\left\langle (x,x,\dots,x)\right\rangle $ be an additive code of length $n$ then $Hull(C)=C$.  Hence, $C$ is a one-rank hull code of length $n$  over $\mathbb{F}_{p^e},\  (e\geq3).$
\end{proof}
\begin{theorem}\label{one-rank hull over p^2}
	Let  $M$ be a duality over $\mathbb{F}_{p^2},\  (p\neq 2) $ with the duality matrix $D= \begin{pmatrix}
		a&b\\d&c
	\end{pmatrix}$. Then  a one-rank hull code of length one over $M$  exists if and only if $(b+d)^2-4ac$ is a quadratic residue modulo $p$.
\end{theorem}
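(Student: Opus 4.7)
The plan is to reduce the statement, via Lemma \ref{selforthogonal}, to the solvability of a binary quadratic form $\mathcal{Q}(a_1,a_2)=0$ over $\mathbb{F}_p$, and then analyze it by the discriminant, using that $p$ is odd so that the quadratic formula is available.

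First I would unpack what a one-rank hull code of length $1$ means. Any additive code $C$ of length one over $\mathbb{F}_{p^2}$ that has nontrivial hull must contain a nonzero $x \in \mathbb{F}_{p^2}^*$ with $\chi_x(x)=1$, and conversely $C=\langle x \rangle$ for such an $x$ is a one-rank hull code. Writing $x = a_1 x_1 + a_2 x_2$ and using the duality matrix $D=\begin{pmatrix} a & b \\ d & c \end{pmatrix}$, Lemma \ref{selforthogonal} turns $\chi_x(x)=1$ into the single equation
\[
\mathcal{Q}(a_1,a_2) \;=\; a\,a_1^2 + (b+d)\,a_1 a_2 + c\,a_2^2 \;\equiv\; 0 \pmod p,
\]
so the theorem reduces to: this form has a nontrivial zero in $\mathbb{F}_p^2$ iff $(b+d)^2-4ac$ is a square in $\mathbb{F}_p$.

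Next I would split into two cases according to whether $a=0$. If $a=0$, then $(a_1,a_2)=(1,0)$ already satisfies $\mathcal{Q}=0$, and on the other side $(b+d)^2-4ac=(b+d)^2$ is automatically a square; so both conditions hold vacuously. If $a\neq 0$, then any nontrivial zero must have $a_2\neq 0$ (otherwise $aa_1^2=0$ forces $a_1=0$). Setting $t=a_1/a_2\in\mathbb{F}_p$, the equation becomes $at^2+(b+d)t+c=0$. Since $p$ is odd, completing the square (or applying the quadratic formula, multiplying through by $4a$) shows this has a root in $\mathbb{F}_p$ iff $(b+d)^2-4ac$ is a square in $\mathbb{F}_p$; and any such root $t$ gives the nontrivial zero $(t,1)$ of $\mathcal{Q}$.

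There is essentially no hard step: the content is the translation through Lemma \ref{selforthogonal}, after which everything is the classical discriminant criterion for a binary quadratic form over $\mathbb{F}_p$. The only place that requires mild care is keeping track of the $a=0$ case and the possibility that the discriminant is zero (a repeated root still produces a nontrivial zero $(t,1)$ of $\mathcal{Q}$), so that the equivalence is maintained under the convention that $0$ counts as a square in $\mathbb{F}_p$.
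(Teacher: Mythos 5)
Your proposal is correct and follows essentially the same route as the paper: both reduce the statement via Lemma \ref{selforthogonal} to the existence of a nontrivial zero of the binary form $ax^2+(b+d)xy+cy^2$ and then apply the discriminant criterion, which is valid since $p$ is odd. If anything, your case split on $a=0$ versus $a\neq 0$ (dehomogenizing by the variable whose leading coefficient is known to be nonzero) is slightly tidier than the paper's, which divides by $2c$ without explicitly excluding $c=0$ in the forward direction.
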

\begin{proof}
	The quadratic  form $ \mathcal{Q} $ associated  with the duality matrix $D$ is $\mathcal{Q}(x,y)=ax^2+(b+d)xy+cy^2$. Suppose there exists  a one-rank hull code of length one over $M$. This implies that $\mathcal{Q}(x,y)=0$ has a non-trivial solution. Let $(x_0,y_0)$ be a non-trivial solution. If $x_0\neq 0$ and $y_0=0$ then $a=0$. If $x_0 = 0$ and $y_0\neq 0$ then $c=0$. In both cases, $(b+d)^2-4ac$ is a quadratic residue modulo $p$. Consider the case when $x_0$  and $y_0$ are non-zero. Now, we know that if $(x_0,y_0)$ is a solution of the equation, then $ (\lambda x_0,\lambda y_0)$ will be a solution for any $\lambda\in\mathbb{F}_p$. Hence, we can scale any solution $(x_0,y_0)$ to $(1,y_0')$  by multiplying $x_0^{-1}$. Thus, let $x_0=1$ then equation $\mathcal{Q}(x_0,y_0)=0$ reduces to one variable quadratic equation and the solution is  $y_0'=\frac{-(b+d)\pm\sqrt{(b+d)^2-4ac}}{2c}$. It follows that $(b+d)^2-4ac$ is a quadratic residue modulo $p$.
	 
	Conversely, suppose that $(b+d)^2-4ac$ is a quadratic residue modulo $p$. If either $ a=0 $ or $c=0$,  then $\mathcal{Q}(x,y)=0$ has a non-trivial solution of the type $(x_0,0)$ or $(0,y_0)$, respectively, for any $x_0,y_0\in \mathbb{F}_p$. Assume $a$ and $c$ are non-zero. Then $(x_0,\frac{-(b+d)\pm\sqrt{(b+d)^2-4ac}}{2c}x_0)$ will be a solution for any $x_0\in \mathbb{F}_p$.  Therefore, from Lemma \ref{selforthogonal}, we have self-orthogonal element $x\in \mathbb{F}_{p^2}$ such that $C=\left\langle x \right\rangle $ is a one-rank hull code of length one with respect to duality $M$ over $\mathbb{F}_{p^2}$.
\end{proof}
\begin{remark}
	Over $\mathbb{F}_4$, there are two dualities such that $\chi_x(x)=-1$ for all $x\in\mathbb{F}^*_4$. Therefore, one-rank hull codes of length one do not exist for theses dualities . 
\end{remark}
Warning's second theorem states that for any non-zero homogeneous polynomial $f$ of degree $d$ in $e$ variables over $\mathbb{F}_p$, the number of solutions $ (a_1,a_2,\dots,a_e)\in\mathbb{F}_p^e$ that satisfies $f(a_1,a_2,\dots,a_e)=0$ are at least $p^{e-d}$.
\begin{theorem}\label{existence of self-orthogonal}
	There exists at least $p^{e-2}$ self-orthogonal elements for any duality over $\mathbb{F}_{p^e}$.
\end{theorem}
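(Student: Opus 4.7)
The plan is to apply Warning's second theorem (just quoted in the excerpt) directly to the quadratic form $\mathcal{Q}$ associated with the duality, and then translate the count of zeros of $\mathcal{Q}$ into a count of self-orthogonal elements using Lemma \ref{selforthogonal}.

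First I would fix a duality $M$ over $\mathbb{F}_{p^e}$ with duality matrix $D$, and consider the associated quadratic form $\mathcal{Q}(a_1,\dots,a_e)=\sum_{i,j=1}^{e}k_{ij}a_ia_j$, which is a homogeneous polynomial of degree $d=2$ in $e$ variables over $\mathbb{F}_p$. By Lemma \ref{selforthogonal}, the set of self-orthogonal elements of $\mathbb{F}_{p^e}$ (under the fixed generators $x_1,\dots,x_e$) is in bijection with the zero set $V(\mathcal{Q})=\{\textbf{a}\in\mathbb{F}_p^e\mid \mathcal{Q}(\textbf{a})=0\}$ via $(a_1,\dots,a_e)\mapsto \sum_{i=1}^{e}a_i x_i$. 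Hence it suffices to lower-bound $|V(\mathcal{Q})|$ by $p^{e-2}$.

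Next I would split into two cases depending on whether $\mathcal{Q}$ is the zero polynomial. If $\mathcal{Q}\equiv 0$ (which happens, for instance, when $M$ is skew-symmetric, since then $\chi_x(x)=1$ for every $x$), then every element of $\mathbb{F}_{p^e}$ is self-orthogonal, so the count is $p^e\geq p^{e-2}$, and we are done. If $\mathcal{Q}$ is a non-zero homogeneous polynomial of degree $2$ in $e$ variables, then Warning's second theorem, exactly as recalled immediately before the theorem statement, yields $|V(\mathcal{Q})|\geq p^{e-d}=p^{e-2}$.

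Combining the two cases gives at least $p^{e-2}$ self-orthogonal elements for any duality $M$ over $\mathbb{F}_{p^e}$, completing the argument. The one subtlety worth flagging is making sure the case $\mathcal{Q}\equiv 0$ is addressed (since Warning's theorem is typically stated for non-zero polynomials); beyond that, the proof is just the direct translation between zeros of $\mathcal{Q}$ and self-orthogonal elements, so no real obstacle is expected.
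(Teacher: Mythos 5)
Your proposal is correct and follows essentially the same route as the paper: apply Warning's second theorem to the quadratic form $\mathcal{Q}$ attached to the duality matrix and translate its zeros into self-orthogonal elements via Lemma \ref{selforthogonal}. Your extra case distinction for $\mathcal{Q}\equiv 0$ (e.g.\ skew-symmetric dualities) is a point the paper's proof silently skips, and handling it explicitly is a genuine, if minor, improvement since Warning's theorem as stated applies only to non-zero polynomials.
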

\begin{proof}
	Let $ M $ be a duality over $\mathbb{F}_{p^e}$  with duality matrix $D$, and $\mathcal{Q}$ be a quadratic form of $M$. Applying Warning's second theorem to the quadratic form $\mathcal{Q}(\textbf{a}) = \textbf{a} D\textbf{a}^T$, implies that $\mathcal{Q}(\textbf{a}) = 0$ has at least $p^{e-2}$ solutions. Therefore, by Lemma \ref{selforthogonal}, we have at least $p^{e-2}$ self-orthogonal elements with respect to $M$ over $\mathbb{F}_{p^e}$.
\end{proof}
The next theorem gives the precise count of self-orthogonal elements with respect to all dualities over $\mathbb{F}_{p^e}$ for odd primes.
\begin{theorem}
	Let $M$ be any duality over $\mathbb{F}_{p^e}\ (p\neq 2)$ with duality matrix $D=[d_{ij}]$. Let $D'$ be a matrix of rank $k$ such that $D'=\left[ \frac{d_{ij}+d_{ji}}{2} \right] $. Then, we have the following.
	\begin{enumerate}
		\item If $k$ is odd, then there are exactly $p^{e-1}$ self-orthogonal elements corresponding to the duality $M$.
		\item If $k$ is even, then there are exactly $p^{e-k}\left(p^{k-1}+(p-1)p^{\frac{k-2}{2}}\eta\left( (-1)^{k/2}\Delta\right)  \right) $ self-orthogonal elements corresponding to the duality $M$, where $\eta$ is the quadratic character of $\mathbb{F}_p$ and $\Delta=$ non-zero minor of order $k$ of the equivalent diagonal form of $D'$.
	\end{enumerate}
	\end{theorem}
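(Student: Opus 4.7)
The plan is to reduce the problem to counting zeros of a quadratic form and then invoke the classical zero-counting formula for non-degenerate diagonal quadratic forms over $\mathbb{F}_p$.

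First, I would apply Lemma \ref{selforthogonal}: the number of self-orthogonal elements $x = \sum a_i x_i$ equals the number of $\mathbf{a}=(a_1,\dots,a_e) \in \mathbb{F}_p^e$ with $\mathcal{Q}(\mathbf{a}) = \mathbf{a} D \mathbf{a}^T = 0$. Since $p \neq 2$, I can symmetrize and rewrite this as $\mathbf{a} D' \mathbf{a}^T = 0$, where $D'$ is symmetric of rank $k$. Next, I would diagonalize: there exists an invertible $P \in GL_e(\mathbb{F}_p)$ with $P D' P^T = \mathrm{diag}(b_1,\dots,b_k,0,\dots,0)$ and all $b_i \neq 0$. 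The substitution $\mathbf{b} = \mathbf{a} P^T$ is a bijection of $\mathbb{F}_p^e$ that preserves the zero-set, turning the equation into
\[
b_1 y_1^2 + b_2 y_2^2 + \cdots + b_k y_k^2 = 0,
\]
in which $y_{k+1},\dots,y_e$ are completely free. Thus the total count factors as $p^{e-k} \cdot N_k$, where $N_k$ is the number of solutions of the above non-degenerate diagonal equation in $\mathbb{F}_p^k$.

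The key step is to quote the standard formula for $N_k$ (Lidl--Niederreiter, \textit{Finite Fields}, Theorem~6.26--6.27): for the non-degenerate diagonal quadratic form with coefficients $b_1,\dots,b_k$ and discriminant $\Delta = b_1 b_2 \cdots b_k$,
\[
N_k = \begin{cases} p^{k-1}, & k \text{ odd},\\ p^{k-1} + (p-1)\, p^{(k-2)/2}\, \eta\bigl((-1)^{k/2}\Delta\bigr), & k \text{ even}. \end{cases}
\]
Substituting this into $p^{e-k} \cdot N_k$ immediately yields the two claimed formulas. The odd case is especially clean, since the character term vanishes.

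The main technical point to address is that $\Delta$ as defined in the statement is a $k\times k$ non-zero minor of the diagonalized form, whereas the natural invariant of $D'$ is its determinant only up to square factors (the change of basis $P D' P^T$ multiplies the relevant minor by $(\det P)^2$). I would resolve this by observing that $\eta$ is the Legendre symbol, hence trivial on squares of $\mathbb{F}_p^*$, so $\eta((-1)^{k/2}\Delta)$ is well-defined independently of the choice of diagonalization. This invariance, together with the reduction and the cited formula, completes the argument. The main obstacle is therefore not the counting itself but carefully justifying that the evaluation of $\eta$ is intrinsic to the duality $M$ and does not depend on the particular diagonalization used.
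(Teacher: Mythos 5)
Your proposal is correct and takes essentially the same route as the paper's proof: reduce via Lemma~\ref{selforthogonal} to counting zeros of the symmetrized form $\mathbf{a}D'\mathbf{a}^T$, diagonalize, factor out $p^{e-k}$ for the free variables, and apply the Lidl--Niederreiter counting theorems for non-degenerate diagonal quadratic forms. Your additional remark that $\eta\left((-1)^{k/2}\Delta\right)$ is independent of the chosen diagonalization (since $\eta$ is trivial on nonzero squares) is a point the paper leaves implicit but does not change the argument.
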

\begin{proof}
	Let $\mathcal{Q}$ be a quadratic form associated with the matrix $D'$. Then we have $\mathcal{Q}(\textbf{a})=\textbf{a}D'\textbf{a}^T=\textbf{a}D\textbf{a}^T$, for all $\textbf{a}\in \mathbb{F}^e_p$. The duality matrix $D$ is non-singular but not necessarily $D'$. Let $k$ be the rank of the matrix $D'$. From \cite[Theorem 6.21]{finitefields}, we know that $\mathcal{Q}$ is equivalent to some diagonal quadratic form, say $\mathcal{Q}'$. Now, the rank of the quadratic form $\mathcal{Q}'$ is $k$. Let $D''$  be the matrix after deleting   $e-k$ zero rows and columns from the coefficient matrix of $\mathcal{Q}'$. Define a non-degenerate quadratic form $\mathcal{Q}''$ with the coefficient matrix $D''$. We know that  the number of solutions of the equation $\mathcal{Q}(\textbf{a})=0$, is equal to $p^{e-k}$ times the number of solutions of the non-degenerate quadratic form $\mathcal{Q}''$ over $\mathbb{F}^k_p$.
	\begin{enumerate}
		\item If $k$ is odd, then by \cite[Theorem 6.27]{finitefields}, we have  exactly $p^{e-k}p^{k-1}=p^{e-1}$ number of solutions of the equation $\mathcal{Q}(\textbf{a})=0$ in $\mathbb{F}^e_p$. 
		\item If $k$ is even, then  by \cite[Theorem 6.26]{finitefields}, we have exactly $p^{e-k}\left(p^{k-1}+(p-1)p^{\frac{k-2}{2}}\eta\left( (-1)^{k/2}\Delta\right) \right) $ number of solutions of the equation $\mathcal{Q}(\textbf{a})=0$ in $\mathbb{F}^e_p$, where $\eta$ is the quadratic character of $\mathbb{F}_p$ and $\Delta=\det D''$.
	\end{enumerate}
	It is clear from Lemma \ref{selforthogonal} that every solution of the equation $\mathcal{Q}(\textbf{a})=0$ gives a self-orthogonal element corresponding to the duality $M$. Hence, the result follows.
\end{proof}
\begin{remark}
	If $M$ is a symmetric duality over $\mathbb{F}_{p^e}\ (p\neq 2) $ then $D'=D$. The duality matrix $D$ is the non singular matrix of rank $e$. If $e$ is odd, then there are exactly $p^{e-1}$ self-orthogonal elements corresponding to the duality $M$. If $e$ is even, then there are exactly $p^{e-1}+(p-1)p^{\frac{e-2}{2}}\eta\left( (-1)^{e/2}\Delta\right)$ self-orthogonal elements corresponding to the duality $M$, where $\eta$ is the quadratic character of $\mathbb{F}_p$ and $\Delta=\det D$. 
\end{remark}
\begin{exmp}
	Let $\mathbb{F}_{3^3}=\left\langle \nu_1,\nu_2,\nu_3\right\rangle $ be a finite field and $M$ be a duality over  $\mathbb{F}_{3^3}$ with duality matrix $D=\begin{pmatrix}
		1&2&0\\
		0&1&2\\
		1&0&1
	\end{pmatrix}$. Then the symmetric matrix $D'=\begin{pmatrix}
	1&1&2\\1&1&1\\2&1&1
	\end{pmatrix}$ such that the quadratic form $\mathcal{Q}(\textbf{a})=\textbf{a}D'\textbf{a}^T=\textbf{a}D\textbf{a}^T$, for all $\textbf{a}\in \mathbb{F}^3_3$. The rank of the matrix $D'$ is 3 (odd). One can observe that there are nine self-orthogonal elements, namely, 0, $2\nu_1+\nu_2,\  2\nu_2+\nu_3,\ 2\nu_1+\nu_2+2\nu_3,\ 2\nu_1+2\nu_3,\ \nu_1+2\nu_2,\ \nu_2+2\nu_3,\ \nu_1+2\nu_2+\nu_3,\ \nu_1+\nu_3$.
\end{exmp}
\begin{exmp}
	Let $M$ be a duality over $\mathbb{F}_{3^3}=\left\langle\nu_1,\nu_2,\nu_3 \right\rangle $ with duality matrix $D=\begin{pmatrix}
		2&1&1\\0&1&1\\1&0&1
	\end{pmatrix}$. Then symmetric matrix $D'=\begin{pmatrix}
	2&2&1\\2&1&2\\1&2&1
	\end{pmatrix}$ such that the quadratic form $\mathcal{Q}(\textbf{a})=\textbf{a}D'\textbf{a}^T=\textbf{a}D\textbf{a}^T$, for all $\textbf{a}\in \mathbb{F}^3_3$. The rank of the matrix $D'$ is $2$. We know that $\mathcal{Q}$ will be equivalent to a diagonal form; therefore, choose $C=\begin{pmatrix}
	1&0&0\\0&1&1\\2&0&1
\end{pmatrix}$ such that 
\begin{align*}
	C^TD'C=&\begin{pmatrix}
		1&0&2\\0&1&0\\0&1&1
	\end{pmatrix}\begin{pmatrix}
	2&2&1\\2&1&2\\1&2&1
	\end{pmatrix}\begin{pmatrix}
	1&0&0\\0&1&1\\2&0&1
	\end{pmatrix}\\
	=&\begin{pmatrix}
		1&0&0\\0&1&0\\0&0&0
	\end{pmatrix}.
\end{align*}
This implies that, $D''=\begin{pmatrix}
	1&0\\0&1
\end{pmatrix}$. Hence, there are $3(3+2\eta(-1))=3$ self-orthogonal elements, namely, 0, $\nu_2+\nu_3$ and $2\nu_2+2\nu_3$.
\end{exmp}
\section{Constructions of small rank hull codes}\label{Constructions}
The aim of this section is to give some construction techniques to construct small rank hull codes like ACD  and one-rank hull codes. We use ACD  and self-orthogonal codes to construct one-rank hull codes of different parameters.
\begin{theorem}
	Let $\mathbb{F}_{p^e}=\left\langle x_1,x_2,\dots,x_e \right\rangle $ be a finite field, and $M$ be any duality over $\mathbb{F}_{p^e}$. Then the additive code $C=\left\langle \bm{v_1},\bm{v_2},\dots, \bm{v_e}   \right\rangle $ is an  ACD code of length $n\ (\neq mp)$, where $\bm{v_i}=(x_i,x_i,\dots,x_i)$ for all $1\leq i\leq e$.
\end{theorem}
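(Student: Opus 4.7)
The plan is to apply Theorem \ref{rankhull} directly to the generator matrix $\mathcal{G}$ whose rows are $\bm{v_1},\dots,\bm{v_e}$. Showing $C$ is ACD amounts to proving $\mathrm{rank}\bigl(\log_\xi(\mathcal{G}\odot_M \mathcal{G}^T)\bigr) = e$, which (together with confirming that the $p$-rank of $\mathcal{G}$ is indeed $e$) will force $\mathrm{rank}(Hull(C))=0$.

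First I would observe that $\mathcal{G}$ has $p$-rank equal to $e$: any $\mathbb{F}_p$-linear dependence among the rows $\bm{v_1},\dots,\bm{v_e}$ would reduce to an $\mathbb{F}_p$-linear dependence among $x_1,\dots,x_e$, contradicting the fact that these are chosen as order-$p$ generators of the additive group of $\mathbb{F}_{p^e}$. Next I compute the $(i,j)$-entry of $\mathcal{G}\odot_M \mathcal{G}^T$ using multiplicativity of characters across coordinates:
\[
(\mathcal{G}\odot_M \mathcal{G}^T)_{ij}=\chi_{\bm{v_i}}(\bm{v_j})=\prod_{t=1}^{n}\chi_{x_i}(x_j)=\chi_{x_i}(x_j)^{n}=\xi^{n k_{ij}}.
\]
Taking $\log_\xi$ componentwise modulo $p$ therefore yields the matrix identity
\[
\log_\xi(\mathcal{G}\odot_M \mathcal{G}^T)\equiv nD \pmod p,
\]
where $D=[k_{ij}]$ is the duality matrix of $M$.

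The next step is the key rank computation. The duality matrix $D$ encodes the isomorphism $\phi\colon \mathbb{F}_{p^e}\to \widehat{\mathbb{F}_{p^e}}$ used to define $M$; since $\phi$ is an isomorphism of groups of the same order, the induced $\mathbb{F}_p$-linear map on $\mathbb{F}_p^e$ with matrix $D$ is invertible, so $D$ has rank $e$ over $\mathbb{F}_p$. The hypothesis $n\neq mp$ means $n\not\equiv 0\pmod p$, so $n$ is a unit in $\mathbb{F}_p$ and multiplying $D$ by $n$ preserves its rank. Hence $\mathrm{rank}(nD)=e$.

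Finally, applying Theorem \ref{rankhull} with $k=e$ gives $\mathrm{rank}(Hull(C))=e-e=0$, which means $Hull(C)=\{\mathbf{0}\}$ and $C$ is ACD. The only subtle point — and the place I would pay closest attention — is the justification that $D$ is non-singular over $\mathbb{F}_p$; I would make this explicit by remarking that if $D$ had a nontrivial kernel vector $(a_1,\dots,a_e)\in \mathbb{F}_p^e$, then $x=\sum a_i x_i$ would be a nonzero element with $\chi_x(x_j)=1$ for all $j$, forcing $\chi_x\equiv 1$ and contradicting the injectivity of $\phi$. Everything else is routine bookkeeping modulo $p$.
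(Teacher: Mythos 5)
Your proposal is correct and follows essentially the same route as the paper: the paper unwinds the hull condition directly into the linear system $D^{T}\mathbf{n}\equiv\mathbf{0}\pmod p$ and uses $n\not\equiv 0\pmod p$ together with the non-singularity of the duality matrix $D$, which is exactly the computation you package via Theorem \ref{rankhull} and the identity $\log_\xi(\mathcal{G}\odot_M\mathcal{G}^T)\equiv nD\pmod p$. The one point where you go beyond the paper is that you actually justify the non-singularity of $D$ (via the injectivity of $\phi$), whereas the paper merely asserts it; that is a welcome addition, not a divergence.
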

\begin{proof}
	Let $\textbf{u}\in C\cap C^M$, then $\textbf{u}=\sum_{i=1}^{e}n_i\bm{v_i}$ for some $n_i\in\mathbb{F}_p$ and $ \chi_{\textbf{u}}(\bm{v_j})=1$ for all $1\leq j\leq e$. It follows that $\prod_{i=1}^{e}(\chi_{\bm{v_i}}(\bm{v_j}))^{n_i}=\prod_{i=1}^{e}(\chi_{x_i}(x_j))^{nn_i}=\prod_{i=1}^{e}\xi^{k_{ij}n_in}=1$, where $\chi_{x_i}(x_j)=\xi^{k_{ij}}$ for any primitive $p$-th root of unity $\xi$ and $0 \leq k_{ij}\leq p-1$. This implies that $n\sum_{i=1}^{e}k_{ij}n_i\equiv 0 \mod p$ for all $1\leq j\leq e$. Since $n\not \equiv 0\mod p$ then $\sum_{i=1}^{e}k_{ij}n_i\equiv 0 \mod p$ for all $1\leq j\leq e$.  Hence, we have the  following system of equations
	$$\begin{pmatrix}
		k_{11}&k_{21}&\dots&k_{e1}\\
		k_{12}&k_{22}&\dots&k_{e2}\\
		\vdots&\vdots&\ddots&\vdots\\
		k_{1e}&k_{2e}&\dots&k_{ee}\\
	\end{pmatrix}
	\begin{pmatrix}
		n_1\\
		n_2\\
		\vdots\\
		n_e
	\end{pmatrix}\equiv 	\begin{pmatrix}
	0\\
	0\\
	\vdots\\
	0
	\end{pmatrix}\mod p
	$$ 
	In particular, we have
	$$D^T\begin{pmatrix}
		n_1\\
		n_2\\
		\vdots\\
		n_e
	\end{pmatrix}\equiv 	\begin{pmatrix}
	0\\
	0\\
	\vdots\\
	0
	\end{pmatrix}\mod p,
	$$ where $D$ is a duality matrix for the duality $M$. Note that the duality matrix is a non-singular matrix over $\mathbb{F}_p$. Therefore, this system of equations has only a trivial solution. Thus, the code $C$ is an ACD code.
\end{proof}
\begin{remark}
	In the above theorem, if $n=mp$, then $C$ is self-orthogonal code. Also note that $C$ is optimal ACD or self-orthogonal code with parameter $[n,p^e,n]$.
\end{remark}
\begin{exmp}
	Let $\mathbb{F}_{3^2}=\left\langle 1,\nu \right\rangle $. Then $C=\left\langle (1,1),(\nu,\nu)\right\rangle=\{(0,0),(1,1),(2,2),(\nu,\nu),(2\nu,2\nu),(1+\nu,1+\nu),(2+\nu,2+\nu),(1+2\nu,1+2\nu),(2+2\nu,2+2\nu)\}$ is an ACD code with parameter $[2,3^2,2]$ over $\mathbb{F}_{3^2}$ .
\end{exmp}
\begin{theorem}\label{ACD_from_selforthogonal}
	Let $\mathcal{G}$ be a generator matrix of  self-orthogonal $[n,p^k,d]$ additive code $C$ over $\mathbb{F}_{p^e}$. Let $x\in\mathbb{F}_{p^e}$ such that $\chi_x(x)\neq 1$. Then the code generated by the matrix $\mathcal{G}'=[xI|\mathcal{G}]$ is an $[n+k,p^k,d'>d]$ ACD code.
\end{theorem}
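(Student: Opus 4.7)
The plan is to verify the three claims in turn: (i) the $p$-rank of $\mathcal{G}'$ is $k$, (ii) the minimum distance satisfies $d'\geq d+1$, and (iii) the hull of the code $C'=\langle \mathcal{G}'\rangle$ is trivial. Throughout, I will use the hypothesis $\chi_x(x)\neq 1$ (which in particular forces $x\neq 0$, since $\chi_0(0)=1$) and the self-orthogonality of $C$ (which gives $\chi_{\mathcal{G}_i}(\mathcal{G}_j)=1$ for all $i,j$).

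For the rank and distance claims, I would argue directly. Any $\mathbb{F}_p$-linear combination $\textbf{u}\mathcal{G}'$ with $\textbf{u}\in\mathbb{F}_p^k$ splits as $(x\textbf{u}\,|\,\textbf{u}\mathcal{G})$. If $\textbf{u}\neq \textbf{0}$, then since $\mathcal{G}$ has $p$-rank $k$ the vector $\textbf{u}\mathcal{G}$ is a nonzero codeword of $C$ and hence has Hamming weight at least $d$; independently, because $x\neq 0$ and $\textbf{u}\in\mathbb{F}_p^k$ with $u_i\neq 0\Rightarrow xu_i\neq 0$, the prefix $x\textbf{u}$ has weight equal to $wt(\textbf{u})\geq 1$. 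Adding these two disjoint supports yields $wt(\textbf{u}\mathcal{G}')\geq d+1$. The same argument shows $\textbf{u}\mathcal{G}'=\textbf{0}\Rightarrow \textbf{u}=\textbf{0}$, so $\mathcal{G}'$ has $p$-rank $k$ and $|C'|=p^k$.

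For the ACD property, I would invoke Theorem \ref{rankhull}: it suffices to show $\log_\xi(\mathcal{G}'\odot_M (\mathcal{G}')^T)$ has rank $k$ over $\mathbb{F}_p$. Row $i$ of $\mathcal{G}'$ is the concatenation $(x\textbf{e}_i,\mathcal{G}_i)$, and since the character on a product of components splits multiplicatively,
\begin{equation*}
\chi_{\mathcal{G}'_i}(\mathcal{G}'_j)=\chi_{x\textbf{e}_i}(x\textbf{e}_j)\cdot \chi_{\mathcal{G}_i}(\mathcal{G}_j).
\end{equation*}
The second factor is $1$ by self-orthogonality of $C$. For the first factor, every position of the prefix contributes $\chi_0(0)=\chi_x(0)=\chi_0(x)=1$ except when $i=j$ and the position is $i$, in which case we pick up $\chi_x(x)$. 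Thus $\mathcal{G}'\odot_M(\mathcal{G}')^T=\chi_x(x)\,I+\bigl(J-I\bigr)\cdot 1$ coincides with $\chi_x(x)$ on the diagonal and $1$ elsewhere. Writing $\chi_x(x)=\xi^t$ with $t\not\equiv 0\pmod p$, we obtain $\log_\xi(\mathcal{G}'\odot_M(\mathcal{G}')^T)\equiv tI\pmod p$, which has full rank $k$. By Theorem \ref{rankhull}, $rank(Hull(C'))=k-k=0$, so $C'$ is ACD.

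The only step requiring genuine care is the character computation for the concatenated rows: one must simultaneously exploit that off-diagonal prefix blocks vanish (because $\textbf{e}_i$ and $\textbf{e}_j$ have disjoint support for $i\neq j$) and that the $\mathcal{G}$-block contributes trivially (because $C$ is self-orthogonal). Everything else—rank, distance, and the final invocation of Theorem \ref{rankhull}—is bookkeeping.
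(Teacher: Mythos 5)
Your proposal is correct and follows essentially the same route as the paper: compute $\mathcal{G}'\odot_M(\mathcal{G}')^T$, use self-orthogonality to reduce it to $\chi_x(x)$ on the diagonal and $1$ off the diagonal, so that $\log_\xi(\mathcal{G}'\odot_M(\mathcal{G}')^T)=\alpha I$ has full rank $k$, and invoke Theorem \ref{rankhull}. Your explicit verification of the $p$-rank and of $d'\geq d+1$ via the disjoint-support decomposition $(x\textbf{u}\,|\,\textbf{u}\mathcal{G})$ is a welcome addition that the paper merely asserts.
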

\begin{proof}
	Let $C'$ be an additive code generated by matrix $\mathcal{G}'=[xI|\mathcal{G}]$. Then the matrix $$\mathcal{G}'\odot_M\mathcal{G}'^T=\begin{pmatrix}
		\chi_x(x)\chi_{\mathcal{G}_1}(\mathcal{G}_1)&\chi_{\mathcal{G}_1}(\mathcal{G}_2)&\dots &\chi_{\mathcal{G}_1}(\mathcal{G}_k)\\
		\chi_{\mathcal{G}_2}(\mathcal{G}_1)&\chi_x(x)\chi_{\mathcal{G}_2}(\mathcal{G}_2)&\dots &\chi_{\mathcal{G}_2}(\mathcal{G}_k)\\
		\vdots&\vdots&\ddots&\vdots\\
		\chi_{\mathcal{G}_k}(\mathcal{G}_1)&\chi_{\mathcal{G}_k}(\mathcal{G}_2)&\dots &\chi_x(x)\chi_{\mathcal{G}_k}(\mathcal{G}_k)\\
	\end{pmatrix}.$$
	Since $C$ is self-orthogonal code, i.e., $C\subseteq C^M$ then $\chi_{\mathcal{G}_i}(\mathcal{G}_j)=1$ for all $i$ and $j$. Hence, we have $$\log_\xi(\mathcal{G}'\odot_M \mathcal{G}'^T)=\begin{pmatrix}
		\alpha&0&\dots&0\\
		0&\alpha&\dots&0\\
		\vdots&\vdots&\ddots&\vdots\\
		0&0&\dots&\alpha\\
	\end{pmatrix},$$ where $\chi_x(x)=\xi^{\alpha}, \alpha\neq 0$. Hence, by Theorem \ref{rankhull}, $C'$ is an  $[n+k,p^k,d'>d]$ ACD code.
\end{proof}
\begin{exmp}
	Let $C$ be an additive code over $\mathbb{F}_{3^2}$ with  generator matrix $\mathcal{G}=\begin{pmatrix}
		1&1&1&1&\nu\\
		\nu&\nu&2\nu&2\nu&1\\
		2&\nu&1&\nu&0
	\end{pmatrix}$.
	Observe that $C$ is a $[5,3^3,4]$ self-orthogonal additive code under the  symmetric duality $M_1$.  Choose $x=1$, since $\chi_1(1)=\xi\neq 1$. By applying Theorem \ref{ACD_from_selforthogonal}, we construct an $[8,3^3,5]$ ACD code and generator matrix is  $$\left( \begin{array}{ccc|ccccc}
	1&0&0&1&1&1&1&\nu\\
	0&1&0&\nu&\nu&2\nu&2\nu&1\\
	0&0&1&2&\nu&1&\nu&0
	\end{array}\right).$$
\end{exmp}

\begin{theorem}\label{onerank_from_selforthogonal}
	Let $\mathcal{G}$ be a generator matrix of  self-orthogonal $[n,p^k,d]$ additive code $C$. Let $x,y\in\mathbb{F}^*_{p^e}$ such that $\chi_x(x)\neq 1$ and $\chi_y(y)=1$. Then the code generated by the matrix $\mathcal{G}'=
	[diag(\underbrace{x,x,\dots,x}_\text{$k-1$ times},y)|\mathcal{G}]
	$ is an $[n+k,p^k,d'>d]$ one-rank hull code.
\end{theorem}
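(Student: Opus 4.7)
The plan is to mirror the proof of Theorem \ref{ACD_from_selforthogonal}, which is the natural template, with the only structural change being that the last diagonal entry of the prefix is a self-orthogonal element $y$ rather than $x$. The overall strategy is to compute $\mathcal{G}' \odot_M \mathcal{G}'^T$, take its discrete logarithm base $\xi$, verify the rank of the resulting matrix is exactly $k-1$, and then invoke Theorem \ref{rankhull} to conclude $rank(Hull(C'))=1$.

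Concretely, I would first unfold the entry $\chi_{\mathcal{G}'_i}(\mathcal{G}'_j)$ as a product of characters over the prefix coordinates and the suffix coordinates. Since $C$ is self-orthogonal, the suffix part always contributes $\chi_{\mathcal{G}_i}(\mathcal{G}_j) = 1$. For the prefix $diag(x,\dots,x,y)$, the rows have pairwise disjoint support, so off-diagonal entries collapse to $1$ (using that $\chi_0$ is the trivial character and $\chi_a(0)=1$ for any $a$). The diagonal entries are $\chi_x(x) = \xi^\alpha$ for $i \leq k-1$ and $\chi_y(y) = 1$ for $i=k$. This gives
\[
\log_\xi(\mathcal{G}' \odot_M \mathcal{G}'^T) = diag(\alpha,\alpha,\dots,\alpha,0),
\]
with $k-1$ copies of $\alpha \neq 0$ and a single zero. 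Its rank is exactly $k-1$, so Theorem \ref{rankhull} yields $rank(Hull(C')) = k - (k-1) = 1$.

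It then remains to verify the listed parameters. The length $n+k$ is immediate. The $p$-rank of $\mathcal{G}'$ is $k$ because the rows of $\mathcal{G}$ are already $\mathbb{F}_p$-linearly independent, which forces the rows of $\mathcal{G}'$ to be independent as well. For the minimum distance, any nonzero codeword $(n_1,\dots,n_k)\mathcal{G}'$ decomposes into a prefix $(n_1 x,\dots, n_{k-1}x, n_k y)$ and a suffix $\sum_i n_i \mathcal{G}_i$; independence of the rows of $\mathcal{G}$ guarantees the suffix is nonzero and hence has weight at least $d$, while at least one $n_i \neq 0$ forces the prefix to have weight at least $1$. Therefore $d' \geq d+1 > d$.

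The main point at which care is needed is the off-diagonal computation for $\mathcal{G}' \odot_M \mathcal{G}'^T$: one must correctly account for the fact that the disjoint-support prefix rows pair up through $\chi_a(0) = \chi_0(b) = 1$, so that the $y$ in the last slot interacts with the other rows trivially and only the single diagonal position $(k,k)$ records the crucial relation $\chi_y(y)=1$. This is not a deep obstacle, but it is the one step where a mistake would destroy the rank count; once this bookkeeping is in place the conclusion follows directly from Theorem \ref{rankhull}.
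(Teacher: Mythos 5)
Your proposal is correct and follows essentially the same route as the paper: compute $\mathcal{G}'\odot_M\mathcal{G}'^T$, use self-orthogonality of $C$ to reduce every off-diagonal entry to $1$, obtain $\log_\xi(\mathcal{G}'\odot_M\mathcal{G}'^T)=\mathrm{diag}(\alpha,\dots,\alpha,0)$ of rank $k-1$, and conclude via the rank characterization of the hull. Your additional verification of the $p$-rank and of $d'\geq d+1$ is sound and in fact fills in details the paper leaves implicit.
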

\begin{proof}
	Let $C'$ be an additive code generated by matrix $\mathcal{G}'$. Then the matrix $$\mathcal{G}'\odot_M\mathcal{G}'^T=\begin{pmatrix}
		\chi_x(x)\chi_{\mathcal{G}_1}(\mathcal{G}_1)&\chi_{\mathcal{G}_1}(\mathcal{G}_2)&\dots &\chi_{\mathcal{G}_1}(\mathcal{G}_k)\\
		\chi_{\mathcal{G}_2}(\mathcal{G}_1)&\chi_x(x)\chi_{\mathcal{G}_2}(\mathcal{G}_2)&\dots &\chi_{\mathcal{G}_2}(\mathcal{G}_k)\\
		\vdots&\vdots&\ddots&\vdots\\
		\chi_{\mathcal{G}_k}(\mathcal{G}_1)&\chi_{\mathcal{G}_k}(\mathcal{G}_2)&\dots &\chi_y(y)\chi_{\mathcal{G}_k}(\mathcal{G}_k)\\
	\end{pmatrix}$$
	Since $C$ is self-orthogonal code, i.e., $C\subseteq C^M$ then $\chi_{\mathcal{G}_i}(\mathcal{G}_j)=1$ for all $i$ and $j$. Hence we have $$\log_\xi(\mathcal{G}'\odot_M \mathcal{G}'^T)=\begin{pmatrix}
		\alpha&0&\dots&0&0\\
		0&\alpha&\dots&0&0\\
		\vdots&\vdots&\ddots&\vdots&\vdots\\
		0&0&\dots&\alpha&0\\
		0&0&\dots&0&0\\
	\end{pmatrix},$$ where $\chi_x(x)=\xi^{\alpha}, \alpha\neq 0$ and $\chi_y(y)=1$. Hence, by Corollary \ref{one-rank hull}, $C'$ is a one-rank hull code.
\end{proof}
\begin{exmp}
We start  with a $[5,3^3,4]$ self-orthogonal additive code under the duality $M_1$.  Choose $x=1$ and $y=1+\nu$, since $\chi_1(1)=\xi\neq 1$ and $\chi_{1+\nu}(1+\nu)=1$. By applying Theorem \ref{onerank_from_selforthogonal}, we construct an $[8,3^3,5]$ one-rank hull code and generator matrix is  $$\left( \begin{array}{ccc|ccccc}
	1&0&0&1&1&1&1&\nu\\
	0&1&0&\nu&\nu&2\nu&2\nu&1\\
	0&0&1+\nu&2&\nu&1&\nu&0
\end{array}\right).$$
\end{exmp}

 \begin{theorem}
 	Let $C$ be an additive code over $\mathbb{F}_{p^{2e}}$ and $M$ be any skew-symmetric duality. Let $\mathcal{G}$ be an $(2s+1)\times n$ generator matrix of $C$ such that $\chi_{\mathcal{G}_i}(\mathcal{G}_j)\neq 1 $, when $i$ and $j$ are consecutive integers; otherwise 1, where $\mathcal{G}_i$ denotes the $i$-th row of the matrix $\mathcal{G}$. Then, $C$ is a one-rank hull code. 
 	\end{theorem}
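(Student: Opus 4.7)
The plan is to apply Corollary \ref{one-rank hull}: it suffices to show that the $(2s+1)\times(2s+1)$ matrix $L := \log_\xi(\mathcal{G}\odot_M \mathcal{G}^T)$, viewed over $\mathbb{F}_p$, has rank exactly $k-1=2s$.

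First I would extract the shape of $L$. Since $M$ is skew-symmetric, $\chi_{\mathcal{G}_i}(\mathcal{G}_i)=1$ forces the diagonal of $L$ to vanish, and $\chi_{\mathcal{G}_j}(\mathcal{G}_i)=\chi_{\mathcal{G}_i}(\mathcal{G}_j)^{-1}$ translates to $L_{ji}\equiv -L_{ij}\pmod p$; thus $L$ is alternating. The hypothesis that $\chi_{\mathcal{G}_i}(\mathcal{G}_j)\neq 1$ precisely when $|i-j|=1$ then makes $L$ tridiagonal, with nonzero super-diagonal entries $a_1,\ldots,a_{2s}\in\mathbb{F}_p^{*}$ and sub-diagonal entries $-a_1,\ldots,-a_{2s}$; every other entry is zero.

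Next I would determine the rank from the null space. Because $L$ is alternating, its rank is automatically even (the same fact invoked in the proof of Theorem \ref{skew-symmetric k odd}), so $rank(L)\leq 2s$. To show equality, I solve $Lv=0$ row by row: row $1$ gives $a_1 v_2=0$, forcing $v_2=0$; row $2s+1$ gives $-a_{2s}v_{2s}=0$, which will turn out to be automatic; and each interior row $i$ yields the two-step recursion $v_{i+1}=(a_{i-1}/a_i)\,v_{i-1}$. A short induction then shows $v_{2j}=0$ for every $j$, while each odd-indexed $v_{2j+1}$ is a fixed nonzero scalar multiple of $v_1$. Hence $\ker L$ is one-dimensional, $rank(L)=2s=k-1$, and Corollary \ref{one-rank hull} delivers $rank(Hull(C))=1$.

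The only delicate point is to verify that the recursion is compatible with both boundary rows at once: the ``even indices vanish, odd indices are scalar multiples of $v_1$'' pattern meets the last row harmlessly precisely because $k=2s+1$ is odd — exactly mirroring the necessity recorded in Theorem \ref{skew-symmetric k odd}. Were $k$ even, the analogous computation would overdetermine the system and force $v_1=0$, collapsing the null space and spoiling the one-rank property, so the odd-order hypothesis is essential at this single step.
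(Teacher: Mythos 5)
Your proposal is correct and follows essentially the same route as the paper: both reduce the claim via Corollary \ref{one-rank hull} to showing that the alternating tridiagonal matrix $\log_\xi(\mathcal{G}\odot_M\mathcal{G}^T)$ has rank $2s$. The paper simply asserts this rank by observation, whereas you justify it with the explicit null-space recursion showing the kernel is one-dimensional, which is a welcome (and correct) filling-in of that step.
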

 	\begin{proof}
 		Since $M$ is a skew-symmetric duality then $\chi_{\mathcal{G}_i}(\mathcal{G}_i)=1$ and $\chi_{\mathcal{G}_j}(\mathcal{G}_i)=(\chi_{\mathcal{G}_i}(\mathcal{G}_j))^{-1}$ for all $i$ and $j$. Let $\chi_{\mathcal{G}_i}(\mathcal{G}_{(i+1)})=\xi_i$, where $\xi_i=\xi^{\alpha_i}$ and $\alpha_i\neq0$, for all $1\leq i\leq 2s$ and for some $\xi$ primitive p-th root of unity.
 		According to the theorem, we have 
 		 $$\mathcal{G}\odot_M\mathcal{G}^T=\begin{pmatrix}
 			1 & \xi_1 & 1&\dots&1&1\\
 			\xi_1^{-1}&1 & \xi_2&\dots &1&1\\
 			1&\xi_2^{-1}&1&\dots&1&1\\
 			\vdots&\vdots&\vdots&\ddots&\vdots&\vdots\\
 			1&1&1&\dots&1&\xi_{2s}\\
 			1&1&1&\dots&\xi_{2s}^{-1}&1
 		\end{pmatrix}.$$
 	 It follows that $$\log_\xi(\mathcal{G}\odot_M \mathcal{G}^T)=\begin{pmatrix}
 		0&\alpha_1&0&\dots&0&0\\
 		-\alpha_1&0&\alpha_2&\dots&0&0\\
 		0&-\alpha_2&0&\dots&0&0\\
 		\vdots&\vdots&\vdots&\ddots&\vdots&\vdots\\
 		0&0&0&\dots&0&\alpha_{2s}\\
 		0&0&0&\dots&-\alpha_{2s}&0
 		\end{pmatrix}.$$
 		We observe that the $rank(\log_\xi(\mathcal{G}\odot_M \mathcal{G}^T))=2s$. Hence, from Corollary \ref{one-rank hull}, the additive code generated by $\mathcal{G}$ is  a one-rank hull code.
 	\end{proof}
 	\begin{exmp}
Let $\mathcal{G}=\begin{pmatrix}
	1&1&1&1\\
	\nu&2\nu&2+\nu&0\\
	\nu&1+2\nu&2\nu&\nu
\end{pmatrix}$	be a generator matrix of an additive code $C$ over $\mathbb{F}_{3^2}$. Consider the skew-symmetric duality $M_2$ over $\mathbb{F}_{3^2}$. Then one can deduce that $\mathcal{G}\odot_M\mathcal{G}^T=\begin{pmatrix}

1&\xi&1\\
\xi^2&1&\xi^2\\
1&\xi&1
\end{pmatrix}$ and $\log_\xi(\mathcal{G}\odot_M \mathcal{G}^T)=\begin{pmatrix}
0&1&0\\
2&0&2\\
0&1&0
\end{pmatrix}$. The rank of the matrix $\log_\xi(\mathcal{G}\odot_M \mathcal{G}^T)$ is 2, hence from Corollary \ref{one-rank hull}, $C$ is a one-rank hull code.
 	\end{exmp}
 	\begin{theorem}\label{ACD_to_onerank}
 		Let $\mathcal{G}$ be a $ 2s \times n $ generator matrix of an ACD code $C$ over $\mathbb{F}_{p^{2e}}$ and $M$ be any skew-symmetric duality. Let $\textbf{x}\notin C$ then the additive code generated by a matrix $\mathcal{G}'=\begin{pmatrix}
 			\textbf{x}\\
 			\mathcal{G}
 		\end{pmatrix}$  is q one-rank hull code.
 	\end{theorem}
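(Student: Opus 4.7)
The plan is to apply Corollary \ref{one-rank hull} to $\mathcal{G}'$: it suffices to show that the $(2s{+}1)\times(2s{+}1)$ matrix $L':=\log_\xi(\mathcal{G}'\odot_M\mathcal{G}'^T)$ has rank exactly $(2s{+}1)-1=2s$. First, I would verify that $\mathcal{G}'$ is genuinely a generator matrix of $p$-rank $2s+1$: since $C$ is the $\mathbb{F}_p$-span of the rows of $\mathcal{G}$, the hypothesis $\textbf{x}\notin C$ forces $\textbf{x}$ to be $\mathbb{F}_p$-linearly independent from the rows of $\mathcal{G}$, so $\mathcal{G}'$ has $p$-rank $2s+1$ and the additive code it generates is an $[n,p^{2s+1}]$ code.

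Next, I would write $\mathcal{G}'\odot_M\mathcal{G}'^T$ in block form, with the $(1,1)$ entry equal to $\chi_\textbf{x}(\textbf{x})$, the off-diagonal first-row entries equal to $\chi_\textbf{x}(\mathcal{G}_j)$, the off-diagonal first-column entries equal to $\chi_{\mathcal{G}_j}(\textbf{x})$, and the lower-right $2s\times 2s$ block equal to $\mathcal{G}\odot_M\mathcal{G}^T$. Since $M$ is skew-symmetric, $\chi_\textbf{x}(\textbf{x})=1$ and $\chi_{\mathcal{G}_j}(\textbf{x})=\chi_\textbf{x}(\mathcal{G}_j)^{-1}$ (and similarly inside the block), so after taking the $\log_\xi$ entrywise we obtain a skew-symmetric matrix
\[
L'=\begin{pmatrix} 0 & \textbf{y}\\ -\textbf{y}^T & L\end{pmatrix},
\]
where $L=\log_\xi(\mathcal{G}\odot_M\mathcal{G}^T)$ and $\textbf{y}$ is the row vector whose $j$-th entry is $\log_\xi\chi_\textbf{x}(\mathcal{G}_j)$.

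Now I would combine two facts to pin down the rank. Because $C$ is an ACD code, Theorem \ref{rankhull} gives $\mathrm{rank}(L)=2s-\mathrm{rank}(Hull(C))=2s$, so $L$ is nonsingular. This $2s\times 2s$ nonsingular submatrix of $L'$ forces $\mathrm{rank}(L')\geq 2s$. On the other hand $L'$ is a $(2s+1)\times(2s+1)$ skew-symmetric matrix, hence its rank is even, so the only possibility consistent with the bound $\mathrm{rank}(L')\leq 2s+1$ is $\mathrm{rank}(L')=2s$. By Corollary \ref{one-rank hull}, the code generated by $\mathcal{G}'$ is therefore a one-rank hull code.

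The argument is essentially a parity-of-rank trick, and no step looks hard; the one place to be careful is the bookkeeping that justifies the skew-symmetry of $L'$ (in particular that $\log_\xi\chi_\textbf{x}(\textbf{x})=0$ and that the first row/column are negatives of each other modulo $p$), since this is exactly where the skew-symmetric hypothesis on $M$ is used. Everything else follows formally from Theorem \ref{rankhull} together with the standard fact that a real skew-symmetric matrix (and, by the same argument with $\log_\xi$, its $\mathbb{F}_p$ analogue here) has even rank.
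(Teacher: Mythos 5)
Your proposal is correct and follows essentially the same route as the paper: both write $\log_\xi(\mathcal{G}'\odot_M\mathcal{G}'^T)$ as an odd-order skew-symmetric (alternating) matrix containing the nonsingular $2s\times 2s$ block $\log_\xi(\mathcal{G}\odot_M\mathcal{G}^T)$, and conclude the rank is exactly $2s$ from the even-rank property of such matrices together with Corollary \ref{one-rank hull}. Your extra check that $\textbf{x}\notin C$ gives $\mathcal{G}'$ $p$-rank $2s+1$ is a small point the paper leaves implicit, but the argument is otherwise identical.
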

 	\begin{proof}
 		Let $C'$ be an additive code with generator matrix $\mathcal{G}'=\begin{pmatrix}
 			\textbf{x}\\
 			\mathcal{G}
 		\end{pmatrix}$. Since $M$ is a skew-symmetric duality then $\chi_{\textbf{x}}(\textbf{x})=1$. Let $\chi_{\textbf{x}}(\mathcal{G}_i)=\xi^{\alpha_i}$ then $\chi_{\mathcal{G}_i}(\textbf{x})=\xi^{-\alpha_i}$ for all $i$. Hence, we have $$\mathcal{G}'\odot_M\mathcal{G}'^T=\begin{pmatrix}
 		1& \chi_{\textbf{x}}(\mathcal{G}_1)&\dots&\chi_{\textbf{x}}(\mathcal{G}_{2s})\\
 		
 		\chi_{\mathcal{G}_1}(\textbf{x})& & &\\
 		
 		\vdots& &\mathcal{G}\odot_M\mathcal{G}^T\\
 		\chi_{\mathcal{G}_{2s}}(\textbf{x})
 		\end{pmatrix}$$ and $$\log_\xi(\mathcal{G}'\odot_M \mathcal{G}'^T)=\begin{pmatrix}
 		0&\alpha_1&\cdots&\alpha_{2s}\\
 		-\alpha_1& & &\\
 		\vdots & & \log_\xi(\mathcal{G}\odot_M \mathcal{G}^T)\\
 		-\alpha_{2s}& & &\\
 		\end{pmatrix}_{(2s+1)\times(2s+1)}.$$ Then the matrix  $\log_\xi(\mathcal{G}'\odot_M \mathcal{G}'^T)$ is odd order skew-symmetric matrix. Hence the rank of the matrix is less then or equal to $2s$. Since $C$ is an ACD code therefore $\log_\xi(\mathcal{G}\odot_M \mathcal{G}^T)$ is a non-singular matrix of order $2s$.  Thus, the rank of the matrix $\log_\xi(\mathcal{G}'\odot_M \mathcal{G}'^T)$ is $ 2s$. By Corollary \ref{one-rank hull}, $C'$ is a one-rank hull code.
 	\end{proof}
 	\begin{exmp}
 		We start with $[4,3^2,2]$ ACD code  $C$ with respect to skew-symmetric duality $M_2$ over $\mathbb{F}_{3^2}$. The generator matrix of $C$ is $\begin{pmatrix}
 			1&1&0&0\\
 			\nu&\nu&\nu&\nu
 		\end{pmatrix}$. Let $\textbf{x}=(\nu,\nu,1,1)\notin C$.  By Theorem \ref{ACD_to_onerank}, we construct $[4,3^3,2]$ one-rank hull code and the generator matrix is $$\begin{pmatrix}
 			\nu&\nu&1&1\\
 			1&1&0&0\\
 			\nu&\nu&\nu&\nu
 		\end{pmatrix}.$$
 	\end{exmp}
\begin{theorem}\label{ACD_to_n+1onerank}
	Let $\mathcal{G}$ be a $2s\times n$ generator matrix of an $[n,p^{2s},d]$ ACD code $C$ over $\mathbb{F}_{p^{2e}}$ and $M$ be any skew-symmetric duality. Then for any $\textbf{x}\notin C$ and $\alpha\in\mathbb{F}^*_{p^{2e}}$, the additive code generated by a matrix $\mathcal{G}'=\begin{pmatrix}
		\alpha& \textbf{x}\\
		\alpha& \mathcal{G}_1\\
		\vdots &\vdots\\
		\alpha& \mathcal{G}_{2s}
	\end{pmatrix}$ is $[n+1,p^{2s+1}]$ one-rank hull code, where $\mathcal{G}_i's$ are the $i$-th row of the matrix $\mathcal{G}$.
\end{theorem}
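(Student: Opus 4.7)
The plan is to separate the proof into two independent pieces: (i) verify that $\mathcal{G}'$ has $p$-rank $2s+1$, so that the resulting code really is an $[n+1,p^{2s+1}]$ code, and (ii) compute $\log_\xi(\mathcal{G}'\odot_M\mathcal{G}'^T)$ and show that its rank equals $2s$, so that Corollary \ref{one-rank hull} produces a one-rank hull. The argument for (ii) will closely mirror the proof of Theorem \ref{ACD_to_onerank}; the only genuinely new work is in (i), which is where the hypothesis $\textbf{x}\notin C$ enters.

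For step (i), I would take an $\mathbb{F}_p$-linear dependence $c_0(\alpha,\textbf{x})+\sum_{i=1}^{2s}c_i(\alpha,\mathcal{G}_i)=0$. Reading off the first coordinate gives $(c_0+c_1+\cdots+c_{2s})\alpha=0$, and since $\alpha\neq 0$ in $\mathbb{F}_{p^{2e}}$ this forces $c_0+\sum c_i\equiv 0\pmod p$. Reading off the remaining coordinates gives $c_0\textbf{x}+\sum c_i\mathcal{G}_i=0$; if $c_0\neq 0$ then $\textbf{x}$ would equal an $\mathbb{F}_p$-combination of the $\mathcal{G}_i$'s, contradicting $\textbf{x}\notin C$. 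Hence $c_0=0$, and then the linear independence of the rows of $\mathcal{G}$ (which has $p$-rank $2s$) forces all remaining $c_i=0$. Thus $\mathcal{G}'$ has $p$-rank exactly $2s+1$.

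For step (ii), the crucial observation is that because $M$ is skew-symmetric, $\chi_\alpha(\alpha)=1$. Writing $u_0=\textbf{x}$ and $u_i=\mathcal{G}_i$ for $1\le i\le 2s$, each entry of $\mathcal{G}'\odot_M\mathcal{G}'^T$ factors as
\[
\chi_{(\alpha,u_i)}((\alpha,u_j))=\chi_\alpha(\alpha)\,\chi_{u_i}(u_j)=\chi_{u_i}(u_j),
\]
so the added $\alpha$-column contributes nothing to this product matrix. The resulting $(2s+1)\times(2s+1)$ matrix $\log_\xi(\mathcal{G}'\odot_M\mathcal{G}'^T)$ is therefore exactly of the block form displayed in the proof of Theorem \ref{ACD_to_onerank}: an odd-order skew-symmetric matrix whose bottom-right $2s\times 2s$ block is $\log_\xi(\mathcal{G}\odot_M\mathcal{G}^T)$. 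Because $C$ is ACD, Theorem \ref{rankhull} gives that this $2s\times 2s$ block has full rank $2s$, while odd-order skew-symmetry caps the total rank at $2s$. Hence the rank equals $2s=(2s+1)-1$, and Corollary \ref{one-rank hull} yields that $C'$ is one-rank hull.

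The main obstacle is really just the linear-independence check in step (i): one has to notice that although every row starts with the same entry $\alpha$, the hypothesis $\textbf{x}\notin C$ is precisely what prevents the first row from collapsing into the span of the remaining rows after accounting for the constraint coming from the $\alpha$-column. Once that is handled, step (ii) is a direct reduction to the calculation already carried out in Theorem \ref{ACD_to_onerank}.
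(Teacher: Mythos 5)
Your proposal is correct and follows essentially the same route as the paper: the paper likewise observes that $\chi_\alpha(\alpha)=1$ makes $\mathcal{G}'\odot_M\mathcal{G}'^T$ collapse to the bordered block form of Theorem \ref{ACD_to_onerank}, and then concludes that $\log_\xi(\mathcal{G}'\odot_M\mathcal{G}'^T)$ is an odd-order skew-symmetric matrix with a non-zero minor of order $2s$, hence of rank exactly $2s$, so Corollary \ref{one-rank hull} applies. Your step (i), checking via $\textbf{x}\notin C$ that $\mathcal{G}'$ has $p$-rank $2s+1$, is a worthwhile piece of bookkeeping that the paper leaves implicit but does not change the substance of the argument.
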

\begin{proof}
	Let $C'$ be an additive code generated by matrix $\mathcal{G}'$. Then the matrix $$\mathcal{G}'\odot_M\mathcal{G}'^T=\begin{pmatrix}
		1& \chi_{\textbf{x}}(\mathcal{G}_1)&\dots&\chi_{\textbf{x}}(\mathcal{G}_{2s})\\
		
		\chi_{\mathcal{G}_1}(\textbf{x})& & &\\
		
		\vdots& &\mathcal{G}\odot_M\mathcal{G}^T\\
		\chi_{\mathcal{G}_{2s}}(\textbf{x})
	\end{pmatrix}.$$
It follows that the matrix $\log_\xi(\mathcal{G}'\odot_M \mathcal{G}'^T) $ is odd order skew-symmetric matrix having a non-zero minor of order $2s$. Therefore, by Corollary \ref{one-rank hull}, $C'$ is $[n+1,p^{2s+1}]$ one-rank hull code.
\end{proof}
\begin{exmp}
	We start with $[4,3^2,2]$ ACD code  $C$ with respect to skew-symmetric duality $M_2$ over $\mathbb{F}_{3^2}$. The generator matrix of $C$ is $\begin{pmatrix}
	1&1&0&0\\
	\nu&\nu&\nu&\nu
\end{pmatrix}$. Choose $\textbf{x}=(\nu,1,1,1)\notin C$ and $\alpha=1$.  By Theorem \ref{ACD_to_n+1onerank}, we construct $[5,3^3,3]$ one-rank hull code and the generator matrix is $$\left( \begin{array}{c|cccc}
	1&\nu&1&1&1\\
	1&1&1&0&0\\
	1&\nu&\nu&\nu&\nu
\end{array}\right) .$$
\end{exmp}
\section{Optimal additive codes with one-rank hull}\label{Bounds}
In this section, we study the highest possible minimum distance among all additive codes with one-rank hull. To pave the way for this, we will first discuss some relevant results that will be helpful in our analysis.
\begin{lemma}\label{xi^s if q(u)=s}
	Let $M$ be any duality over $\mathbb{F}_{p^e}=\left\langle x_1,x_2,\dots,x_e \right\rangle $  and $\mathcal{Q}$ be a quadratic form associated with duality matrix $D$. Then for any $u\in \mathbb{F}_{p^e}$, $\chi_u(u)=\xi^s$ if and only if $\mathcal{Q}(u_1,u_2,\dots, u_e)=s$, where $u=\sum_{i=1}^{e}u_ix_i$ and $s\in\mathbb{F}_p$.
\end{lemma}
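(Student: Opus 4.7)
The plan is to adapt the argument from Lemma \ref{selforthogonal}, which handles the special case $s=0$, by keeping track of the entire exponent rather than only observing when it vanishes modulo $p$. The key input is the bimultiplicativity built into the duality: because each $\chi_{x_i}$ is a group homomorphism and the character assignment is itself a homomorphism in the subscript, one has $\chi_{\sum u_i x_i}\bigl(\sum u_j x_j\bigr)=\prod_{i,j}\chi_{x_i}(x_j)^{u_i u_j}$. This converts the statement about a character value into a statement about an exponent sum, which is exactly what the quadratic form $\mathcal{Q}$ computes.

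Concretely, I would proceed in three short steps. First, writing $u=\sum_{i=1}^{e}u_i x_i$ with $u_i\in\mathbb{F}_p$ and using $\chi_{x_i}(x_j)=\xi^{k_{ij}}$, expand
\[
\chi_u(u)=\prod_{i,j=1}^{e}\chi_{x_i}(x_j)^{u_i u_j}=\xi^{\sum_{i,j=1}^{e}k_{ij}u_i u_j}.
\]
Second, recognize the exponent as the quadratic form associated with the duality matrix $D=[k_{ij}]$: setting $\mathbf{u}=(u_1,\dots,u_e)$,
\[
\sum_{i,j=1}^{e}k_{ij}u_i u_j=\mathbf{u}D\mathbf{u}^{T}=\mathcal{Q}(u_1,u_2,\dots,u_e).
\]
Third, since $\xi$ is a primitive $p$-th root of unity, $\xi^{\mathcal{Q}(\mathbf{u})}=\xi^{s}$ is equivalent to $\mathcal{Q}(u_1,\dots,u_e)\equiv s\pmod{p}$; as $s\in\mathbb{F}_p$ this yields the claimed equivalence in both directions.

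There is no real obstacle here: the only thing to be careful about is the bimultiplicative expansion in Step 1, which is a direct consequence of $\phi\colon G\to\widehat{G}$ being a homomorphism (so $\chi_{a+b}=\chi_a\chi_b$) together with the homomorphism property of each individual character (so $\chi_a(b+c)=\chi_a(b)\chi_a(c)$). This is the same move already used implicitly in Lemma \ref{selforthogonal}; the present lemma is essentially a repackaging of that computation with the exponent retained. Thus the proof will be short and largely a recollection of the earlier argument, with the extra observation that the map $u\mapsto \mathcal{Q}(u_1,\dots,u_e)$ is precisely the $\mathbb{F}_p$-valued logarithm of $u\mapsto\chi_u(u)$.
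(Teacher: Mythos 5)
Your proposal matches the paper's own proof: both expand $\chi_u(u)=\prod_{i,j}\chi_{x_i}(x_j)^{u_iu_j}=\xi^{\sum_{i,j}k_{ij}u_iu_j}$, identify the exponent with $\mathbf{u}D\mathbf{u}^{T}=\mathcal{Q}(u_1,\dots,u_e)$, and conclude via the primitivity of $\xi$ that $\chi_u(u)=\xi^s$ iff $\mathcal{Q}(\mathbf{u})\equiv s\pmod{p}$. The argument is correct and takes essentially the same route as the paper, namely a direct adaptation of Lemma \ref{selforthogonal} retaining the exponent.
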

\begin{proof}
		We have $ \chi_u(u)=\chi_{\sum_{i=1}^{e}u_ix_i}(\sum_{i=1}^{e}u_ix_i)=\prod_{i,j=1}^{e}\chi_{x_i}(x_j)^{u_iu_j}=\xi^{\sum_{i,j=1}^{e}k_{ij}u_iu_j}=\xi^s$ if and only if  $\sum_{i,j=1}^{e}k_{ij}u_iu_j\equiv s\mod p$. Also,  $\mathcal{Q}(u_1,u_2,\dots, u_e)=[u_1,u_2,\dots, u_e]D[u_1,u_2,\dots, u_e]^T=\sum_{i,j=1}^{e}k_{ij}u_iu_j$. Hence, the result follows.
		
\end{proof}
\begin{theorem}\label{existence of u for any s over Fp2}
Let $M$ be a duality over $\mathbb{F}_{p^2}\ (p\neq 2),$ such that $\chi_u(u)\neq 1$ for all $u\in\mathbb{F}^*_{p^2}$. Then for each $s\in\mathbb{F}^*_p$ there exists $u\in\mathbb{F}^*_{p^2} $ such that $\chi_u(u)=\xi^s$. Moreover, there are $(p+1)$ elements of $\mathbb{F}^*_{p^2} $ with $\chi_u(u)=\xi^s$ for each $s\in\mathbb{F}^*_p$.
	\end{theorem}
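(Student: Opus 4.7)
The plan is to reduce the counting claim to a counting problem for the quadratic form $\mathcal{Q}$ attached to the duality $M$ via its duality matrix $D$. By Lemma \ref{xi^s if q(u)=s}, for $u=u_1x_1+u_2x_2\in\mathbb{F}_{p^2}$ we have $\chi_u(u)=\xi^s$ if and only if $\mathcal{Q}(u_1,u_2)=s$. Thus I only need to count pairs $(u_1,u_2)\in\mathbb{F}_p^2$ with $\mathcal{Q}(u_1,u_2)=s$ for each fixed $s\in\mathbb{F}_p^*$, and then check that the total $p^2$ elements are accounted for. The hypothesis $\chi_u(u)\neq 1$ for every $u\in\mathbb{F}_{p^2}^*$ translates (again via the lemma) to the statement that $\mathcal{Q}$ is anisotropic: the unique zero of $\mathcal{Q}$ is the origin.

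Next I would verify that anisotropy forces the associated symmetric matrix $D'=[(k_{ij}+k_{ji})/2]$ to be non-singular, so that $\mathcal{Q}$ is a non-degenerate binary quadratic form; otherwise a non-zero vector in $\ker D'$ would be an isotropic vector. Then I would invoke \cite[Theorem 6.26]{finitefields} (the same tool that powered the theorem immediately preceding this one in the paper). For a non-degenerate quadratic form of even rank $k=2$ over $\mathbb{F}_p$ with $p$ odd, that theorem yields
\[
\#\{\mathbf{a}\in\mathbb{F}_p^2 : \mathcal{Q}(\mathbf{a})=s\}\;=\;p+v(s)\,\eta(-\Delta),
\]
where $\Delta=\det D'$, $\eta$ is the quadratic character of $\mathbb{F}_p$, and $v(0)=p-1$, $v(s)=-1$ for $s\neq 0$. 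Setting $s=0$ gives the count $p+(p-1)\eta(-\Delta)$, which by anisotropy must equal $1$; this forces $\eta(-\Delta)=-1$. Substituting this back for a fixed $s\in\mathbb{F}_p^*$ yields $p-\eta(-\Delta)=p+1$ solutions, independent of $s$. In particular, $N\neq 0$, so such $u$ exists, and the count is exactly $p+1$.

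The main obstacle is proving uniformity across the $p-1$ non-zero values of $s$: the elementary rescaling $\mathcal{Q}(\lambda\mathbf{u})=\lambda^2\mathcal{Q}(\mathbf{u})$ only identifies the counts within each of the two square classes of $\mathbb{F}_p^*$ and would merely give $N_{\text{QR}}+N_{\text{NR}}=2(p+1)$. The Lidl--Niederreiter count is what separates these. As a more conceptual alternative I would diagonalize $\mathcal{Q}$ as $\alpha x^2+\beta y^2$, note that anisotropy amounts to $-\alpha\beta$ being a quadratic non-residue, and after rescaling identify $\mathcal{Q}$ with the norm form of the extension $\mathbb{F}_{p^2}/\mathbb{F}_p$, namely $N(a+b\sqrt{\gamma})=a^2-\gamma b^2$ for a non-residue $\gamma$. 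Because the norm map $N\colon\mathbb{F}_{p^2}^*\to\mathbb{F}_p^*$ coincides with $x\mapsto x^{p+1}$, it is a surjective group homomorphism whose kernel has order $(p^2-1)/(p-1)=p+1$, so every $s\in\mathbb{F}_p^*$ has exactly $p+1$ preimages in $\mathbb{F}_{p^2}^*$, recovering the theorem.
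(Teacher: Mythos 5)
Your proposal is correct and follows essentially the same route as the paper: reduce to counting solutions of $\mathcal{Q}(\mathbf{a})=s$ via Lemma \ref{xi^s if q(u)=s}, use the anisotropy hypothesis to force $\eta(-\Delta)=-1$, and read off the count $p+1$ from the Lidl--Niederreiter solution-count for non-degenerate binary quadratic forms (the paper diagonalizes first and cites their Lemma 6.24 rather than Theorem 6.26, but this is the same computation). Your closing remark identifying $\mathcal{Q}$ with the norm form of $\mathbb{F}_{p^2}/\mathbb{F}_p$ is a pleasant conceptual alternative not present in the paper, but the main argument is the same.
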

	\begin{proof}
		Let $D=\begin{pmatrix}
			a&b\\d&c
		\end{pmatrix}$ be a duality matrix of the duality $M$ over $\mathbb{F}_{p^2},\ (p\neq2)$. Since there is no non-zero self-orthogonal element, then by Theorem \ref{one-rank hull over p^2}, we have $(b+d)^2-4ac\not\equiv0\mod p$. Let $D'=\begin{pmatrix}
		a&(b+d)/2\\(b+d)/2&c
		\end{pmatrix}$ be a matrix, then $rank(D')=2$. Let $\mathcal{Q}$ be a  quadratic form  associated with $D'$. By \cite[Theorem 6.21]{finitefields}, quadratic form $\mathcal{Q}$ is equivalent to a diagonal quadratic form, say $\mathcal{Q}'$. Hence, $\mathcal{Q}$ and $\mathcal{Q}'$  have equal number of solutions. Let $\mathcal{Q}'(u_1,u_2)=a_1u^2_1+a_2u^2_2$. Then by \cite[Lemma 6.24]{finitefields}, we have $p+(-1)\eta(-a_1a_2)$ number of solutions of $a_1u^2_1+a_2u^2_2=s$ for each $s\in\mathbb{F}^*_p$. Since there does not exist any non-zero solution for $a_1u^2_1+a_2u^2_2=0$, therefore $\eta(-a_1a_2)=-1$. Thus, by  Lemma \ref{xi^s if q(u)=s}, we have $(p+1)$ elements of  $\mathbb{F}^*_{p^2} $ with $\chi_u(u)=\xi^s$ for each $s\in\mathbb{F}^*_p$.
	\end{proof}
\begin{corollary}\label{xi^a then xi^-a}
	Let $M$ be a duality over $\mathbb{F}_{p^2}\ (p\neq 2),$ such that $\chi_u(u)\neq 1$ for all $u\in\mathbb{F}^*_{p^2}$. If $\chi_u(u)=\xi^s$ for any $s\in\mathbb{F}^*_p$ then there exists $v\in\mathbb{F}^*_{p^2}$ such that $\chi_v(v)=\xi^{-s}$.
\end{corollary}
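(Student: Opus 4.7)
The plan is to observe that this corollary is essentially an immediate consequence of Theorem \ref{existence of u for any s over Fp2}, so the main task is to verify that the hypothesis on $s$ transfers to $-s$.

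First, I would note that the assumption on $M$ (namely, $\chi_u(u) \neq 1$ for all $u \in \mathbb{F}_{p^2}^*$) is exactly the hypothesis needed to invoke Theorem \ref{existence of u for any s over Fp2}. That theorem guarantees that for every $t \in \mathbb{F}_p^*$ there exists an element $v \in \mathbb{F}_{p^2}^*$ with $\chi_v(v) = \xi^{t}$ (in fact, exactly $p+1$ such elements).

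Next, given $s \in \mathbb{F}_p^*$, I would set $t = -s$. Since $p$ is odd, $-1 \not\equiv 0 \pmod{p}$, so $-s \in \mathbb{F}_p^*$. Applying the theorem with this choice of $t$ directly yields an element $v \in \mathbb{F}_{p^2}^*$ such that $\chi_v(v) = \xi^{-s}$, which is exactly what is asserted.

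There is no real obstacle here; the work was already done in proving Theorem \ref{existence of u for any s over Fp2}. The only subtlety worth flagging in writing is that we need $p \neq 2$ so that $-s \neq s$ in general and, more importantly, so that $-s$ remains in $\mathbb{F}_p^*$ whenever $s$ is (which is automatic, but the oddness of $p$ is already embedded in the ambient hypothesis of the statement). So the proof is essentially one sentence invoking the previous theorem.
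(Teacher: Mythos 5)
Your proof is correct and matches the paper's own argument exactly: the paper likewise observes that $s\in\mathbb{F}_p^*$ implies $-s\in\mathbb{F}_p^*$ and then invokes Theorem \ref{existence of u for any s over Fp2} directly. (Minor aside: $-s\in\mathbb{F}_p^*$ holds for any prime, so the oddness of $p$ is only needed to satisfy the theorem's hypothesis, not for this step.)
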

\begin{proof}
	If $s\in\mathbb{F}^*_p$, then $-s\in\mathbb{F}^*_p$. The proof follows directly from the above Theorem \ref{existence of u for any s over Fp2}.
\end{proof}
\begin{remark}
	The above corollary holds for $p=2$ since we have $s=1$ only. 
\end{remark}
\begin{theorem}\label{chi u(v)=1}
	For each $u\in\mathbb{F}_{p^e}$, there  exists $v\in \mathbb{F}_{p^e}$ such that $\chi_u(v)=1$ with respect to any duality $M$. Moreover, there are $p^{e-1}$ elements of $\mathbb{F}_{p^e}$ such that $\chi_u(v)=1$ for each $u \in \mathbb{F}^*_{p^e}$.
\end{theorem}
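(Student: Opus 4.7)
The plan is to reduce everything to a single linear equation over $\mathbb{F}_p$, using the non-singularity of the duality matrix $D$. The existence half is essentially free: since each $\chi_u$ is a group homomorphism from $(\mathbb{F}_{p^e},+)$ into the $p$-th roots of unity, we have $\chi_u(0)=1$ for every $u$, so $v=0$ is always a witness. The substantive content is the counting claim.

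First I would expand $\chi_u(v)$ in coordinates. Writing $u=\sum_{i=1}^e a_i x_i$ and $v=\sum_{j=1}^e b_j x_j$ with $\mathbf{a},\mathbf{b}\in\mathbb{F}_p^e$, multiplicativity of characters on each slot gives
\[
\chi_u(v)=\prod_{i,j=1}^{e}\chi_{x_i}(x_j)^{a_i b_j}=\xi^{\sum_{i,j}k_{ij}a_i b_j}=\xi^{\mathbf{a}D\mathbf{b}^{T}},
\]
where $D=[k_{ij}]$ is the duality matrix of $M$. Hence $\chi_u(v)=1$ if and only if $\mathbf{a}D\mathbf{b}^{T}\equiv 0\bmod p$, which is one linear equation in the $e$ unknowns $b_1,\dots,b_e$ with coefficient vector $\mathbf{a}D$.

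Next I would fix $u\in\mathbb{F}_{p^e}^*$, so that $\mathbf{a}\neq\mathbf{0}$. Since every duality matrix is non-singular over $\mathbb{F}_p$ (this was used, for instance, in the ACD construction earlier in the paper), the row-action $\mathbf{a}\mapsto \mathbf{a}D$ is injective, so $\mathbf{a}D\neq \mathbf{0}$. Therefore the equation $(\mathbf{a}D)\mathbf{b}^T\equiv 0\bmod p$ is a non-trivial linear equation, whose solution set is a hyperplane in $\mathbb{F}_p^{e}$ of cardinality exactly $p^{e-1}$. Each such $\mathbf{b}$ gives a distinct $v\in\mathbb{F}_{p^e}$, establishing the count.

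There is essentially no obstacle here; the only point that requires care is invoking non-singularity of $D$ to rule out $\mathbf{a}D=\mathbf{0}$, which is what promotes the statement from the trivial case ($u=0$ forces every $v$ to work, giving $p^e$ solutions) to the stated uniform count $p^{e-1}$ on $\mathbb{F}_{p^e}^*$.
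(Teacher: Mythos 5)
Your proof is correct and follows essentially the same route as the paper: expand $\chi_u(v)$ via the duality matrix to reduce the condition to the single linear equation $(\mathbf{a}D)\mathbf{b}^T\equiv 0 \bmod p$, then count the $p^{e-1}$ solutions of this hyperplane for $\mathbf{a}\neq\mathbf{0}$. Your explicit appeal to the non-singularity of $D$ to guarantee $\mathbf{a}D\neq\mathbf{0}$ is exactly the (slightly more implicit) step in the paper's argument that the row vector $[n_1,\dots,n_e]D$ has rank one for non-zero $u$.
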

\begin{proof}
Let $u=\sum_{i=1}^{e}n_ix_i$ and 	 $v=\sum_{i=1}^{e}m_ix_i$, where $x_1,x_2,\dots,x_e$ are generators of $ \mathbb{F}_{p^e} $ and $n_i,m_i\in \mathbb{F}_{p}$ for all $1\leq i\leq e$. For any fixed $u$, $\chi_u(v)=1$ if and only if $$([n_1,n_2,\dots,n_e]D)[m_1,m_2,\dots,m_e]^T\equiv 0\mod p,$$ where $D$ is a duality matrix. The matrix $[n_1,n_2,\dots,n_e]D$ has rank at most one and nullity at least $e-1$. Moreover, for non-zero $u$, rank of the matrix $[n_1,n_2,\dots,n_e]D$ is one and nullity $e-1$. Therefore, there are $p^{e-1}$ elements of $\mathbb{F}_{p^e}$ such that $\chi_u(v)=1$ for each $u \in \mathbb{F}^*_{p^e}$.
\end{proof}
\begin{remark}
	Note that if $e=2$ and $\chi_u(u)=1$, for some $u \in \mathbb{F}^*_{p^2}$, then $\chi_u(v)\neq 1$ for all $v\neq \alpha u$, where $\alpha \in \mathbb{F}_{p}$.
\end{remark}
The notation $[n, p^k, d]$ represent an additive code with parameters of length $n$, cardinality $p^k$, and distance $d$ over the finite field $ \mathbb{F}_{p^e}$. We introduce the concept of one-rank hull codes over $ \mathbb{F}_{p^e}$ in relation to their highest possible distance, denoted as $d_{1}[n, k]_{p^e,M}$. It is defined as $$d_{1}[n, k]_{p^e,M}=\max\{d\ |\ \text{$\exists$ an $[n, p^k, d]$ one-rank hull code with respect to the duality $M$}\}.$$
\begin{theorem}
 For any duality $M$ over $\mathbb{F}_{p^e}$, where $e\geq 3$, we have $d_{1}[n, 1]_{p^e,M}=n$.
\end{theorem}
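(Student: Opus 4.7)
The plan is to prove the result by exhibiting an explicit one-rank hull code of length $n$ meeting the trivial Singleton-type upper bound $d \leq n$. Since any additive code of length $n$ has minimum distance at most $n$, the inequality $d_{1}[n,1]_{p^e,M} \leq n$ is automatic, so the whole content is constructing a one-rank hull $[n, p, n]$ code for any duality $M$ when $e \geq 3$.

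First I would recall the structural reduction: a code with $k = 1$ is generated by a single row vector $\textbf{v} \in \mathbb{F}_{p^e}^n$, and its $p$ codewords are $\{n\textbf{v} : n \in \mathbb{F}_p\}$. Since scaling by a nonzero element of $\mathbb{F}_p$ preserves which coordinates are zero, every nonzero codeword has weight exactly $wt(\textbf{v})$, so the minimum distance equals $wt(\textbf{v})$. Moreover, because $\mathrm{rank}(C) = 1$, the only possibilities for $\mathrm{rank}(Hull(C))$ are $0$ and $1$; the code is one-rank hull if and only if $Hull(C) = C$, i.e.\ if and only if $C$ is self-orthogonal, equivalently $\chi_{\textbf{v}}(\textbf{v}) = 1$.

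Next I would invoke Theorem \ref{existence of 1 rank for geq 3}, which (via Chevalley's theorem applied to the associated quadratic form $\mathcal{Q}$ in $e \geq 3$ variables) guarantees a nonzero self-orthogonal element $x \in \mathbb{F}_{p^e}^*$ with $\chi_x(x) = 1$. Setting $\textbf{v} = (x, x, \dots, x) \in (\mathbb{F}_{p^e}^*)^n$, one gets
\[
\chi_{\textbf{v}}(\textbf{v}) = \prod_{i=1}^{n} \chi_x(x) = 1,
\]
so $C = \langle \textbf{v} \rangle$ is self-orthogonal, hence $Hull(C) = C$ has rank $1$. Every coordinate of $\textbf{v}$ is nonzero, so $wt(\textbf{v}) = n$, and therefore $C$ is an $[n, p, n]$ one-rank hull code with respect to $M$.

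Combining the construction with the trivial bound gives $d_{1}[n,1]_{p^e,M} = n$. There is essentially no obstacle here once Theorem \ref{existence of 1 rank for geq 3} is in hand; the only thing to be careful about is the logical equivalence between one-rank hull and self-orthogonality in the $k=1$ case, and the observation that all nonzero codewords of a rank-one additive code have the same Hamming weight. The hypothesis $e \geq 3$ enters solely through Chevalley's theorem guaranteeing the existence of the self-orthogonal element $x$, which is exactly what the construction consumes.
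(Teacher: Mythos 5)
Your proposal is correct and follows essentially the same route as the paper: both take a nonzero self-orthogonal element $x\in\mathbb{F}_{p^e}^*$ (guaranteed for $e\geq 3$ via the quadratic-form/Chevalley argument) and use the constant vector $(x,x,\dots,x)$ to generate a self-orthogonal, hence one-rank hull, code of weight $n$. The extra details you supply — the trivial bound $d\leq n$, the equivalence of one-rank hull and self-orthogonality when $k=1$, and the constancy of weights of nonzero codewords — are correct and merely make explicit what the paper leaves implicit.
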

\begin{proof}
	From Theorem \ref{existence of self-orthogonal}, we have an element $u\in\mathbb{F}^*_{p^e}$ such that $\chi_u(u)=1$, for any duality $M$ and, for any finite field  $\mathbb{F}_{p^e}$, where $e\geq 3$. Therefore, a code $C=\left\langle (u,u,\dots,u)\right\rangle $ is a one-rank hull code with distance $n$. Hence, the result follows.
\end{proof}
There are dualities over $\mathbb{F}_{p^2}$ such that there is no non-zero self-orthogonal element. In this case, there is no one-rank hull code of length 1. However, if $\chi_u(u)=1$ for some $u\in\mathbb{F}^*_{p^2}$ then  for such dualities $d_1[1,1]=1$. In the following theorem, we find  $d_1[n,1]$ over $\mathbb{F}_{p^2}$ for $n\geq 2$. 
\begin{theorem}
	For any duality $M$ over $\mathbb{F}_{p^2}$, where $p\neq 2$, we have $d_1[n,1]_{p^2,M}=n\ (n\geq 2)$.
\end{theorem}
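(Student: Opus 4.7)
The plan is to observe that the trivial upper bound $d_1[n,1]_{p^2,M}\le n$ is automatic (length constraint), so it suffices to construct an additive $[n,p,n]$ one-rank hull code. Since $k=1$, Corollary \ref{one-rank hull} tells us that a code $C=\langle\mathbf{v}\rangle$ generated by a single vector $\mathbf{v}=(v_1,\dots,v_n)\in\mathbb{F}_{p^2}^n$ is a one-rank hull code precisely when $\chi_{\mathbf{v}}(\mathbf{v})=\prod_{i=1}^n\chi_{v_i}(v_i)=1$. Moreover, if every coordinate $v_i$ is nonzero then $\mathrm{wt}(\mathbf{v})=n$ and $\mathbf{v}$ has additive order $p$, giving exactly the parameters $[n,p,n]$.

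I would split into two cases based on the existence of nonzero self-orthogonal elements. In the easy case, when some $u\in\mathbb{F}_{p^2}^*$ satisfies $\chi_u(u)=1$, I would simply take $\mathbf{v}=(u,u,\dots,u)$; then $\chi_{\mathbf{v}}(\mathbf{v})=\chi_u(u)^n=1$ and all coordinates are nonzero, finishing the case.

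The interesting case is when $\chi_u(u)\ne1$ for every $u\in\mathbb{F}_{p^2}^*$. Here I would invoke Theorem \ref{existence of u for any s over Fp2}, which guarantees that for each $s\in\mathbb{F}_p^*$ there exists some $u_s\in\mathbb{F}_{p^2}^*$ with $\chi_{u_s}(u_s)=\xi^s$. Then choosing coordinates $v_i=u_{s_i}$ reduces the problem to selecting an $n$-tuple $(s_1,\dots,s_n)\in(\mathbb{F}_p^*)^n$ whose entries sum to $0$ modulo $p$: if $n$ is even, take the alternating pattern $(1,-1,1,-1,\dots)$; if $n$ is odd (so $n\ge3$), take $(1,1,-2)$ followed by $(n-3)/2$ copies of the pair $(1,-1)$. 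The requirement $p\ne 2$ is used precisely to guarantee $-2\not\equiv0\pmod p$ in the odd case, so all chosen $s_i$ lie in $\mathbb{F}_p^*$ and all the corresponding $v_i$ are nonzero.

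The main (and essentially only) technical obstacle is the odd-$n$ sub-case: we need an odd number of nonzero residues mod $p$ summing to zero, which is exactly where the hypothesis $p\ne 2$ is indispensable (the remark before the theorem notes that for $\mathbb{F}_4$ and dualities with $\chi_u(u)=-1$ on $\mathbb{F}_4^*$, odd-length codewords with product-of-self-characters equal to $1$ simply do not exist). Once the tuple $(s_1,\dots,s_n)$ is chosen, applying Theorem \ref{existence of u for any s over Fp2} coordinatewise and Corollary \ref{one-rank hull} immediately produces the desired $[n,p,n]$ one-rank hull code, completing the proof that $d_1[n,1]_{p^2,M}=n$.
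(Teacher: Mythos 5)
Your proposal is correct and follows essentially the same strategy as the paper: both split on whether a nonzero self-orthogonal element exists, and in the hard case both construct a full-weight vector whose self-character exponents sum to zero in $\mathbb{F}_p$, using Theorem \ref{existence of u for any s over Fp2} (the paper via its Corollary \ref{xi^a then xi^-a}) to realize each required exponent. The only difference is bookkeeping: you split on the parity of $n$ and use the tuple $(1,-1,\dots)$ or $(1,1,-2,1,-1,\dots)$, whereas the paper splits on whether $n\equiv 1\pmod p$ and uses $n-1$ copies of a single element plus one balancing element (resorting to a $(p+1)/2$--$(p+1)/2$ split when $s(n-1)\equiv 0$); your choice is equally valid and slightly cleaner.
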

\begin{proof}
	Let $M$ be a duality over $\mathbb{F}_{p^2}$ and  $u\in\mathbb{F}^*_{p^2}$ such that $\chi_u(u)=1$. In this case, $C=\left\langle (u,u,\dots,u)\right\rangle $ is a one-rank hull code with distance $n$. Now, consider the duality such that there is no $u\in\mathbb{F}^*_{p^2}$ with $\chi_u(u)=1$. Let $\chi_u(u)=\xi^s$ for some $1\leq s\leq p-1$.\\
				\textbf{Case 1:} If $n=mp+1$, for any   integer $m>0$, then there exists a vector  $$\textbf{y}=(\underbrace{u,u,\dots,u}_{(m-1)p\ \text{times}},\underbrace{u,u,\dots,u}_{(p+1)/2\ \text{times}},\underbrace{v,v,\dots,v}_{(p+1)/2\ \text{times}}),$$ such that $\chi_\textbf{y}(\textbf{y})=(\chi_u(u))^{(m-1)p}(\chi_u(u))^{(p+1/2)}(\chi_v(v))^{(p+1/2)}=1\cdot\xi^{s((p+1)/2)}\xi^{-s((p+1)/2)}=1.$  Since $s\not \equiv 0\mod p$, by Corollary \ref{xi^a then xi^-a}, we have an element $v\in\mathbb{F}^*_{p^2}$ such that $\chi_v(v)=\xi^{-s}$.\\
		\textbf{Case 2:} If $n\neq mp+1$, for any  integer  $m\geq0$  then there exists $$\textbf{y}=(\underbrace{u,u,\dots,u}_{n-1},w),$$ such that $\chi_\textbf{y}(\textbf{y})=(\chi_u(u))^{n-1}(\chi_w(w))=\xi^{s(n-1)}\xi^{-s(n-1)}=1.$  Since $s(n-1)\not \equiv 0\mod p$, by Corollary \ref{xi^a then xi^-a}, we have an element $w\in\mathbb{F}^*_{p^2}$ such that $\chi_w(w)=\xi^{-s(n-1)}$.\\
	In each case, $C=\left\langle \textbf{y} \right\rangle $ is the one-rank hull code with distance $n$. Hence, the result follows.
\end{proof}
\begin{theorem}\label{F4_1}
For  symmetric dualities over $\mathbb{F}_4$, we have  $d_{1}[n, 1]_{4,M}=n$,  and for non-symmetric dualities,  $d_{1}[n, 1]_{4,M}=\begin{cases}
		n & \ \text{if $n$ is even},\\
		n-1 &\  \text{if $n$ is odd}
	\end{cases}$.
\end{theorem}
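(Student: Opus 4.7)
My plan is to split the proof according to whether the duality admits a non-zero self-orthogonal element in $\mathbb{F}_4$, since length-$1$ self-orthogonal elements drive the entire construction. First I would check that the remark preceding the theorem (two dualities over $\mathbb{F}_4$ have $\chi_x(x)=-1$ for every $x\in\mathbb{F}_4^*$) covers exactly the two non-symmetric dualities: over $\mathbb{F}_4$ there are only six dualities (the six non-singular $2\times 2$ matrices over $\mathbb{F}_2$), four symmetric and two non-symmetric, so a short direct enumeration verifies both statements simultaneously (every symmetric duality has some $u\in\mathbb{F}_4^*$ with $\chi_u(u)=1$, and the two non-symmetric ones do not).

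For the symmetric case the proof is then immediate: picking a non-zero self-orthogonal $u\in\mathbb{F}_4$ and setting $C=\langle(u,u,\dots,u)\rangle$ yields an $[n,2,n]$ additive code with $Hull(C)=C$, so $d_1[n,1]_{4,M}=n$. For the non-symmetric case I would use the key observation $\chi_u(u)=-1=\xi$ for every $u\in\mathbb{F}_4^*$ to get
\[
\chi_{\mathbf{y}}(\mathbf{y})=\prod_{i=1}^{n}\chi_{y_i}(y_i)=(-1)^{\mathrm{wt}(\mathbf{y})}
\]
for any $\mathbf{y}\in\mathbb{F}_4^n$. A code $C=\langle\mathbf{y}\rangle$ with $\mathbf{y}\neq \mathbf{0}$ has exactly two elements, so $C$ is a one-rank hull code if and only if $\mathbf{y}$ is self-orthogonal, if and only if $\mathrm{wt}(\mathbf{y})$ is even; in that case the minimum distance of $C$ equals $\mathrm{wt}(\mathbf{y})$.

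The conclusion then reduces to a parity bookkeeping: the largest even Hamming weight attainable in $\mathbb{F}_4^n$ is $n$ when $n$ is even (take $\mathbf{y}=(1,1,\dots,1)$) and $n-1$ when $n$ is odd (take $\mathbf{y}=(1,1,\dots,1,0)$), and in each case the resulting $C$ is a valid one-rank hull code. The main obstacle I anticipate is not the parity argument itself, which is routine, but cleanly justifying the identification of the two ``bad'' dualities with the non-symmetric ones; I would handle this by a brief case check of the six $2\times 2$ non-singular $\mathbb{F}_2$-matrices rather than invoking the odd-prime formulas from earlier sections (those rely on $p\neq 2$, so the $\mathbb{F}_4$ case genuinely requires a separate small verification).
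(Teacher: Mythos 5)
Your proposal is correct and follows essentially the same route as the paper's proof: pick a self-orthogonal $u$ and use $(u,\dots,u)$ in the symmetric case, and in the non-symmetric case use $\chi_{\mathbf{y}}(\mathbf{y})=(-1)^{\mathrm{wt}(\mathbf{y})}$ to show weight $n$ works exactly when $n$ is even and fall back to a weight-$(n-1)$ generator when $n$ is odd. Your added enumeration of the six dualities to justify the dichotomy (symmetric ones admit a nonzero self-orthogonal element, the two non-symmetric ones do not) is a small but welcome piece of bookkeeping that the paper leaves to an earlier remark.
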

\begin{proof}
	Over $\mathbb{F}_4$, for symmetric dualities there exists $u\in\mathbb{F}^*_4$ such that $\chi_u(u)=1$. In this case $C=\left\langle (u,u,\dots,u)\right\rangle $ is a one-rank hull code with distance $n$. For non-symmetric dualities, we have $\chi_u(u)=-1$ for all $u\in\mathbb{F}^*_4$. If $n$ is even, then any additive code $C=\left\langle (v_1,v_2,\dots,v_n)\right\rangle $ with distance $n$ is a one-rank hull code. Since $\prod_{i=1}^{n}\chi_{v_i}(v_i)=(-1)^n=1$.  If $n$ is odd, then there is no vector $\textbf{v}=(v_1,v_2,\dots,v_n)$ of weight $n$ such that $\chi_\textbf{v}(\textbf{v})=1$. Since $\chi_\textbf{v}(\textbf{v})=\prod_{i=1}^{n}\chi_{v_i}(v_i)=(-1)^n=-1$. Thus, there is one-rank hull code $C=\left\langle (v_1,v_2,\dots,v_{n-1},0)\right\rangle $ with distance $n-1$. Hence, the result holds.
\end{proof}

The Singleton bound for an $[n, p^k, d]$ additive code over $\mathbb{F}_{p^e}$ states that the minimum distance $d$ of the code is bounded by $d\leq n-\lceil \frac{k}{e}\rceil +1$.  Additionally, two vectors $\textbf{x}$ and $\textbf{y}$ in $\mathbb{F}_{p^e}$  are considered $p$-linearly independent if they are linearly independent over $\mathbb{F}_p$.
\begin{theorem}\label{ne-1}
	 For all dualities $M$ over $\mathbb{F}_{p^e}$ for $n\geq 2$, we have $d_1[n,ne-1]_{p^e,M}=1$.
\end{theorem}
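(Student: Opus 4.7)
The plan is to combine the Singleton bound (recalled just before the theorem) for the upper side $d_1\le 1$ with an explicit construction of an $[n,p^{ne-1},1]$ one-rank hull code for the matching lower side. Taking $k=ne-1$ gives $k/e=n-1/e$, and since $0<1/e<1$ for $e\ge 2$, the ceiling is $\lceil k/e\rceil=n$, so Singleton immediately yields $d\le n-n+1=1$.

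For existence, I would start from the cardinality identity: any additive code $C$ of size $p^{ne-1}$ has $|C^M|=p^{ne}/|C|=p$, hence $C^M$ has $p$-rank at most $1$, and $Hull(C)\subseteq C^M$ therefore has $p$-rank at most $1$, with equality iff $C^M\subseteq C$. This reduces the whole problem to exhibiting $C=\{\textbf{x}\in\mathbb{F}_{p^e}^n:\chi_{\textbf{w}}(\textbf{x})=1\}$ for a suitable non-zero self-orthogonal $\textbf{w}\in\mathbb{F}_{p^e}^n$: by construction $\chi_{\textbf{w}}$ is a non-trivial character, so $|C|=p^{ne-1}$; the self-orthogonality $\chi_{\textbf{w}}(\textbf{w})=1$ forces $\textbf{w}\in C$; and $\textbf{w}\in C^M$ is automatic from the definition of $C$. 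Together these give $\langle\textbf{w}\rangle\subseteq Hull(C)\subseteq C^M$ of size $p$, so the hull has $p$-rank exactly one.

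To produce $\textbf{w}$, I split by $e$. If $e\ge 3$, Theorem \ref{existence of 1 rank for geq 3} yields $u\in\mathbb{F}_{p^e}^*$ with $\chi_u(u)=1$ and I set $\textbf{w}=(u,0,\dots,0)$. If $e=2$, I use the hypothesis $n\ge 2$: picking any non-zero $u_1\in\mathbb{F}_{p^2}^*$ and writing $\chi_{u_1}(u_1)=\xi^s$, Theorem \ref{existence of u for any s over Fp2} (for $p$ odd) or a direct check on $\mathbb{F}_4$ (for $p=2$, where $\chi_u(u)=-1$ for all $u\ne 0$) supplies $u_2$ with $\chi_{u_2}(u_2)=\xi^{-s}$, and $\textbf{w}=(u_1,u_2,0,\dots,0)$ is self-orthogonal. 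In either case, for every coordinate $i$ the weight-$1$ vector $\alpha\textbf{e}_i$ lies in $C$ exactly when $\alpha\in\ker\chi_{w_i}$, a subgroup of $\mathbb{F}_{p^e}$ of order $p^e$ (if $w_i=0$) or $p^{e-1}$ (if $w_i\ne 0$); because $e\ge 2$, this order is at least $p\ge 2$, so a non-zero such $\alpha$ exists, yielding a weight-$1$ codeword and forcing $d(C)=1$.

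The main obstacle I expect is precisely the $e=2$ step above, where one cannot rely on a self-orthogonal scalar in $\mathbb{F}_{p^2}$ (the non-symmetric dualities over $\mathbb{F}_4$, and the dualities flagged by Theorem \ref{one-rank hull over p^2}): here one must combine the hypothesis $n\ge 2$ with the quantitative count of self-characters supplied by Theorem \ref{existence of u for any s over Fp2} to assemble a weight-$2$ self-orthogonal $\textbf{w}$ before running the argument above.
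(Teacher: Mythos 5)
Your argument is correct and is essentially the paper's proof with the details written out: the upper bound is the same Singleton computation, and the code $C=\{\textbf{x}\in\mathbb{F}_{p^e}^n:\chi_{\textbf{w}}(\textbf{x})=1\}$ you construct is precisely $\langle\textbf{w}\rangle^{M^T}$, so your cardinality/kernel verification just unfolds the paper's one-line appeal to Proposition \ref{C^{M^T} is one-rank} applied to the $[n,p^1]$ one-rank hull codes whose existence was established in the $k=1$ theorems (your explicit weight-one codeword is not even needed, since any nonzero code has $d\geq 1$). The only caveat---one you share with the paper, whose $k=1$ existence results likewise cover only $e\geq 2$---is that the case $e=1$ is silently excluded.
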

\begin{proof}
From the Singleton bound, we have $d_1[n,ne-1]_{p^e,M}\leq 1$. Now, if $C$ is an $[n,p^1]$ one-rank hull  code then 	 $C^{M^T}$ is an $[n,p^{ne-1}]$ one-rank hull code  with respect to duality $M$ (see Proposition \ref{C^{M^T} is one-rank}). Hence, 	$d_1[n,ne-1]_{p^e,M}=1$, for all dualities $M$ over $\mathbb{F}_{p^e}$ for $n\geq 2$. 
\end{proof}
It is worth mentioning that when $k=2$, there is no  one-rank hull code with respect to skew-symmetric dualities (see Theorem \ref{skew-symmetric k odd}). As a result, the following theorems  will focus on dualities which are not skew-symmetric.
\begin{theorem}
		 For  any duality $M$ over $\mathbb{F}_{p^e}$, where $p\neq 2$ and $e\geq3$, we have $d_1[n,2]_{p^e,M}=n\ (n\geq2)$.
\end{theorem}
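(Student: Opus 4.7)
The plan is to construct explicitly an $[n,p^2,n]$ one-rank hull code for every admissible (non-skew-symmetric) duality $M$; the upper bound $d_1[n,2]_{p^e,M}\le n$ is immediate since the weight of any vector in $\mathbb{F}_{p^e}^n$ is at most $n$. Because $e\ge 3$, Theorem~\ref{existence of 1 rank for geq 3} supplies a nonzero self-orthogonal element $x\in\mathbb{F}_{p^e}$; I would take $\mathbf{g}_1=(x,x,\ldots,x)$ and look for a second row $\mathbf{g}_2=(y_1,\ldots,y_n)$ so that the resulting $2\times n$ matrix $\mathcal{G}$ generates an additive code of rank $p^2$, distance $n$, and rank-one hull.

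Translating the requirements through Corollary~\ref{one-rank hull} and Lemma~\ref{selforthogonal}, let $f(v)=\log_\xi\chi_x(v)$ and $g(v)=\log_\xi\chi_v(x)$ (both $\mathbb{F}_p$-linear) and $\mathcal{Q}(v)=\log_\xi\chi_v(v)$ (the quadratic form of $M$). Since $\chi_{\mathbf{g}_1}(\mathbf{g}_1)=\chi_x(x)^n=1$, the entries of $L:=\log_\xi(\mathcal{G}\odot_M\mathcal{G}^T)$ become $L_{11}=0$, $L_{12}=\sum_i f(y_i)$, $L_{21}=\sum_i g(y_i)$ and $L_{22}=\sum_i\mathcal{Q}(y_i)$ modulo $p$; the weight-$n$ requirement on every nonzero codeword translates to $y_i\notin\mathbb{F}_p x$ for each $i$ (because $\alpha x+\beta y_i=0$ with $\beta\ne 0$ forces $y_i\in\mathbb{F}_p x$). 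Hence I must choose $y_1,\ldots,y_n\in\mathbb{F}_{p^e}\setminus\mathbb{F}_p x$ satisfying $L_{12}\equiv 0$ and $(L_{21},L_{22})\not\equiv(0,0)\pmod p$.

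I would split into the cases $p\nmid n$ and $p\mid n$. When $p\nmid n$, the uniform choice $y_1=\cdots=y_n=y$ reduces the conditions to $y\in H_1:=\ker f$, $y\notin\mathbb{F}_p x$, and $g(y)\ne 0$ or $\mathcal{Q}(y)\ne 0$. If $H_1\ne H_2:=\ker g$, pick any $y\in H_1\setminus H_2$ (which automatically lies outside $\mathbb{F}_p x\subseteq H_2$, so the weight condition comes for free). If $H_1=H_2$ (the case for symmetric dualities), I would instead exhibit $y\in H_1\setminus\mathbb{F}_p x$ with $\mathcal{Q}(y)\ne 0$, i.e.\ show that $H_1$ is not totally $\mathcal{Q}$-singular. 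When $p\mid n$ the uniform construction makes $L$ identically zero, so I would take $\mathbf{g}_2=(y,\ldots,y,z)$ with $n-1$ copies of $y$ and a single entry $z$; using $n-1\equiv-1\pmod p$, the conditions collapse to $f(y)=f(z)$ together with $g(y)\ne g(z)$ or $\mathcal{Q}(y)\ne\mathcal{Q}(z)$, which is again handled by dimension counts on $H_1,H_2$ and the level sets of $\mathcal{Q}$.

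The main obstacle is the degenerate subcase $H_1=H_2$ with $\mathcal{Q}$ vanishing identically on $H_1$, since then neither $g$ nor $\mathcal{Q}$ can witness the second rank-one condition. For symmetric $M$ the matrix $D'=D$ has full rank $e$, and since $e\ge 3$ the Witt-index bound $\lfloor e/2\rfloor<e-1$ rules out a totally singular hyperplane, so this subcase cannot occur. For non-symmetric $M$ with $H_1=H_2$, a totally singular hyperplane can appear only when the rank of the symmetric part $D'$ is $1$ or $2$, and then the totally singular hyperplanes form a short explicit list (one hyperplane when the rank is $1$, at most two when the rank is $2$). The counting formulas for self-orthogonal elements established earlier in Section~\ref{Existence} then show that the self-orthogonal $x$ may be re-chosen so that $H_1$ differs from every hyperplane in that list, placing us back in a situation where $\mathcal{Q}$ is non-trivial on $H_1$ and the required $y$ (or pair $y,z$) exists. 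This completes the plan.
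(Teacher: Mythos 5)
Your overall strategy coincides with the paper's: take the constant first row $(x,\dots,x)$ built from a self-orthogonal $x$ supplied by Theorem~\ref{existence of 1 rank for geq 3}, and engineer a second row so that $\log_\xi(\mathcal{G}\odot_M\mathcal{G}^T)$ has rank one; your translation of the rank and weight conditions into the linear forms $f,g$ and the quadratic form $\mathcal{Q}$ is correct, as is the reduction of the weight-$n$ condition to $y_i\notin\mathbb{F}_p x$. Where you diverge is in the hard subcase, and that is exactly where the paper is slicker and your sketch is thinnest. The paper fixes one $v\in\ker f\setminus\mathbb{F}_p x$ (from Theorem~\ref{chi u(v)=1}) and, when $\chi_v(v)=1$, takes the second row to be $(w,-w,v,\dots,v)$ for \emph{any} $w$ with $\chi_w(w)=\xi^s\neq 1$, which exists precisely because $M$ is not skew-symmetric. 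Since $f$ and $g$ are linear, the pair $(w,-w)$ contributes $0$ to $L_{12}$ and $L_{21}$, while it contributes $2\mathcal{Q}(w)\neq 0$ to $L_{22}$ because $p\neq 2$; the case $p\mid n$ with $\chi_v(v)\neq1$ is handled by the analogous first-row modification $(u,\dots,u,u-v,u+v)$. This single cancellation device makes your entire ``main obstacle'' analysis (Witt-index bound, classification of totally singular hyperplanes, re-choice of $x$) unnecessary.

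Your route can in principle be completed -- I checked that the re-choice of $x$ you assert does go through by counting (the bad $x$ with $\ker f_x$ equal to one of the at most two totally singular hyperplanes form at most $2(p-1)$ elements of a much larger zero set of $\mathcal{Q}$) -- but two steps are asserted rather than proved, and one of them fails as naturally read. In the branch $p\mid n$, $H_1=H_2$, you claim the conditions $f(y)=f(z)$ and $g(y)\neq g(z)$ or $\mathcal{Q}(y)\neq\mathcal{Q}(z)$ are ``handled by dimension counts.'' Take $p=3$ and suppose $\mathcal{Q}|_{H_1}=a\,\ell^2$ with $\ker(\ell|_{H_1})=\mathbb{F}_3x$ (this is a legitimate configuration, not excluded by your rank argument since $\mathcal{Q}|_{H_1}\not\equiv 0$ here): then every $y,z\in H_1\setminus\mathbb{F}_3x$ has $f(y)=f(z)=g(y)=g(z)=0$ and $\mathcal{Q}(y)=\mathcal{Q}(z)=a$, so $L$ vanishes identically and the code is self-orthogonal rather than one-rank hull; the fallback $z=2y$ also fails since $4a=a$ in $\mathbb{F}_3$. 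One can repair this by taking $y\notin H_1$ and finding $z$ in the coset $y+H_1$ with $\mathcal{Q}(z)\neq\mathcal{Q}(y)$ (possible because $\mathcal{Q}$ is not affine-linear on that coset), but your outline does not anticipate leaving $H_1$. If you keep your architecture, you should either add that coset argument explicitly or, better, adopt the paper's $(w,-w)$ trick, which disposes of all of these degeneracies at once.
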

\begin{proof}
	It is clear from Theorem \ref{existence of 1 rank for geq 3} that there exists $u\in\mathbb{F}^*_{p^e}$ such that $\chi_u(u)=1$. Since we are considering dualities that are not skew-symmetric, then we  have an element $w\in\mathbb{F}^*_{p^e}$ with $\chi_w(w)=\xi^s\neq1$. Furthermore, according to Theorem \ref{chi u(v)=1}, there is an element $v$ in $\mathbb{F}^*_{p^e}$ for which  $\chi_u(v)=1$ and $v\neq \alpha u$ for any $\alpha\in\mathbb{F}_p$. This is significant because there are  $p^{e-1}(>p)$ elements such that $\chi_u(v)=1$.\\
	\textbf{Case 1:} If $\chi_v(v)\neq 1$ and $n\neq mp$ for any positive integer $m$. Then choose  $$\mathcal{G}=\begin{pmatrix}
			u&u&\cdots&u\\
			v&v&\cdots&v
		\end{pmatrix}.$$\\
	\textbf{Case 2:} If $\chi_v(v)\neq 1$ and $n=mp$ for some positive integer $m$. Then choose $$\mathcal{G}=\left( \begin{array}{cccccc}
			u&u&\cdots&u&u-v&u+v\\
			\undermat{mp-2}{v&v&\cdots&}v&v&v
		\end{array}\right) .$$  \\
		\textbf{Case 3:} If $\chi_v(v)= 1$. Then choose $$\mathcal{G}=\left( \begin{array}{ccccc}
			u&u&u&\cdots&u\\
			w&-w&\undermat{n-2}{v&\cdots&v}
		\end{array}\right) .$$ \\
		In each case, an additive code generated by $\mathcal{G}$ is a one-rank hull code with distance $n$. Since the matrix $\log_\xi(\mathcal{G}\odot_M\mathcal{G}^T)$  is 1 rank matrix of the type $\begin{pmatrix}
			0&0\\
			*&a
		\end{pmatrix}$,  $\begin{pmatrix}
		a&0\\
		*&0
		\end{pmatrix}$ and $\begin{pmatrix}
			0&0\\
			*&a
		\end{pmatrix}$, respectively, where $a\neq 0$ and $*$ could be  0 or non-zero entry. 
\end{proof}
\begin{theorem}
  For  any duality $M$ over $\mathbb{F}_{2^e}$, where $e\geq3$, we have	$d_1[n,2]_{2^e,M}=n$.
\end{theorem}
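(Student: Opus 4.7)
The plan is to construct an explicit $[n, 2^2, n]$ one-rank hull code, mirroring the strategy of the preceding theorem (for $p \neq 2$) but adapting to characteristic $2$. By Theorem \ref{existence of 1 rank for geq 3}, since $e \geq 3$, I obtain $u \in \mathbb{F}^*_{2^e}$ with $\chi_u(u) = 1$; and since $M$ is not skew-symmetric (ruled out for $k = 2$ by Theorem \ref{skew-symmetric k odd}), I also obtain $w \in \mathbb{F}^*_{2^e}$ with $\chi_w(w) = -1$. Setting $H_u = \{v : \chi_u(v) = 1\}$ and $K_u = \{v : \chi_v(u) = 1\}$, Theorem \ref{chi u(v)=1} gives $|H_u| = |K_u| = 2^{e-1} \geq 4$, so $|H_u \cap K_u| \geq 2^{e-2}$.

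For $n$ odd, I would take the constant-row generator $\mathcal{G} = \begin{pmatrix} u & u & \cdots & u \\ v & v & \cdots & v \end{pmatrix}$ for a suitable $v \in \mathbb{F}^*_{2^e} \setminus \{u\}$. Since $n$ is odd, the log matrix becomes $\begin{pmatrix} 0 & \log_\xi \chi_u(v) \\ \log_\xi \chi_v(u) & \log_\xi \chi_v(v) \end{pmatrix}$, which must have rank exactly $1$. If $H_u \neq K_u$, choosing $v$ in the symmetric difference $H_u \triangle K_u$ (non-empty, of size $2^{e-1}$) forces one off-diagonal to $0$ and the other to $1$, giving rank $1$. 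If $H_u = K_u$, I would instead select $v \in H_u \cap K_u$ with $\chi_v(v) = -1$; this reduces to verifying $H_u \not\subseteq S$, where $S$ denotes the self-orthogonal set, and can be arranged by replacing $u$ with another self-orthogonal element from $S \setminus \{0, u^{*}\}$ if necessary, where $u^{*}$ is the unique exceptional element for which $H_{u^{*}} = S$ in the symmetric sub-case.

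For $n$ even, the constant-row generator produces the zero log matrix (every entry is $n$ times something mod $2$), so I would perturb the second row in a single coordinate: $\mathcal{G} = \begin{pmatrix} u & u & \cdots & u & u \\ v & v & \cdots & v & v + u' \end{pmatrix}$, with $u'$ chosen from $H_u \cap K_u$ so that $u' \notin \{0, v, u + v\}$. Using the identity $\chi_{x+y}(x+y) = \chi_x(x) \chi_x(y) \chi_y(x) \chi_y(y)$, the log matrix reduces to $\begin{pmatrix} 0 & \log_\xi \chi_u(u') \\ \log_\xi \chi_{u'}(u) & \log_\xi(\chi_v(u') \chi_{u'}(v) \chi_{u'}(u')) \end{pmatrix}$, whose off-diagonals vanish because $u' \in H_u \cap K_u$; the choices $u' = u$ together with $v \in H_u \triangle K_u$ work when $H_u \neq K_u$, and in the $H_u = K_u$ case I replace $v + u'$ by an independent $v' \in \mathbb{F}_{2^e} \setminus H_u$ chosen with opposite self-orthogonality to $v$, so that $a_{22} = 0 + 1 = 1$.

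The main obstacle is the delicate case analysis required to ensure that $u, v, u'$ simultaneously satisfy all membership, distinctness, and character-value conditions for every non-skew-symmetric duality over $\mathbb{F}_{2^e}$, especially in the smallest dimension $e = 3$ (where $|H_u \cap K_u|$ may be only $2$) and in the symmetric sub-case where $\mathcal{Q}$ degenerates to a linear form over $\mathbb{F}_2$ and the exceptional $u^{*}$ must be avoided. Once the construction is verified, the distance is $n$ by direct inspection that each of $\mathbf{g}_1, \mathbf{g}_2, \mathbf{g}_1 + \mathbf{g}_2$ has no zero coordinate, the one-rank hull property follows from Corollary \ref{one-rank hull}, and the matching upper bound $d_1[n, 2]_{2^e, M} \leq n$ is immediate from the Singleton bound with $\lceil 2/e \rceil = 1$ for $e \geq 3$.
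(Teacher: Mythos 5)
Your overall strategy --- two (nearly) constant rows of length $n$, with a single perturbed coordinate when $n$ is even --- is genuinely different from the paper's. The paper instead reduces the whole problem to three building blocks: a self-orthogonal $[2,2^2,2]$ code (generated by $\bigl(\begin{smallmatrix}x&x\\z&z\end{smallmatrix}\bigr)$, whose $\log_\xi$-matrix is zero), a one-rank hull code of length $1$, and a one-rank hull code of length $2$; concatenating self-orthogonal blocks, which contribute nothing to $\log_\xi(\mathcal{G}\odot_M\mathcal{G}^T)$, with one rank-one block then realizes every length $n$ at full distance. That decomposition is what keeps the paper's case analysis finite and independent of $n$, whereas your route forces you to track the parity of $n$ globally.

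However, your proposal has a genuine gap exactly where you flag ``the main obstacle.'' In the case $H_u=K_u$ (equivalently $u$ in the radical of $D+D^T$, where $D$ is the duality matrix) with $H_u\subseteq S$, your fallback --- replace $u$ by another self-orthogonal element from $S\setminus\{0,u^{*}\}$, where $u^{*}$ is ``the unique exceptional element with $H_{u^{*}}=S$'' --- is only justified when $D$ is symmetric: there $\mathcal{Q}$ degenerates to a nonzero linear form, $S$ is an index-$2$ subgroup, and $H_u=S$ pins down a single $u^{*}$. For a non-symmetric $D$ over $\mathbb{F}_2$ the set $S$ need not be a subgroup, its size can exceed $2^{e-1}$, and nothing in your argument rules out that every nonzero self-orthogonal $u$ lies in the radical with $H_u\subseteq S$; the claimed uniqueness of $u^{*}$ is unproved in that regime. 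The even-$n$ fallback is similarly unfinished: choosing $v'\notin H_u$ forces $\chi_u(v')=-1$, so the $(1,2)$ entry $\log_\xi\bigl(\chi_u(v)\chi_u(v')\bigr)$ vanishes only if $v\notin H_u$ as well, and you never verify that $\mathbb{F}_{2^e}\setminus H_u$ contains elements of both self-orthogonality types. The idea you are missing is the paper's closing contradiction argument: in the hardest configuration one expands $\chi_w(w)$ for a general $w=ax+bz+\sum_i n_iy_i$ and shows that if all the obstructing character values were $+1$ then every element of $\mathbb{F}_{2^e}$ would be self-orthogonal, i.e.\ $M$ would be skew-symmetric, which is excluded for $k=2$. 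That is what guarantees a usable non-self-orthogonal element in the required coset; without such an argument your case analysis does not close.
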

\begin{proof}
	It is clear from Theorem \ref{existence of 1 rank for geq 3} that there exists $x\in\mathbb{F}^*_{2^e}$ such that $\chi_x(x)=1$. Further, according to Theorem \ref{chi u(v)=1}, there exists $y_1,y_2,\cdots,y_{e-2}\in\mathbb{F}^*_{2^e}$ such that $\chi_x(y_i)=1$ for all $1\leq i \leq e-2$ and $\{x,y_1,y_2,\cdots,y_{e-2}\}$ is a set of $2$-linear independent elements. We can extend this set to a $2$-linear independent set of $\mathbb{F}_{2^e} $, say $\{x,y_1,\cdots,y_{e-2},z\}$. Thus, $\chi_x(z)=-1$. 
	
	To prove $d_1[n,2]_{2^e,M}=n$, it is enough to give a self-orthogonal code of length $2$, a one-rank hull code of length 1 and a one-rank hull code of length 2. It is clear to see that $\begin{pmatrix}
		x&x\\z&z
	\end{pmatrix}$ generates a self-orthogonal code of length 2.
	
	Now, if $\chi_{y_i}(x)=-1$ for some $i$, then $\mathcal{G}_1=\begin{pmatrix}
		x\\y_i
	\end{pmatrix}$ and $\mathcal{G}_2=\begin{pmatrix}
	x&x\\y_i&x+y_i
	\end{pmatrix}$ generates one-rank hull codes of length 1 and 2, respectively. Therefore, we assume $\chi_{y_i}(x)=1$ for all $i$.\\
	\textbf{Case 1:} If $\chi_{y_i}(y_i)=-1$ for some $i$, then $\mathcal{G}_1=\begin{pmatrix}
		x\\y_i
	\end{pmatrix}$ generates a one-rank hull code of length one. Either  $\begin{pmatrix}
	x+y_i&x\\z&z
	\end{pmatrix}$ or $\begin{pmatrix}
	x&y_i\\z&z
	\end{pmatrix}$ generates a one-rank hull code depending on the value of $\chi_{y_i}(z)$.\\
	\textbf{Case 2:} If $\chi_{y_i}(y_i)=1$ for all $i$. In this case, $\begin{pmatrix}
		x\\y_i
	\end{pmatrix}$ generates a self-orthogonal code of length one. Thus, it is enough to construct a one-rank hull code of length one.
	\begin{enumerate}
	\item Let $\chi_z(x)=1$, then $\mathcal{G}_1=\begin{pmatrix}
		x\\z
	\end{pmatrix}$ generates a one-rank hull code.
	\item Let  $\chi_z(x)=-1$. If $\chi_{y_i}(z)\chi_z(y_i)=-1 $, for some $i$, then $\mathcal{G}_1=\begin{pmatrix}
		z\\x+y_i
	\end{pmatrix}$ gives the required code. Thus, assume $\chi_{y_i}(z)\chi_z(y_i)=1$ for all $i$. We claim that $\chi_z(z)$ must be -1. Let $w=ax+bz+\sum_{i=1}^{e-2}n_iy_i$ and $\chi_{z}(z)=1$. Then \begin{align*}
\chi_w(w)=&(\chi_x(z))^{ab}(\chi_z(x))^{ab}\left( \chi_{z}( \sum_{i=1}^{e-2}n_iy_i) \right) ^b\left( \chi_{\sum_{i=1}^{e-2}n_iy_i}(z)\right) ^b\\
=&(-1)^{ab}(-1)^{ab}\left( \prod_{i=1}^{e-2}(\chi_{y_i}(z)\chi_{z}(y_i))^{n_i}\right)^b\\
&=1. 
	\end{align*} This implies that $M$ is a skew-symmetric duality, which is a contradiction. Thus, $\chi_z(z)=-1$. Hence, either $\begin{pmatrix}
	z\\y_i
	\end{pmatrix}$ or $\begin{pmatrix}
	z\\x+y_i
	\end{pmatrix}$ generates a one-rank hull code depending on the value of $\chi_z(y_i) $.
	\end{enumerate}
	
\end{proof}
Note that there is no one-rank hull code of length $1$ and $k=2$ for any duality over $\mathbb{F}_{p^2}$.  
\begin{theorem}
For all dualities $M$, where $\chi_u(u)=1$ for some $u\in\mathbb{F}^*_{p^2}$ and $p\neq2$, we have	$d_1[n,2]_{p^2,M}=n$.
\end{theorem}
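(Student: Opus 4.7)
The plan is to apply the Singleton bound $d \le n - \lceil k/e \rceil + 1 = n$ for the upper bound, and to construct an explicit $[n, p^2, n]$ additive code with one-rank hull for the matching lower bound.

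Fix $u \in \mathbb{F}_{p^2}^*$ with $\chi_u(u) = 1$ (given by hypothesis) and pick any $v \in \mathbb{F}_{p^2} \setminus \mathbb{F}_p u$, so that $\{u, v\}$ is an $\mathbb{F}_p$-basis of $\mathbb{F}_{p^2}$. By Theorem \ref{chi u(v)=1}, the set $\{w : \chi_u(w) = 1\}$ has exactly $p$ elements, and since $\mathbb{F}_p u$ is contained in it and has the same size, the two coincide. Hence $a := \log_\xi \chi_u(v) \in \mathbb{F}_p^*$, and by the symmetric argument applied to $M^T$, also $b := \log_\xi \chi_v(u) \in \mathbb{F}_p^*$. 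Write $c := \log_\xi \chi_v(v) \in \mathbb{F}_p$. My plan is to try the generator matrix
$$\mathcal{G} = \begin{pmatrix} u & u & \cdots & u \\ v_1 & v_2 & \cdots & v_n \end{pmatrix}, \quad v_i = \alpha_i v + \beta_i u,\ \alpha_i \in \mathbb{F}_p^*,\ \beta_i \in \mathbb{F}_p.$$
The requirement $\alpha_i \neq 0$ places each $v_i$ outside $\mathbb{F}_p u$, which is exactly what is needed so that every nonzero $\mathbb{F}_p$-combination of the two rows has all $n$ coordinates nonzero, giving minimum distance $n$. Bimultiplicativity of the characters yields the log-matrix
$$L = \begin{pmatrix} 0 & a \sum_i \alpha_i \\ b \sum_i \alpha_i & c \sum_i \alpha_i^2 + (a + b) \sum_i \alpha_i \beta_i \end{pmatrix} \pmod{p},$$
so, by Corollary \ref{one-rank hull}, one-rank hull is equivalent to $\mathrm{rank}(L) = 1$. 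Imposing $\sum \alpha_i \equiv 0 \pmod p$ (achievable for $n \ge 2$ and odd $p$) kills both off-diagonals, and the task reduces to arranging $L_{22} \not\equiv 0$.

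The cases split by whether $a + b$ vanishes modulo $p$. If $a + b \not\equiv 0 \pmod p$, take $\beta_1 = t$ and $\beta_i = 0$ for $i \ge 2$; then $L_{22} = c \sum \alpha_i^2 + (a + b)\alpha_1 t$ runs over all of $\mathbb{F}_p$ as $t$ varies, so it can be made nonzero. If $a + b \equiv 0 \pmod p$ (this corresponds to the symmetric part $(D + D^T)/2$ of the duality matrix having rank $1$ with $u$ in its kernel, forcing every element outside $\mathbb{F}_p u$ to be non-self-orthogonal, so $c \ne 0$), then $L_{22} = c \sum \alpha_i^2$, and it suffices to pick $\alpha_i$'s with $\sum \alpha_i = 0$ and $\sum \alpha_i^2 \ne 0$. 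For $p \ge 5$ this is routine (for example, $\alpha_1 = \cdots = \alpha_{n-1} = 1$, $\alpha_n = -(n-1)$, with a small adjustment if $p \mid n-1$). For $p = 3$ with $3 \nmid n$, every $\alpha_i^2 = 1$ gives $\sum \alpha_i^2 \equiv n \not\equiv 0 \pmod 3$, so the construction works.

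The stubborn case is $p = 3$, $3 \mid n$, and $a + b \equiv 0 \pmod 3$: every valid choice of $\alpha_i \in \mathbb{F}_3^*$ forces $\sum \alpha_i^2 \equiv n \equiv 0 \pmod 3$, so $L_{22} = 0$ and the primary construction collapses to an ACD code. The plan here is to abandon the constant-$u$ first row and use instead
$$\mathcal{G}' = \begin{pmatrix} v & u & u & \cdots & u \\ u & v & v & \cdots & v \end{pmatrix}.$$
Every column of $\mathcal{G}'$ is an $\mathbb{F}_p$-basis of $\mathbb{F}_{p^2}$, so the code has distance $n$. A direct $2 \times 2$ computation using $b \equiv -a \pmod 3$ and $n - 1 \equiv -1 \pmod 3$ gives
$$L' = \begin{pmatrix} c & a \\ -a & -c \end{pmatrix} \pmod{3},$$
whose determinant is $a^2 - c^2 \equiv 1 - 1 = 0 \pmod 3$ (since $a, c \in \mathbb{F}_3^*$ gives $a^2 = c^2 = 1$), while $L'_{11} = c \neq 0$; hence $\mathrm{rank}(L') = 1$. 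The main obstacle is isolating and handling this exceptional sub-case, because the natural construction degenerates and a structurally different generator matrix is required.
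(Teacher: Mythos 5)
Your proposal is correct and arrives at the result by a genuinely different organization of essentially the same construction. The paper fixes a \emph{non-self-orthogonal} $v$ (available because skew-symmetric dualities are excluded) and splits into three cases according to $n \bmod p$ (namely $n=mp$, $n=mp+1$, and the rest), writing an explicit two-row matrix in each case and checking the $2\times2$ log-matrix directly; when $p\mid n$ it perturbs the \emph{first} row to $(u-v,u+v,u,\dots,u)$ so that the $(1,1)$ entry becomes $2c\neq 0$. You instead keep the first row constant, take an arbitrary $v\notin\mathbb{F}_p u$, parametrize the second row by coefficients $(\alpha_i,\beta_i)$, and reduce one-rank-hullness to the congruences $\sum_i\alpha_i\equiv 0$ and $c\sum_i\alpha_i^2+(a+b)\sum_i\alpha_i\beta_i\not\equiv 0$, splitting on whether $a+b\equiv 0$. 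This buys a transparent explanation of why the constructions work and isolates exactly one degenerate configuration ($p=3$, $3\mid n$, $a+b\equiv 0$), which you then handle with a structurally different matrix; your derivation of $a\neq 0$ and $b\neq 0$ from Theorem \ref{chi u(v)=1} applied to both $M$ and $M^T$ is also more careful than the paper's one-line appeal. I verified your log-matrix computations, including the exceptional case, and they are correct.

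Three small points to tighten. First, your claim that $c\neq 0$ when $a+b\equiv 0$ silently uses the standing assumption (stated in the paper immediately before this block of theorems) that $M$ is not skew-symmetric: if $a+b\equiv 0$ and $c=0$ the associated quadratic form vanishes identically, $M$ is skew-symmetric, and no one-rank hull code with $k=2$ exists at all, so invoke that exclusion explicitly. Second, your sample choice $\alpha_1=\cdots=\alpha_{n-1}=1$, $\alpha_n=-(n-1)$ for $p\geq 5$ needs adjusting not only when $p\mid n-1$ (where $\alpha_n=0$) but also when $p\mid n$ (where $\sum_i\alpha_i^2=n(n-1)\equiv 0$); the existence claim remains routine --- replacing $\alpha_1,\alpha_2$ by $\alpha_1+t,\alpha_2-t$ changes $\sum_i\alpha_i^2$ by $2t(t+\alpha_1-\alpha_2)$ and excludes at most four values of $t$ --- but this should be said. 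Third, when $L=0$ the resulting code is self-orthogonal (hull of rank $2$), not ACD.
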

\begin{proof}
	Let $\chi_u(u)=1$ for some $u\in\mathbb{F}^*_{p^2}$. We are taking  dualities that are not skew-symmetric, therefore, there exists $v\in\mathbb{F}^*_{p^2}$ such that $\chi_v(v)=\xi^s\neq 1$, for some $1\leq s\leq p-1$. Also, from Theorem \ref{chi u(v)=1},   $\chi_u(v)\neq 1$.\\
	\textbf{Case 1:} If  $n=mp$, for some positive integer $m$. Then choose   $$\mathcal{G}=\begin{pmatrix}
		u-v&u+v&u&\cdots&u\\
		v&v&\undermat{(n-2)}{v&\cdots&v}
	\end{pmatrix}.$$  \\
	\textbf{Case 2:}   If $n=mp+1$, for any   integer $m>0$.  Then choose   $$\mathcal{G}=\left( \begin{array}{ccccccccc}
		u&\cdots&u&u&\cdots&u&u&\cdots&u\\
		\undermat{(p+1)/2}{-v&\cdots&-v}&\undermat{(p+1)/2}{v&\cdots&v}&\undermat{(m-1)p}{v&\cdots&v}
	\end{array}\right). $$ \\
	\textbf{Case 3:} If $n=mp+k$, for any  integer  $m\geq0$ and $2\leq k\leq p-1$. Then choose   $$\mathcal{G}=\begin{pmatrix}
		u&u&\cdots&u\\
		-(n-1)v&\undermat{(n-1)}{v&\cdots&v}
	\end{pmatrix}.$$ \\
	In each case, an additive code generated by $\mathcal{G}$ is a one-rank hull code with distance $n$. Since the matrix $\log_\xi(\mathcal{G}\odot_M\mathcal{G}^T)$  is 1 rank matrix of the type $\begin{pmatrix}
		a&*\\
		0&0
	\end{pmatrix}$, $\begin{pmatrix}
		0&0\\
		0&a
	\end{pmatrix}$ and $\begin{pmatrix}
		0&0\\
		0&a
	\end{pmatrix}$, respectively, where $a\neq 0$ and $*$ could be  0 or non-zero entry.
\end{proof}
\begin{theorem}
	 For all dualities $M$, where  $\chi_u(u)\neq1$ for all $u\in\mathbb{F}^*_{p^2}$ and $p\neq2,3$, we have $d_1[n,2]_{p^2,M}=n$.
\end{theorem}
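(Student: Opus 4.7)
The Singleton bound gives $d_1[n,2]_{p^2,M} \le n$, so it suffices to produce, for each $n \ge 2$, an $[n,p^2,n]$ additive code whose hull under $M$ has rank exactly one. I first fix a working basis: by Theorem \ref{existence of u for any s over Fp2} the hypothesis $\chi_u(u)\neq 1$ for all $u\in\mathbb{F}_{p^2}^*$ forces $u \mapsto \chi_u(u)$ to surject onto $\{\xi^s : s \in \mathbb{F}_p^*\}$, so pick $u$ with $\chi_u(u) = \xi^s$, $s \in \mathbb{F}_p^*$. By Theorem \ref{chi u(v)=1}, $\ker \chi_u$ is a one-dimensional $\mathbb{F}_p$-subspace of $\mathbb{F}_{p^2}$, which cannot contain $u$ itself; pick any nonzero $v$ in it, so $\chi_u(v)=1$ and $\{u,v\}$ is an $\mathbb{F}_p$-basis of $\mathbb{F}_{p^2}$. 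Let $q$ denote the (anisotropic) quadratic form of Lemma \ref{selforthogonal} and write $q(v)=t\neq 0$.

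For $n\ge 3$ the template is $\mathcal{G}_1=(u,u,\dots,u)$, $\mathcal{G}_2=(w_1,\dots,w_n)$ with each $w_i\in\mathbb{F}_{p^2}\setminus\mathbb{F}_p u$. Every column $(u,w_i)^T$ is then an $\mathbb{F}_p$-basis, so the resulting code has distance $n$. Setting $b(x,y):=\log_\xi\chi_x(y)\bmod p$ and using bilinearity of $b$ in each argument, the log matrix collapses to
\[
L=\begin{pmatrix} ns & b(u,W) \\ b(W,u) & \sum_i q(w_i) \end{pmatrix}, \qquad W:=\sum_i w_i.
\]
Imposing $W=0$ kills both off-diagonal entries simultaneously; rank one then reduces to exactly one of $ns$ and $\sum q(w_i)$ being zero modulo $p$. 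I therefore aim for $\sum q(w_i)\equiv 0\pmod p$ when $p\nmid n$ and $\sum q(w_i)\not\equiv 0\pmod p$ when $p\mid n$. Existence of admissible $(w_1,\dots,w_n)$ with $\sum w_i=0$, $w_i\notin\mathbb{F}_p u$, and prescribed value of $\sum q(w_i)$ follows from a Chevalley--Warning count: the zero locus of the quadratic form $(w_1,\dots,w_{n-1})\mapsto \sum_{i<n}q(w_i)+q\bigl(\sum_{i<n}w_i\bigr)-T$ in $2(n-1)$ variables over $\mathbb{F}_p$ has at least $p^{2n-4}$ points, comfortably more than the positive-codimension exclusion loci, as long as $p\ge 5$ and $n\ge 3$.

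The case $n=2$ is genuinely special because $W=0$ forces $w_2=-w_1$ and $\sum q(w_i)=2q(w_1)\neq 0$, which contradicts the $p\nmid 2$ requirement. I replace the template by
\[
\mathcal{G}=\begin{pmatrix} u & u' \\ v & v' \end{pmatrix},
\]
where $u'$ has $\chi_{u'}(u')=\xi^{-s}$ (available by Theorem \ref{existence of u for any s over Fp2}) and $v'\neq 0$ lies in $\ker\chi_{u'}$. Then $L_{11}=s+(-s)=0$ and $L_{12}=b(u,v)+b(u',v')=0+0=0$, so the top row of $L$ is zero and the rank-one condition reduces to $L_{22}=t+q(v')\neq 0$. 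As $v'$ ranges over the $p-1$ nonzero elements of the line $\ker\chi_{u'}$, $q(v')=\lambda^2 q(v_0)$ with $\lambda\in\mathbb{F}_p^*$ and $v_0$ a fixed generator, so $q(v')$ takes exactly $(p-1)/2$ distinct nonzero values; since $p\ge 5$, at least one of them differs from $-t$. The basis conditions $v\notin\mathbb{F}_p u$ and $v'\notin\mathbb{F}_p u'$ are automatic, since $u\notin\ker\chi_u$ and $u'\notin\ker\chi_{u'}$.

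The main obstacle is the rigidity forced by the anisotropy of $q$: every nonzero vector contributes a nonzero $q$-value, so the only freedom to hit prescribed sums is the arithmetic of the small field $\mathbb{F}_p$. The hypothesis $p\neq 3$ enters precisely because we need $(p-1)/2\ge 2$: for $n=2$ so that $q$ takes more than one value along the line $\ker\chi_{u'}$, and for $n\ge 3$ so that the Chevalley--Warning slack is enough to stay clear of the degenerate exclusion loci where the construction collapses.
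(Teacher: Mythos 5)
Your setup is sound: the reduction of the distance-$n$ condition to ``every column is an $\mathbb{F}_p$-basis'', the bilinear collapse of the log matrix to $\mathrm{diag}(ns,\sum_i q(w_i))$ once $W=0$, and the $n=2$ construction (pairing $\chi_u(u)=\xi^{s}$ with $\chi_{u'}(u')=\xi^{-s}$ and adjusting $v'$ along the line $\ker\chi_{u'}$, using $p\ge 5$ to get two distinct square classes) are all correct. The genuine gap is the existence claim for $n\ge 3$. Warning's second theorem gives at least $p^{2n-4}$ solutions of $\sum_{i<n}q(w_i)+q\bigl(\sum_{i<n}w_i\bigr)=T$ in $2(n-1)$ variables, but each exclusion locus $\{w_i\in\mathbb{F}_p u\}$ is a hyperplane whose intersection with that quadric also has on the order of $p^{2n-4}$ points (the restriction of a rank-$(2n-2)$ form to a hyperplane has rank at least $2n-4$, so the restricted equation has $p^{2n-4}+O(p^{n-1})$ solutions). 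So your guaranteed solution count and the size of even a single exclusion locus are of the same order of magnitude; ``comfortably more'' is not justified. Even upgrading to the exact count $\approx p^{2n-3}$ for the full quadric, a union bound over the $n$ exclusion loci needs roughly $p>n$, which fails for all $n\ge p$; in particular, in your subcase $p\mid n$ (where you need $\sum q(w_i)\neq 0$) one always has $n\ge p$, so the union bound can never close there. The existence statement is true, but it needs an inclusion--exclusion or character-sum count, an induction on $n$, or an explicit construction; also note that for $T\neq 0$ Warning's second theorem presupposes that at least one solution exists, which you should justify separately via representability by a form of rank $\ge 2$.

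For comparison, the paper sidesteps the counting entirely with explicit matrices using at most three distinct symbols per row: for $n\not\equiv 1\pmod p$ it takes first column $(y,x)$ or $(y,2x)$ with $\chi_y(y)=\xi^{-s(n-1)}$ and $\chi_y(x)=1$, and all remaining columns equal to $(u,v)$ with $\chi_u(v)=1$; the hypothesis $p\neq 3$ enters there through the identity $\log\bigl(\chi_{2x}(2x)\chi_v(v)^{\,n-1}\bigr)=-3t(n-1)$, which must be nonzero. If you want to keep your framework, the cheapest repair is to build the $w_i$'s explicitly (e.g., in $\pm$-pairs plus a bounded correction term chosen to hit the required value of $\sum q(w_i)$), rather than relying on Chevalley--Warning slack that does not survive the exclusion of $n$ hyperplanes.
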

\begin{proof}
		Let $u\in\mathbb{F}^*_{p^2}$ such that $\chi_u(u)=\xi^s$, for some $1\leq s\leq p-1$. From Corollary \ref{xi^a then xi^-a}, we have an element $w\in\mathbb{F}^*_{p^2}$ such that $\chi_w(w)=\xi^{-s}$. Now, from Theorem \ref{chi u(v)=1}, there exists $v,z\in\mathbb{F}^*_{p^2}$ such that $\chi_u(v)=1$ and $\chi_w(z)=1$, where $\{u,v\}$ and $\{w,z\}$ are $p$-linear independent sets. Let $\chi_v(v)=\xi^t$, for some $1\leq t\leq p-1$.\\
		
				\textbf{Case 1:}   Let $n=mp+1$, for any   integer $m>0$.
			\begin{enumerate}
				\item If $\chi_z(z)\neq\xi^{-t}$, then choose   $$\mathcal{G}=\left( \begin{array}{ccccccccc}
					w&\cdots&w&	u&\cdots&u&u&\cdots&u\\
					\undermat{(p+1)/2}{z&\cdots&z}&	\undermat{(p+1)/2}{v&\cdots&v}&\undermat{(m-1)p}{v&\cdots&v}
				\end{array}\right).$$ 
				\item  If $\chi_z(z)=\xi^{-t}$, then choose   $$\mathcal{G}=\left( \begin{array}{cccccccccc}
					w&w&\cdots&w&	u&\cdots&u&u&\cdots&u\\
					\undermat{(p+1)/2}{2z&z&\cdots&z}&	\undermat{(p+1)/2}{v&\cdots&v}&\undermat{(m-1)p}{v&\cdots&v}
				\end{array}\right). $$\\
			\end{enumerate}
		
		\textbf{Case 2:} Let $n\neq mp+1$, for any  integer  $m\geq0$.  Since $s(n-1)\neq 0\mod p$, by Corollary \ref{xi^a then xi^-a}, we have an element $y\in\mathbb{F}^*_{p^2}$ such that $\chi_y(y)=\xi^{-s(n-1)}$. Also, from Theorem \ref{chi u(v)=1}, there exists $x\in\mathbb{F}^*_{p^2}$ such that $\chi_y(x)=1$. 
		\begin{enumerate}
			\item If $\chi_x(x)\neq \xi^{-t(n-1)}$ then
			 choose   $$\mathcal{G}=\begin{pmatrix}
				y&u&\cdots&u\\
				x&\undermat{(n-1)}{v&\cdots&v}
			\end{pmatrix}.$$
			\item If $\chi_x(x)=\xi^{-t(n-1)}$ then
			choose   $$\mathcal{G}=\begin{pmatrix}
				y&u&\cdots&u\\
				2x&\undermat{(n-1)}{v&\cdots&v}
			\end{pmatrix}.$$
		\end{enumerate}  
		In each case, an additive code generated by $\mathcal{G}$ is a one-rank hull code with distance $n$. Since the matrix $\log_\xi(\mathcal{G}\odot_M\mathcal{G}^T)$  is 1 rank matrix of the type $\begin{pmatrix}
			0&0\\
			*&a
		\end{pmatrix}$, where $a\neq 0$ and $*$ could be  0 or non-zero entry.
\end{proof}
\begin{theorem}
  For all dualities $M$, where  $\chi_u(u)\neq1$, for all $u\in\mathbb{F}^*_{9}$, we have	$d_1[n,2]_{9,M}=n\ (n\geq 3)$.
\end{theorem}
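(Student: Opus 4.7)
My plan is to establish $d_1[n,2]_{9,M} \geq n$ by an explicit construction for every $n \geq 3$; combined with the Singleton bound $d \leq n - \lceil k/e \rceil + 1 = n$ (with $k=e=2$), this gives equality. I will exhibit a $2 \times n$ generator matrix $\mathcal{G}$ over $\mathbb{F}_9$ with two properties: (i) every column consists of two $\mathbb{F}_3$-linearly independent entries, so that all $8$ non-zero codewords have full support and the minimum distance is $n$; and (ii) $\log_\xi(\mathcal{G} \odot_M \mathcal{G}^T)$ has the form $\begin{pmatrix} 0 & 0 \\ * & a \end{pmatrix}$ with $a \neq 0$ and hence rank exactly $1$, so that by Corollary \ref{one-rank hull} the code has one-rank hull.

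Since $\chi_u(u) \neq 1$ for every $u \in \mathbb{F}_9^*$, Theorem \ref{existence of u for any s over Fp2} provides $4$ elements with $\chi_u(u) = \xi^s$ for each $s \in \{1,2\}$. I fix $u$ with $\chi_u(u) = \xi^s$ and, by Corollary \ref{xi^a then xi^-a}, choose $w$ with $\chi_w(w) = \xi^{-s}$. Theorem \ref{chi u(v)=1} then yields $v_0 \in u^\perp \setminus \mathbb{F}_3 u$ and $z_0 \in w^\perp \setminus \mathbb{F}_3 w$. Writing $\chi_{v_0}(u) = \xi^\gamma$, $\chi_{v_0}(v_0) = \xi^t$, $\chi_{z_0}(z_0) = \xi^{t'}$, the two key identities I will use repeatedly are $\chi_u(u+v_0)\chi_u(2u+v_0) = \xi^{3s} = 1$, which keeps $\chi_{\mathbf{r}_1}(\mathbf{r}_2) = 1$ when two consecutive $v_0$'s in $\mathbf{r}_2$ are replaced by $u+v_0, 2u+v_0$, and $\chi_{u+v_0}(u+v_0)\chi_{2u+v_0}(2u+v_0) = \xi^{2s + 3\gamma + 2t} = \xi^{2s + 2t}$, which injects an extra factor $\xi^{2s}$ into $\chi_{\mathbf{r}_2}(\mathbf{r}_2)$ relative to the unmodified version.

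The construction splits into three cases on $n \bmod 3$. For $n \equiv 0$, set $\mathbf{r}_1 = (u,\ldots,u)$ and $\mathbf{r}_2 = (v_0,\ldots,v_0, u+v_0, 2u+v_0)$; one computes $\chi_{\mathbf{r}_1}(\mathbf{r}_1) = \xi^{sn} = 1$ and $\chi_{\mathbf{r}_2}(\mathbf{r}_2) = \xi^{tn + 2s} = \xi^{2s} \neq 1$. For $n \equiv 2$, take $\mathbf{r}_1 = (u,\ldots,u, w)$ with $n-1$ copies of $u$, so $\chi_{\mathbf{r}_1}(\mathbf{r}_1) = \xi^{s(n-2)} = 1$, paired either with $\mathbf{r}_2 = (v_0,\ldots,v_0, z_0)$ when $t + t' \not\equiv 0 \pmod 3$, or with $\mathbf{r}_2 = (u+v_0, 2u+v_0, v_0,\ldots,v_0, z_0)$ otherwise. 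For $n \equiv 1$, $\mathbf{r}_1 = (u,\ldots,u, w, w)$ satisfies $\chi_{\mathbf{r}_1}(\mathbf{r}_1) = \xi^{s(n-4)} = 1$, paired analogously with $(v_0,\ldots,v_0, z_0, z_0)$ or its modified version $(u+v_0, 2u+v_0, v_0,\ldots,v_0, z_0, z_0)$. The case hypotheses $n \geq 3$ guarantee the requisite number of $v_0$'s is non-negative in every subcase; column $\mathbb{F}_3$-independence is immediate from the choice of $v_0, z_0$ and the fact that $u + v_0, 2u + v_0 \notin \mathbb{F}_3 u$.

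The chief obstacle is characteristic-$3$-specific: the ``rescale $x$ to $2x$'' device of the preceding theorem fails here because $\chi_{2x}(2x) = \chi_x(x)^4 = \chi_x(x)$ whenever $\xi^3 = 1$, so diagonal characters are invariant under $\mathbb{F}_3^*$ scaling. This forces the use of additive modifications $(u+v_0, 2u+v_0)$ instead of scalar ones; their combined effect multiplies $\chi_{\mathbf{r}_2}(\mathbf{r}_2)$ by $\xi^{2s}$ without disturbing $\chi_{\mathbf{r}_1}(\mathbf{r}_2)$ (thanks to $\xi^{3s} = 1$), and the cross-coupling term $3\gamma$ conveniently vanishes modulo $3$. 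The essential verification is that the unmodified and modified choices of $\mathbf{r}_2$ yield values of $\chi_{\mathbf{r}_2}(\mathbf{r}_2)$ differing by precisely $\xi^{2s} \neq 1$, so that at least one of them is non-trivial in every case.
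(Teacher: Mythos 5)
Your proof is correct and follows essentially the same strategy as the paper's: an explicit $2\times n$ generator matrix for each residue of $n$ modulo $3$, with all columns $\mathbb{F}_3$-independent and with a pair of perturbed coordinates of the form $v_0\pm u$ used to multiply one diagonal entry of $\mathcal{G}\odot_M\mathcal{G}^T$ by $\xi^{2s}\neq 1$ while leaving the relevant off-diagonal entry equal to $1$, so that $\log_\xi(\mathcal{G}\odot_M\mathcal{G}^T)$ has rank $1$. The paper perturbs the row of $u$'s by $u\pm w$ and keeps the other row essentially constant, whereas you perturb the second row using elements of $u^{\perp}$ and $w^{\perp}$ and add a dichotomy on $t+t'$; these are implementation details of the same construction, and your verifications (including the observation that scaling by $2$ is useless in characteristic $3$, which is exactly why this theorem is separated from the $p\neq 2,3$ case) all check out.
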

\begin{proof}
		Let $u\in\mathbb{F}^*_{9}$ such that $\chi_u(u)=\xi$. From Corollary \ref{xi^a then xi^-a}, we have an element $w\in\mathbb{F}^*_{9}$ such that $\chi_w(w)=\xi^{2}$. Now, from Theorem \ref{chi u(v)=1}, there exists $z\in\mathbb{F}^*_{9}$ such that  $\chi_w(z)=1$, where  $\{w,z\}$ is $3$-linear independent set. \\
	\textbf{Case 1:} Let  $n=3m$ for some  integer $m>0$. Then choose   $$\mathcal{G}=\begin{pmatrix}
		u-w&u+w&u&\cdots&u\\
		w&w&\undermat{n-2}{w&\cdots&w}
	\end{pmatrix}.$$  \\
	\textbf{Case 2:} Let  $n=3m+4$ for some  integer $m\geq 0$. Then choose   $$\mathcal{G}=\left( \begin{array}{ccccccc}
		u&u&w&w&u&\cdots&u\\
	    (u-w)&-(u-w)&(u-w)&-(u-w)&\undermat{3m}{w&\cdots&w}
	\end{array}\right) .$$  \\
	\textbf{Case 3:} Let  $n=3m+5$ for some  integer $m\geq 0$. Then choose   $$\mathcal{G}=\left( \begin{array}{cccccccc}
		u&u&u&u&w&u&\cdots&u\\
		(u-w)&-(u-w)&(u+w)&-(u+w)&z&\undermat{3m}{w&\cdots&w}
	\end{array}\right) .$$  \\
		In each case, an additive code generated by $\mathcal{G}$ is a one-rank hull code with distance $n$. Since the matrix $\log_\xi(\mathcal{G}\odot_M\mathcal{G}^T)$  is 1 rank matrix of the type $\begin{pmatrix}
		a&0\\
		*&0
	\end{pmatrix}$, $\begin{pmatrix}
		0&0\\
		0&a
	\end{pmatrix}$ and $\begin{pmatrix}
		0&0\\
		*&a
	\end{pmatrix}$, respectively, where $a\neq 0$ and $*$ could be  0 or non-zero entry.
\end{proof}
\begin{remark}
	If $n=2$, then $d_1[2,2]_{9,M}\geq 1 $ for all dualities where  $\chi_u(u)\neq1$ for all $u\in\mathbb{F}^*_{9}$.
\end{remark}
\begin{theorem}\label{non-symmetric n-1}
	For non-symmetric dualities over $\mathbb{F}_4$, we have $d_1[n,2]_{4,M}=n-1$.
\end{theorem}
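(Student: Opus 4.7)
The plan is to establish both bounds $d_1[n,2]_{4,M} \le n-1$ and $d_1[n,2]_{4,M} \ge n-1$. The driving observation is that for non-symmetric dualities over $\mathbb{F}_4$ we have $\chi_u(u) = -1$ for every $u \in \mathbb{F}_4^*$ (see the remark after Theorem \ref{one-rank hull over p^2}), so $\chi_\textbf{x}(\textbf{x}) = (-1)^{wt(\textbf{x})}$ for any vector $\textbf{x}$.

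For the upper bound, I will assume a one-rank hull $[n, 4, n]$ code $C$ exists with generators $\textbf{u}, \textbf{v}$ and derive a contradiction. The three non-zero codewords $\textbf{u}$, $\textbf{v}$, $\textbf{u}+\textbf{v}$ each have weight $n$, hence $\chi_\textbf{u}(\textbf{u}) = \chi_\textbf{v}(\textbf{v}) = \chi_{\textbf{u}+\textbf{v}}(\textbf{u}+\textbf{v}) = (-1)^n$. Expanding the third via $\chi_{\textbf{u}+\textbf{v}}(\textbf{u}+\textbf{v}) = \chi_\textbf{u}(\textbf{u})\chi_\textbf{u}(\textbf{v})\chi_\textbf{v}(\textbf{u})\chi_\textbf{v}(\textbf{v})$ yields $\chi_\textbf{u}(\textbf{v})\chi_\textbf{v}(\textbf{u}) = (-1)^n$, which in $\mathbb{F}_2$ reads $\alpha + \beta \equiv n \pmod{2}$, where $\alpha = \log_\xi \chi_\textbf{u}(\textbf{v})$ and $\beta = \log_\xi \chi_\textbf{v}(\textbf{u})$. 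By Corollary \ref{one-rank hull} the log matrix $L = \begin{pmatrix} n & \alpha \\ \beta & n \end{pmatrix} \pmod{2}$ must have rank $1$, i.e., be nonzero with $\det L \equiv n + \alpha\beta \equiv 0 \pmod{2}$. A case split on parity closes out: if $n$ is even, then $\alpha = \beta$ combined with $\alpha\beta = 0$ forces $\alpha = \beta = 0$ and hence $L = 0$; if $n$ is odd, then $\alpha + \beta = 1$ forces $\alpha\beta = 0 \ne 1 = n$.

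For the lower bound, I construct an explicit $[n, 4, n-1]$ one-rank hull code with generators
\[ \textbf{u} = (u, u, \ldots, u, 0), \qquad \textbf{v} = (v, v, \ldots, v, w), \]
where $u, v, w \in \mathbb{F}_4^*$ and $u \ne v$. The three non-zero codewords have weights $n-1$, $n$, and $n$, so the minimum distance equals $n-1$. The log matrix reduces mod $2$ to $\begin{pmatrix} n-1 & (n-1)\log_\xi\chi_u(v) \\ (n-1)\log_\xi\chi_v(u) & n \end{pmatrix}$. When $n$ is odd the factor $n-1$ vanishes mod $2$, leaving $\begin{pmatrix} 0 & 0 \\ 0 & 1 \end{pmatrix}$, which has rank $1$ for any admissible choice. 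When $n$ is even the matrix becomes $\begin{pmatrix} 1 & \log_\xi\chi_u(v) \\ \log_\xi\chi_v(u) & 0 \end{pmatrix}$, which has rank $1$ iff $\chi_u(v) = 1$ or $\chi_v(u) = 1$; Theorem \ref{chi u(v)=1} provides a non-zero $v$ with $\chi_u(v) = 1$, and since $\chi_u(u) = -1$ in our setting such $v$ necessarily differs from $u$, furnishing the required pair.

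The main obstacle is coupling the parity of $n$ with the rank condition. The upper bound succeeds precisely because both parities simultaneously fail to yield a rank-$1$ log matrix at full weight. The lower bound is parity-sensitive: the odd case is automatic, but the even case is more delicate, since the log matrix does not collapse on its own and one must invoke Theorem \ref{chi u(v)=1} to produce a pair $(u, v)$ with $\chi_u(v) = 1$ so that the determinant of the reduced log matrix vanishes.
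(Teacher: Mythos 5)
Your proof is correct and follows essentially the same route as the paper: both bounds come from forcing the $2\times 2$ matrix $\log_\xi(\mathcal{G}\odot_M\mathcal{G}^T)$ to have rank $1$, with the upper bound excluded by a parity case split and the lower bound realized by the same generator shape $(u,\dots,u,0)$, $(v,\dots,v,w)$. The only cosmetic differences are that you derive the constraint $\alpha+\beta\equiv n \pmod 2$ globally from the weight of $\textbf{u}+\textbf{v}$ via bilinearity, whereas the paper uses the coordinate-wise relation $\chi_x(y)=-\chi_y(x)$, and that your even-$n$ construction invokes Theorem \ref{chi u(v)=1} to choose $v$, where the paper's choice $w=v$ together with that skew relation makes the rank condition automatic.
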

\begin{proof}
	Observe that for non-symmetric dualities over $\mathbb{F}_4$, we have $\chi_x(x)=-1$, for all $x\in \mathbb{F}^*_{4}$ and $\chi_x(y)=-\chi_y(x)$,  for all $x\neq y\in\mathbb{F}^*_4$. Let $\mathcal{G}=\begin{pmatrix}
		u_1&u_2&\cdots&u_n\\
		v_1&v_2&\cdots&v_n\\
	\end{pmatrix}$ be a generator matrix of a one-rank hull code $C$ with distance $n$. This implies $u_i\neq 0$, $v_i\neq 0$ and $u_i\neq v_i$, for all $1\leq i\leq n$. Then $$\mathcal{G}\odot_M\mathcal{G}^T=\begin{pmatrix}
	\prod_{i=1}^{n}\chi_{u_i}(u_i)&\prod_{i=1}^{n}\chi_{u_i}(v_i)\\
	\prod_{i=1}^{n}\chi_{v_i}(u_i)&\prod_{i=1}^{n}\chi_{v_i}(v_i)
	\end{pmatrix}=\begin{pmatrix}
(-1)^n&\prod_{i=1}^{n}\chi_{u_i}(v_i)\\
	(-1)^n\prod_{i=1}^{n}\chi_{u_i}(v_i)&(-1)^n
		\end{pmatrix}$$
		If $n$ is even, then $\log_\xi(\mathcal{G}\odot_M\mathcal{G}^T)$ is equal to either $\begin{pmatrix}
			0&0\\0&0
		\end{pmatrix}$ or $\begin{pmatrix}
		0&1\\1&0
		\end{pmatrix}$.  If $n$ is odd, then $\log_\xi(\mathcal{G}\odot_M\mathcal{G}^T)$ is equal to either $\begin{pmatrix}
		1&0\\1&1
		\end{pmatrix}$ or $\begin{pmatrix}
		1&1\\0&1
		\end{pmatrix}$. In all cases, $C$ is a not one-rank hull code, which is a contradiction. \\ Let $\mathcal{G}=\begin{pmatrix}
		x&x&\cdots&x&0\\
		y&y&\cdots&y&y\\
		\end{pmatrix}$ be a generator matrix of an additive code $C$ of length $n$. Then for any non-symmetric duality, if $n$ is odd, then $\log_\xi(\mathcal{G}\odot_M\mathcal{G}^T)$ is equal to $\begin{pmatrix}
		0&0\\0&1
		\end{pmatrix}$,  if $n$ is even, then $\log_\xi(\mathcal{G}\odot_M\mathcal{G}^T)$ is equal to either $\begin{pmatrix}
		1&1\\0&0
		\end{pmatrix}$ or $\begin{pmatrix}
		1&0\\1&0
	\end{pmatrix}$. In all cases, $C$ is a one-rank hull code with distance $n-1$. Hence, $d_1[n,2]_{4,M}=n-1$ for non-symmetric dualities.
\end{proof}
\begin{theorem}\label{symmetric n-1}
	For symmetric dualities over $\mathbb{F}_4$, we have $d_1[n,2]_{4,M}=n-1$.
\end{theorem}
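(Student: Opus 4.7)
The plan is to prove $d_1[n,2]_{4,M}\leq n-1$ and then construct a code realising equality. Since the preceding discussion has excluded skew-symmetric dualities, I first establish the cross-product structure of the remaining symmetric dualities on $\mathbb{F}_4$. Using $1+\omega+(1+\omega)=0$ in $\mathbb{F}_4$ and expanding $\chi_x(0)=1$ for each $x\in\mathbb{F}_4^*$, together with symmetry $\chi_x(y)=\chi_y(x)$, one obtains $\chi_1(1)\chi_\omega(\omega)\chi_{1+\omega}(1+\omega)=1$. Combined with the assumption that $M$ is not skew-symmetric, this forces exactly one element $s\in\mathbb{F}_4^*$ to satisfy $\chi_s(s)=1$, while the other two elements $p,q$ satisfy $\chi_p(p)=\chi_q(q)=-1$, $\chi_p(q)=1$, and $\chi_p(s)=\chi_q(s)=-1$.

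For the upper bound, let $\mathcal{G}=\binom{\mathbf{u}}{\mathbf{v}}$ be a generator matrix of any $[n,4,n]$ code. The distance-$n$ condition forces $u_i,v_i\in\mathbb{F}_4^*$ and $u_i\neq v_i$ for each column $i$. A case check over the six allowable ordered column types shows that each column contributes to $\log_\xi(\mathcal{G}\odot_M\mathcal{G}^T)=\begin{pmatrix}\alpha&\beta\\\beta&\gamma\end{pmatrix}\pmod 2$ one of the triples $(1,0,1)$, $(1,1,0)$, or $(0,1,1)$, according to whether the column is of unordered type $\{p,q\}$, has $s$ in the second row, or has $s$ in the first row. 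Each of these triples satisfies $\alpha+\gamma\equiv\beta\pmod 2$, and this relation is preserved under column summation. However, the three symmetric rank-$1$ matrices over $\mathbb{F}_2$ correspond to $(\alpha,\beta,\gamma)\in\{(1,0,0),(0,0,1),(1,1,1)\}$, none of which satisfies $\alpha+\gamma\equiv\beta$. By Corollary \ref{one-rank hull}, no $[n,4,n]$ one-rank hull code exists, so $d_1[n,2]_{4,M}\leq n-1$.

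For the lower bound, the matrix $\mathcal{G}=\begin{pmatrix}p&p&\cdots&p&0\\q&q&\cdots&q&q\end{pmatrix}$ generates a code whose three non-zero codewords have weights $n-1$, $n$, and $n$ (using $p+q=s\neq 0$). A direct computation using the cross-products above yields $\log_\xi(\mathcal{G}\odot_M\mathcal{G}^T)=\begin{pmatrix}n-1&0\\0&n\end{pmatrix}\pmod 2$, which has rank $1$ for every parity of $n$, so this matrix generates an $[n,4,n-1]$ one-rank hull code by Corollary \ref{one-rank hull}. The main obstacle is establishing the column-wise parity relation $\alpha+\gamma\equiv\beta\pmod 2$; once it is in hand, both halves follow from short computations.
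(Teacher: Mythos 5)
Your proof is correct, and for the impossibility half it takes a genuinely different route from the paper's. The paper first normalizes the generator matrix: it deletes pairs of columns whose first (respectively second) row entry is the self-orthogonal element, checks that $\mathcal{G}\odot_M\mathcal{G}^T$ is unchanged by such deletions, and then runs a finite case analysis on a reduced matrix containing at most one self-orthogonal entry per row. You instead observe that every admissible column contributes a triple $(\alpha_i,\beta_i,\gamma_i)$ to $\log_\xi(\mathcal{G}\odot_M\mathcal{G}^T)$ satisfying the linear relation $\alpha_i+\gamma_i\equiv\beta_i\pmod 2$, that this relation is preserved under summation over columns, and that all three symmetric rank-one matrices over $\mathbb{F}_2$ violate it --- so a distance-$n$ code with $k=2$ is forced to be either self-orthogonal or ACD, never one-rank hull. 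This invariant argument is shorter and dispenses with the reduction step. The constructive halves are essentially the same; your single matrix with rows $(p,\dots,p,0)$ and $(q,\dots,q,q)$ handles both parities of $n$ at once, where the paper gives two matrices according to the parity of $n$. One small point to tighten: the cross-product value $\chi_p(q)=1$ (rather than $-1$) does not follow from the product relation $\chi_1(1)\chi_\omega(\omega)\chi_{1+\omega}(1+\omega)=1$ and non-skewness alone; you also need non-degeneracy of the duality, since if $\chi_p(q)=-1$ one computes $\chi_s(y)=1$ for all $y$, making $\chi_s$ the trivial character and $x\mapsto\chi_x$ non-injective. This is a one-line fix, and the paper itself asserts the same cross-product table without proof.
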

\begin{proof}
	Observe that for symmetric dualities over $\mathbb{F}_4=\{0,u,v,u+v\}$, we have $\chi_u(u)=1, \ \chi_v(v)=-1,\ \chi_u(v)=-1$ (Since we are considering non skew-symmetric dualities). Let $\mathcal{G}=\begin{pmatrix}
		u_1&u_2&\cdots&u_n\\
		v_1&v_2&\cdots&v_n\\
	\end{pmatrix}$ be a generator matrix of a one-rank hull code $C$ with distance $n$. This implies $u_i\neq 0$, $v_i\neq 0$ and $u_i\neq v_i$, for all $1\leq i\leq n$.  Suppose in the first row  $u$ appears $m$ times. If $m$ is even, remove all $m$ columns  containing $u$ in the first row from $\mathcal{G}$, say it is $\mathcal{G}'$. Then we have
	\begin{align*}
	\mathcal{G}\odot_M\mathcal{G}^T&=\begin{pmatrix}
		\prod_{u_i\neq u}\chi_{u_i}(u_i)(\chi_u(u))^m&\prod_{u_i\neq u}\chi_{u_i}(v_i)\prod\chi_u(v_j)\\
		\prod_{u_i\neq u}\chi_{v_i}(u_i)\prod\chi_{v_j}(u)&\prod_{u_i\neq u}\chi_{v_i}(v_i)\prod_{u_i= u}\chi_{v_j}(v_j)
	\end{pmatrix}\\\\
	&=\begin{pmatrix}
		\prod_{u_i\neq u}\chi_{u_i}(u_i)(1)^m&\prod_{u_i\neq u}\chi_{u_i}(v_i)(-1)^m\\
		\prod_{u_i\neq u}\chi_{v_i}(u_i)(-1)^m&\prod_{u_i\neq u}\chi_{v_i}(v_i)(-1)^m
	\end{pmatrix}\\\\
	&= \begin{pmatrix}
		\prod_{u_i\neq u}\chi_{u_i}(u_i)&\prod_{u_i\neq u}\chi_{u_i}(v_i)\\
		\prod_{u_i\neq u}\chi_{v_i}(u_i)&\prod_{u_i\neq u}\chi_{v_i}(v_i)
	\end{pmatrix}= \mathcal{G}'\odot_M\mathcal{G}'^T .
	\end{align*} 
	Hence, $\mathcal{G}'$ is a generator matrix of  one-rank hull code with distance $n-m$. If $m$ is odd,  remove $m-1$ (even) columns  containing $u$. Then $\mathcal{G}'$  is a generator matrix of one-rank hull code with distance $n-(m-1)$. Similarly, we do for the second row of $\mathcal{G}$. Therefore, we have at most one $u$ in both the rows of $\mathcal{G}'$.
	\begin{enumerate}
		\item There is no $u$ in both rows of $\mathcal{G}'$ then $\log_\xi(\mathcal{G}'\odot_M\mathcal{G}'^T)$ is equal to either $\begin{pmatrix}
			1 &0\\0& 1
		\end{pmatrix}$ or $\begin{pmatrix}
		0 &0\\0& 0
		\end{pmatrix}$, depending on $n$ is even or odd.
		
			\item There is one $u$ in both rows of $\mathcal{G}'$ then $\log_\xi(\mathcal{G}'\odot_M\mathcal{G}'^T)$ is equal to either $\begin{pmatrix}
			1 &0\\0& 1
		\end{pmatrix}$ or $\begin{pmatrix}
			0 &0\\0& 0
		\end{pmatrix}$, depending on $n$ is even or odd.
		
			\item There is one $u$ either in first or second row of $\mathcal{G}'$ then $\log_\xi(\mathcal{G}'\odot_M\mathcal{G}'^T)$ is equal to either $\begin{pmatrix}
			1 &1\\1& 0
		\end{pmatrix}$ or $\begin{pmatrix}
			0 &1\\1& 1
		\end{pmatrix}$, depending on $n$ is even or odd.
	\end{enumerate}
	In all the above cases, $\mathcal{G}'$ can not generate a one-rank hull code, which is a contradiction. Therefore, 	$d_1[n,2]_{4,M}\leq n-1$ for symmetric dualities.\\
	If $n$ is even then $\mathcal{G}=\begin{pmatrix}
		0&u&\cdots&u&u\\
		v&v&\cdots&v&0
	\end{pmatrix}$, and if $n$ is odd then  $\mathcal{G}=\begin{pmatrix}
	0&u&\cdots&u&u\\
	v&v&\cdots&v&v
	\end{pmatrix}$  generate an additive code $C$.  It is clear that $C$ is a one-rank hull code of length $n$ and distance $n-1$. Hence the result follows.
\end{proof}
\begin{theorem}
	For all dualities $M$ over $\mathbb{F}_{p^2}$, $p\neq 2,3$, we have $d_1[n,2n-2]_{p^2,M}=2$.
\end{theorem}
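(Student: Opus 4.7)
The plan is to sandwich $d_1[n,2n-2]_{p^2,M}$ between the Singleton upper bound and an explicit construction obtained by dualizing a $[n,p^2,n]$ one-rank hull code. The Singleton bound $d\leq n-\lceil k/e\rceil+1$ for an $[n,p^k,d]$ additive code over $\mathbb{F}_{p^e}$ specializes, for $e=2$ and $k=2n-2$, to $d\leq n-(n-1)+1=2$, yielding $d_1[n,2n-2]_{p^2,M}\leq 2$.

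For the matching lower bound I would invoke the preceding theorems of this section, which establish $d_1[n,2]_{p^2,M}=n$ for every non-skew-symmetric duality $M$ over $\mathbb{F}_{p^2}$ with $p\neq 2,3$. This furnishes a $[n,p^2,n]$ one-rank hull code $C$ with respect to $M$. By Proposition \ref{C^{M^T} is one-rank}, $C^{M^T}$ is then a one-rank hull code with respect to $M$, and the cardinality identity $|C|\,|C^{M^T}|=p^{2n}$ forces $|C^{M^T}|=p^{2n-2}$. Hence $C^{M^T}$ has parameters $[n,p^{2n-2},d']$ for some $d'\geq 1$, and the task reduces to verifying $d'\geq 2$.

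The crux is the following argument. Since $|C|=p^2$ and $C$ has distance $n$, every non-zero codeword of $C$ occupies all $n$ coordinates, so each coordinate projection $\pi_i\colon C\to\mathbb{F}_{p^2}$ is an $\mathbb{F}_p$-linear injection; a cardinality count then gives $\pi_i(C)=\mathbb{F}_{p^2}$ for every $i$. A putative weight-one codeword in $C^{M^T}$ has the form $(0,\dots,0,x,0,\dots,0)$ with $x\in\mathbb{F}_{p^2}^*$ in some position $i$, and its membership in $C^{M^T}$, rewritten through the identity $\chi'_a(b)=\chi_b(a)$ between $M^T$- and $M$-characters, reduces to $\chi_y(x)=1$ for every $y\in\pi_i(C)=\mathbb{F}_{p^2}$. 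Expanding $x=\sum_j a_j x_j$ in the generators of $\mathbb{F}_{p^2}$ and evaluating at each $y=x_k$ yields the linear system $D\mathbf{a}\equiv\mathbf{0}\pmod p$, where $D$ is the (non-singular) duality matrix; non-singularity forces $\mathbf{a}=\mathbf{0}$, hence $x=0$, a contradiction. Therefore $C^{M^T}$ admits no weight-one codeword, $d'\geq 2$, and combined with the Singleton bound we conclude $d_1[n,2n-2]_{p^2,M}=2$. The main delicate point is the bookkeeping between the $M$- and $M^T$-characters that converts weight-one membership in the dual into a homogeneous system governed by $D$, after which non-degeneracy of $D$ closes the argument.
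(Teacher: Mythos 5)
Your proof is correct and follows essentially the same route as the paper's: the Singleton bound gives $d\leq 2$, and the lower bound is obtained by dualizing an $[n,p^2,n]$ one-rank hull code (supplied by the preceding theorems) and ruling out weight-one codewords in the dual. The paper phrases that last exclusion as the impossibility of $\chi_{z_i}(x_i)=\chi_{z_i}(y_i)=1$ when $\{x_i,y_i\}$ is $p$-linearly independent, which is the same non-degeneracy of the duality matrix $D$ that you invoke.
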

\begin{proof}
	Let $\mathcal{G}=\begin{pmatrix}
		x_1&x_2&\cdots&x_n\\
			y_1&y_2&\cdots&y_n
	\end{pmatrix}$ be a generator matrix of one-rank hull code $C^{M^T}$ with distance $n$ then the sets $\{x_i,y_i\}$ are 2-linear independent for all $i$. And, $$C=\{(u_1,u_2,\dots,u_n)\in\mathbb{F}^n_{p^2}|\prod_{i=1}^{n}\chi_{u_i}(x_i)=1\text{\ and\ }\prod_{i=1}^{n}\chi_{u_i}(y_i)=1\}.$$ If, $\textbf{z}=(0,\cdots,0,z_i,0,\cdots,0)\in C$ is a vector of weight one, then $\chi_{z_i}(x_i)=1$ and $\chi_{z_i}(y_i)=1$. Which is not possible over $\mathbb{F}_{p^2}$ since $\{x_i,y_i\}$ is 2-linear independent set. Hence,  $d_1[n,2n-2]_{p^2,M}\geq2$. From the Singleton bound for the additive code $[n, p^{(2n-2)}]$, we have $d \leq 2$. Hence, combining the results, we get $d_1[n,2n-2]_{p^2,M}=2$.
\end{proof}
\begin{theorem}\label{2n-2}
	For all dualities $M$ over $\mathbb{F}_4$, we have $d_1[n,2n-2]_{4,M}=1$.
\end{theorem}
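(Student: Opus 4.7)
The plan is to use a dualization argument to reduce to the previously established bound $d_1[n,2]_{4,M} = n-1$ (Theorems \ref{non-symmetric n-1} and \ref{symmetric n-1}). First note that by Theorem \ref{skew-symmetric k odd}, skew-symmetric dualities admit no one-rank hull codes of even rank, in particular of rank $2n-2$, so it suffices to treat the remaining dualities. For such an $M$, the strategy is to show that every $[n, 2^{2n-2}]$ one-rank hull code $C$ with respect to $M$ must contain a codeword of weight $1$, forcing $d(C) = 1$.

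Given such a $C$, set $D = C^{M^T}$. By Proposition \ref{C^{M^T} is one-rank}, $D$ is a one-rank hull code with respect to $M$ of cardinality $4^n / 2^{2n-2} = 4$, so it has parameters $[n, 2^2, d']$ with $d' \leq n - 1$ by Theorems \ref{non-symmetric n-1} and \ref{symmetric n-1}. Let $\mathcal{G} = \begin{pmatrix} \mathbf{x} \\ \mathbf{y} \end{pmatrix}$ be a generator matrix of $D$. A weight-$1$ vector $(0, \ldots, 0, u, 0, \ldots, 0)$ with $u \in \mathbb{F}_4^*$ in position $i$ lies in $C = D^M$ if and only if $\chi_{x_i}(u) = \chi_{y_i}(u) = 1$. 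Viewing $\mathbb{F}_4 \cong \mathbb{F}_2^2$ and the pair of conditions as two $\mathbb{F}_2$-linear functionals induced by the non-degenerate duality bilinear form, a short rank computation shows that such a nonzero $u$ exists if and only if $\{x_i, y_i\}$ is $\mathbb{F}_2$-linearly dependent.

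Now suppose, for contradiction, that $C$ contains no weight-$1$ codeword; then $\{x_i, y_i\}$ is $\mathbb{F}_2$-linearly independent for every $i$, which forces $x_i$, $y_i$ and $x_i + y_i$ all nonzero in every column. Consequently each of the three nonzero codewords $\mathbf{x}, \mathbf{y}, \mathbf{x} + \mathbf{y}$ of $D$ has weight exactly $n$, giving $d(D) = n$, contradicting $d(D) \leq n - 1$. Hence $C$ must contain a weight-$1$ codeword, so $d(C) = 1$.

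For the matching lower bound, take any $[n, 2^2, n-1]$ one-rank hull code $D$ with respect to $M$ (which exists by Theorems \ref{non-symmetric n-1} and \ref{symmetric n-1}) and set $C = D^{M^T}$; this $C$ has size $2^{2n-2}$ and is a one-rank hull code with respect to $M$ by Proposition \ref{C^{M^T} is one-rank}, and the upper-bound argument yields $d(C) = 1$, establishing $d_1[n, 2n-2]_{4,M} = 1$. The main obstacle is the $\mathbb{F}_2$-linear translation of the weight-$1$ condition in the second paragraph, though it is ultimately a routine $2$-dimensional computation exploiting non-degeneracy of the duality matrix.
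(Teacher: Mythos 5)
Your proof is correct and follows essentially the same route as the paper: pass to the rank-$2$ code $C^{M^T}$, use $d_1[n,2]_{4,M}=n-1$ (Theorems \ref{non-symmetric n-1} and \ref{symmetric n-1}) to find a column where $\{x_i,y_i\}$ is $\mathbb{F}_2$-dependent, and deduce a weight-one codeword of $C$ from non-degeneracy of the duality, which is exactly the paper's appeal to Theorem \ref{chi u(v)=1}. The only quibble is notational: since $C=(C^{M^T})^M$, the weight-one condition should read $\chi_u(x_i)=\chi_u(y_i)=1$ rather than $\chi_{x_i}(u)=\chi_{y_i}(u)=1$, but because the duality matrix is invertible both formulations give kernels of the same dimension, so the rank computation is unaffected.
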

\begin{proof}
	Let $C^{M^T}$ be an $[n,2^2]$ one-rank hull code with generator matrix $\mathcal{G}=\begin{pmatrix}
		x_1&x_2&\cdots&x_n\\
		y_1&y_2&\cdots&y_n
	\end{pmatrix}$. Then, $$C=\{(u_1,u_2,\dots,u_n)\in\mathbb{F}^n_{4}|\prod_{i=1}^{n}\chi_{u_i}(x_i)=1\text{\ and\ }\prod_{i=1}^{n}\chi_{u_i}(y_i)=1\}.$$
	 Furthermore, according to  Theorem \ref{non-symmetric n-1} and \ref{symmetric n-1}, we have  $d(C^{M^T})\leq n-1$. This implies that  for some $i$, either $x_i=0$ or $y_i=0$ or $x_i=y_i$. In each case, by Theorem \ref{chi u(v)=1}, we have a vector $(0,\cdots,0,z_i,0,\cdots,0)\in C$. Hence, $d_1[n,2n-2]_{4,M}=1$.
\end{proof}
During our search in MAGMA \cite{Magma}, we identified improved
parameters for quaternary one-rank hull codes over non-symmetric dualities. This indicates the non-symmetric dualities over $\mathbb{F}_4$ hold significant interest. Here, we provide  the highest possible  minimum distances for quaternary one-rank hull codes under non-symmetric dualities (see Table \ref{table 1}). Over $\mathbb{F}_4$, there are two non-symmetric dualities, $N_1$ and $N_2$, such that $N_2^T=N_1$.

\begin{center}
\begin{tabular}{|c|c|c|c|c|}
	\hline
	$ N_1 $& 0 &1  &$ \omega $  &$ \omega+1 $  \\
	\hline
	0&1  & 1 & 1 & 1 \\
	\hline
	1& 1 & -1 & -1 & 1 \\
	\hline
	$ \omega$&  1 & 1 &-1  &  -1 \\
	\hline
	$ \omega+1 $& 1 &  -1&  1& -1 \\
	\hline
\end{tabular} \hspace{0.5cm}
\begin{tabular}{|c|c|c|c|c|}
	\hline
	$ N_2 $& 0 &1  &$ \omega  $ &$ \omega+1 $  \\
	\hline
	0& 1 & 1 &  1& 1 \\
	\hline
	1&1  & -1 & 1 & -1 \\
	\hline
	$ \omega $& 1 & -1&-1  & 1 \\
	\hline
	$\omega+1 $& 1 & 1 & -1 & -1 \\
	\hline
\end{tabular}
\end{center}

According to Lemma \ref{C is one-rank wrt M^T}, if $C$ is a one-rank hull code under $N_1$, then so does $N_2$. Hence, for both duality, we have the same table. Although MAGMA includes  built-in functions for determining the dual space of additive codes, it lacks a built-in function for the dualities that we are using. As a result, we have developed an algorithm over $\mathbb{F}_4$ to determine if a given additive code is a one-rank hull code. Our search was carried out in the MAGMA software package \cite{Magma}. Generator matrices for one-rank hull codes for a non-symmetric  duality can be found online at \href{https://drive.google.com/file/d/132KW2UrlpdSEfr2lUSfqswEAwvSU0QMq/view?usp=sharing}{https://drive.google.com/file/d/132KW2UrlpdSEfr2lUSfqswEAwvSU0QMq/view?usp=sharing}.\\

\vspace{0.1cm}
 The one-rank hull codes presented here are either optimal or near to optimal according to \cite{small_additive}. An $[n,p^k,d]$ code is said to be an optimal one-rank hull code if $d=d_1[n,k]$ for a given $n$ and $k$. From Theorems \ref{F4_1}, \ref{ne-1}, \ref{non-symmetric n-1}, and \ref{2n-2}, we have   $d_{1}[n, 1]=\begin{cases}
 	n & \ \text{if $n$ is even}\\
 	n-1 &\  \text{if $n$ is odd}
 \end{cases}$, $d_1[n,2]=n-1$, $d_1[n,2n-1]=1$, and $d_1[n,2n-2]=1$ for non-symmetric dualities over $\mathbb{F}_4$.  In the Table \ref{table 1}, optimal codes are highlighted in bold. Also, we identify some optimal quaternary one-rank hull codes under non-symmetric dualities which were not optimal under all symmetric dualities, see Table 2,3,4,5 in \cite{onerank}. For example,  $[4,2^4,3]$, $[5,2^4,4]$, $[8,2^4,6]$, $[9,2^4,7]$, $[10,2^4,8]$, $[10,2^5,7]$, $[6,2^6,3]$, $[8,2^6,5]$, $[8,2^8,4]$, $[10,2^9,5]$ and $[10,2^{14},3]$. In the table, an asterisk ($^*$) represents these codes.
\begin{table}[h]                                                                                                                                                                                                                                                                                                                                                                                                                                                                                                                                       In the table, an asterisk (*) represents these codes.
\caption{Highest minimum distance  for quaternary one-rank hull codes with respect to non-symmetric dualities. The $*$ signifies the improvement over the prior works. }
		\begin{tabular}{c||ccccccccccccccccccc}
			\hline
			n/k&1&2&3&4&5&6&7&8&9&10&11&12&13&14&15&16&17&18&19\\
		\hline
			1&-&-\\
			
			2&\textbf{2}&\textbf{1}&\textbf{1}&-\\
		
			3&\textbf{2}&\textbf{2}&\textbf{2}&\textbf{1}&\textbf{1}&-\\
			
			4&\textbf{4}&\textbf{3}&\textbf{3}&\textbf{3$^*$}&\textbf{2}&\textbf{1}&\textbf{1}&-\\
		
			5&\textbf{4}&\textbf{4}&\textbf{4}&\textbf{4$^*$}&3&\textbf{3$^*$}&\textbf{2}&\textbf{1}&\textbf{1}&-\\
			
			6&\textbf{6}&\textbf{5}&\textbf{5}&\textbf{4}&\textbf{4}&3&\textbf{3}&\textbf{2}&\textbf{2}&\textbf{1}&\textbf{1}&-\\
		
			7&\textbf{6}&\textbf{6}&\textbf{6}&\textbf{5}&4&\textbf{4}&3&\textbf{3}&\textbf{3}&\textbf{2}&\textbf{2}&\textbf{1}&\textbf{1}&-\\
			
			8&\textbf{8}&\textbf{7}&\textbf{6}&\textbf{6$^*$}&5&\textbf{5$^*$}&\textbf{4}&\textbf{4$^*$}&\textbf{3}&\textbf{3}&2&\textbf{2}&\textbf{2}&\textbf{1}&\textbf{1}&-\\
		
			9&\textbf{8}&\textbf{8}&\textbf{7}&\textbf{7$^*$}&\textbf{6}&5&\textbf{5}&4&\textbf{4}&3&\textbf{3}&\textbf{3}&2&\textbf{2}&\textbf{2}&\textbf{1}&\textbf{1}&-\\
		
			10&\textbf{10}&\textbf{9}&\textbf{8}&\textbf{8$^*$}&\textbf{7$^*$}&\textbf{6}&5&5&\textbf{5$^*$}&4&\textbf{4}&3&\textbf{3}&\textbf{3$^*$}&\textbf{2}&\textbf{2}&\textbf{2}&\textbf{1}&\textbf{1}\\
			\hline
	\end{tabular}
	\label{table 1}
	\end{table}
\section{Conclusion}
In this article, we have discussed additive  one-rank hull codes with respect to arbitrary dualities over finite fields. The main contributions of this article are the following:
\begin{enumerate}
	\item A novel approach has been given to find the one-rank hull codes by establishing a  connection between self-orthogonal elements and solutions of quadratic forms.
	\item  Precise count of self-orthogonal elements with respect to arbitrary dualities over finite fields of odd characteristics has been provided.
	\item Construction methods have been given for small rank hull codes.
	\item The value of  the highest possible minimum distances $d_1[n,k]_{p^e,M}$ for $k=1,2$ and $n\geq 2$ has been determined.
	\item Optimal quaternary one-rank hull codes $[4,2^4,3]$, $[5,2^4,4]$, $[8,2^4,6]$, $[9,2^4,7]$, $[10,2^4,8]$, $[10,2^5,7]$, $[6,2^6,3]$, $[8,2^6,5]$, $[8,2^8,4]$, $[10,2^9,5]$, and $[10,2^{14},3]$ over non-symmetric dualities are identified (see Table \ref{table 1}), which improve the minimal distance of the quaternary one-rank hull codes over all symmetric dualities see Table 2,3,4,5 in \cite{onerank}.
	
\end{enumerate}  
  As a future  work, a precise count of self-orthogonal elements over fields of even characteristics can be determined. It is known that determining the minimum distance of the codes is very important to estimate the error correcting capability.  Hence, the valuable problem could be, to find out a good upper bound and lower bound for $d_1[n,k]_{p^e,M}$ over arbitrary dualities. 
  \section*{Acknowledgments}
  The first author is supported by UGC, New Delhi, Govt. of India under grant DEC18-417932.
  The second author is ConsenSys Blockchain chair professor. He thanks ConsenSys AG for that  privilege.
  \section*{Statements and Declarations}
  \textbf{Author’s Contribution}: All authors contributed equally to this work.\\
  \textbf{Conflict of interest:} The authors have no conflicts of interest in this paper.
 
	\bibliographystyle{abbrv}
\bibliography{references}

	\end{document}